\documentclass[aps,a4paper,twocolumn,superscriptaddress,floatfix,nofootinbib]{revtex4-2}

\usepackage[T1]{fontenc}
\usepackage[utf8]{inputenc}
\usepackage{graphicx}
\usepackage[plainpages=false,pdfpagelabels,hypertexnames=false]{hyperref}
\usepackage[usenames,dvipsnames,table]{xcolor}
\usepackage{braket}
\usepackage{amsmath,amssymb}
\usepackage{amsthm}
\usepackage{bm}
\usepackage{mathtools}
\usepackage{bbm}
\usepackage{tikz}
\usetikzlibrary{arrows,calc,shapes.geometric,decorations.pathmorphing,patterns}
\usepackage{diagbox}
\usepackage{multirow}
\usepackage{booktabs}

\newcommand{\id}{\mathbbm{1}}
\newcommand{\Tr}{\mathrm{Tr}}
\newcommand{\Pg}{P_{\mathrm{g}}}
\newcommand{\Pgn}{P_{\mathrm{g},n}}
\newcommand{\Imin}{I_{\mathrm{min}}}
\newcommand{\cset}{\mathcal{C}}
\newcommand{\qset}{\mathcal{Q}}

\newcommand{\pkey}{p_{\mathrm{k}}}
\newcommand{\Gmin}{G_{\mathrm{cl}}}
\newcommand{\vcl}{v_{\mathrm{cl}}}
\newcommand{\prot}[2]{(#1,#2)}

\newtheorem{proposition}{Proposition}
\newtheorem{lemma}[proposition]{Lemma}

\theoremstyle{definition}

\newtheorem{remark}[proposition]{Remark}

\begin{document}

\title{Prepare-and-measure quantum key distribution from a bound on information leakage}

\author{Abhishek Mishra}
\author{Michele Masini}
\author{Maria Balanz\'o-Juand\'o}
\affiliation{Laboratoire d'Information Quantique, CP 225, Universit\'e libre de Bruxelles (ULB), Belgium}

\author{Armin Tavakoli}
\affiliation{Physics Department and NanoLund, Lund University, Box 118, 22100 Lund, Sweden}

\author{Stefano Pironio}
\affiliation{Laboratoire d'Information Quantique, CP 225, Universit\'e libre de Bruxelles (ULB), Belgium}

\date{October 1, 2026}

\begin{abstract}
Every quantum key distribution (QKD) protocol must limit the information available outside the users' laboratories. In device-independent QKD this is essentially the only assumption made on the devices. In a prepare-and-measure setting it cannot be imposed strictly, since the sender's device must emit a signal that depends on her private input. The natural replacement, and a minimal one, is a leakage bound: a bound on how much information about the input leaves the sender's laboratory, through the intended signal or through any side channel. We ask how far this unavoidable requirement can replace a detailed model of the source. We show that this single bound, quantified by the probability of guessing the input from everything the source emits, suffices to prove security against collective attacks. In our analysis, nothing is assumed about the precise form of the emitted states, their dimension, energy or purity, nor about the receiver's measurement device, beyond the standard requirement that it does not leak Bob's data. The standard B92 and BB84 protocols do not certify any key under this minimal leakage bound assumption. But closely related protocols can and, in the ideal limit, achieve one secret bit per key round. For noisy implementations, we compute key rates using semidefinite programming relaxations. Our best protocols tolerate visibilities down to $0.84$, corresponding to an $8\%$ quantum bit error rate, compared with the $11\%$ threshold of standard device-dependent BB84.
\end{abstract}

\maketitle

\section{Introduction}

Quantum key distribution (QKD)~\cite{BB84,Ekert91} aims to establish secrecy from physical principles, without assumptions about the computational power of an eavesdropper. Its security proofs, however, also rely on explicit or implicit assumptions about the devices, and these may fail in practice. Imperfect sources and detectors have opened loopholes that an adversary can exploit~\cite{PhysRevLett.85.1330,Lydersen2010,Xu2020,Zapatero2025}. This is known as the \emph{implementation problem} of QKD.

Two complementary strategies address this problem. The first characterises the imperfections of the devices and incorporates the resulting models or quantitative bounds into the security proof~\cite{gottesman2004securityquantumkeydistribution,KoashiPreskill2003,LoPreskill2007,Maroy2010,Tamaki2014,Lucamarini2015,Pereira2019,Pereira2020,Zhang2021,Nahar2023,CurrasLorenzo2025}. The second, \emph{device independence} (DI), dispenses with a detailed model of the quantum devices and certifies security from the violation of a Bell inequality~\cite{PhysRevLett.98.230501,Pironio2009,PhysRevLett.113.140501,ArnonFriedman2018,Zapatero2023}. In between, semi-device-independent (SDI) approaches retain limited assumptions on the devices, such as bounds on dimension, energy or state overlaps, while leaving other device properties uncharacterised~\cite{onewaysdiqkd,Vanhimbeeck,Ioannou_2022}.

In any approach, some restriction on the information that escapes the users' laboratories is nevertheless unavoidable, since no protocol can protect a key that a device simply copies to an adversary. This remains true of DI QKD, where Bell correlations replace a detailed description of the devices, but do not remove the need to control unwanted communication and to protect the classical data (inputs, outputs, raw key) of Alice and Bob~\cite{PhysRevLett.98.230501,Pironio2009,BarrettColbeckKent2013,CurtyLo2019}. In the entanglement-based DI setting, this requirement can in principle be imposed in its strict form, because once the entangled systems have been distributed nothing needs to leave the laboratories\footnote{Relaxed versions, in which the devices may leak a bounded amount of information, have also been analysed~\cite{Tan2023,ZapateroCurty2026}.}.

In a prepare-and-measure (P\&M) protocol, the strict form is untenable. Alice's device must send out a physical system that depends on her private input $x$, since this is precisely how the key is communicated. A P\&M source is thus leaky by design, and the relevant requirement is not the absence of leakage but a \emph{bound} on it: the information about $x$ carried by everything that leaves Alice's laboratory, the intended signal and any unintended side channel alike, must be limited. Some bound of this kind is unavoidable. If the emitted systems could reveal $x$ completely, for instance because a malicious source attaches a classical copy of $x$ to the signal, Eve could learn the key without altering anything that Bob observes.

This suggests a direct question for P\&M QKD. Can a quantitative bound on the information leaving the source be a sufficient assumption for security, without a detailed model of the preparations and of the measurements? In this work we show that it can. Following Refs.~\cite{Tavakoli2020,Tavakoli2022}, we quantify the leakage operationally by the optimal probability $\Pg$ of guessing Alice's input $x$, drawn uniformly among $n$ possible values, from everything the source emits. We then assume that $\Pg\leq G/n$ for a constant $G$ fixed in advance. A blind guess succeeds with probability $1/n$, so $G$ is the factor by which Eve's best guess beats it, and $G=2$ corresponds to one bit of information about $x$. We refer to this assumption as the \emph{leakage bound} (LB). Constraints of this form were introduced in Refs.~\cite{Tavakoli2020,Tavakoli2022} under the name of information restriction, and they are equivalent to a bound on the one-shot accessible information of the emitted ensemble.

The LB characterises the source by a single number. It is a restriction on what the source reveals about $x$, not on the dimension, energy, purity or precise form of the emitted states, which may be mixed and possibly even correlated with Eve. We also make no assumption about Bob's measurement device, which can be fully uncharacterised. Our assumption is thus minimal in the sense that it is the quantitative version of a requirement that every QKD protocol has to meet anyway.

As in DI QKD, security under a leakage bound requires non-classical correlations. If the observed statistics can be reproduced by classical communication carrying the same amount of information about $x$, Eve can copy the classical message and learn everything Bob learns. This is the P\&M counterpart of the fact that local correlations certify no DI key~\cite{acin2006from}. We show that this requirement rules out two families of standard protocols. The first consists of all protocols with two preparations, such as B92~\cite{B92}. The second consists of all protocols in which each preparation gives a deterministic outcome for one of Bob's measurements and uniformly random outcomes for the others, including BB84~\cite{BB84}, the six-state protocol and the three-state protocol of Ref.~\cite{Tamaki2014}.

Security can be restored by rotating Bob's measurement bases relative to Alice's preparation bases, producing non-classical statistics under the leakage bound. We study protocols with three, four or six preparations, including a P\&M analogue of the CHSH-based DI protocol~\cite{PhysRevLett.98.230501}. These protocols certify one secret bit per key round in the noiseless limit, and we analyse their tolerance to noise and loss.

Computing the key rate requires a lower bound on the conditional von Neumann entropy $H(A|E)$, minimised over all states and measurements of unbounded dimension compatible with the observed data and with the leakage bound. We obtain it by combining the variational bound of Koßmann and Schwonnek on the relative entropy~\cite{koßmann2025reliableentropyestimationobserved,kossmann2024_optimization,Frenkel2023}, adapted to the P\&M setting, with the semidefinite relaxation of correlations under a leakage bound introduced in Ref.~\cite{Tavakoli2022}. We point out that the latter can be simplified without weakening it. Only the moments that are linear in the emitted states are needed, which reduces the size of the programs considerably and allows us to reach the relaxation levels used here.

The paper is organised as follows. Section~\ref{sec:setting} defines the setting, states the leakage bound, describes the structure of the associated QKD protocols and the key-rate expression. Section~\ref{sec:protocols} introduces the qubit protocols we study. It first shows that non-classical correlations are necessary for key, so that B92, BB84 and the six-state protocols do not certify any. It then presents the protocols with tilted measurements and the inequalities that witness their non-classicality, and proves that their ideal implementations certify one bit of key per key round. Section~\ref{sec:imperfect} turns to imperfect implementations and discusses how the leakage bound is chosen for a real source. Section~\ref{sec:framework} develops the entropy bound and its relaxation, and Sec.~\ref{sec:results} reports the numerical key rates. We conclude in Sec.~\ref{sec:conclusion}.

\section{Setting and assumptions}
\label{sec:setting}

\subsection{Prepare-and-measure QKD}
\label{sec:pm}

We consider a prepare-and-measure setting in which Alice holds a source, Bob holds a measurement device, and Eve stands in the quantum channel between them (Fig.~\ref{fig:pnmqkd}). Alice draws an input $x\in\mathcal X=\{1,\dots,n\}$ and feeds it to her source, which emits a quantum system $Q$. In addition, the source may initially be correlated with a system $E_0$ held by Eve outside Alice's laboratory: for instance a classical variable describing the internal state of the device, or a quantum system entangled with it. Conditioned on $x$, the state of $QE_0$ is $\rho_x^{QE_0}$.

\begin{figure}[tb]
    \centering
    \scalebox{0.8}{\begin{tikzpicture}[>=stealth, font=\small]
    \draw[dashed, rounded corners, fill=blue!4] (-0.9,-1.0) rectangle (1.7,3.2);
    \node[anchor=north west] at (-0.85,3.15) {Alice};
    \draw[dashed, rounded corners, fill=red!5] (2.3,-1.0) rectangle (6.6,3.7);
    \node[anchor=north west] at (2.35,3.65) {Eve};
    \draw[dashed, rounded corners, fill=blue!4] (7.3,-1.0) rectangle (10.2,3.2);
    \node[anchor=north east] at (10.15,3.15) {Bob};
    \draw[thick, fill=white] (-0.4,-0.1) rectangle (1.3,1.5);
    \node at (0.45,0.7) {source};
    \draw[->, thick] (0.45,2.55) node[above] {$x$} -- (0.45,1.5);
    \draw[thick, fill=white] (4.0,-0.5) rectangle (5.8,1.9);
    \node at (4.9,0.7) {$\mathcal U$};
    \draw[->, thick] (1.3,1.1) -- (4.0,1.1) node[pos=0.55, above] {$Q$};
    \node[draw, thick, circle, fill=white, minimum size=0.8cm, inner sep=0pt] (E0) at (3.05,0.25) {$E_0$};
    \draw[->, thick] (E0.east) -- (4.0,0.25);
    \draw[thick, black!80, dash pattern=on 3.5pt off 1.5pt, decorate, decoration={snake, amplitude=1mm, segment length=3.6mm, pre length=1pt, post length=1pt}] (1.3,0.25) -- (E0.west);
    \draw[->, thick] (4.9,1.9) -- (4.9,2.75) node[midway, right] {$E$};
    \draw[thick, fill=white] (4.3,2.75) rectangle (5.5,3.3);
    \node at (4.9,3.025) {\footnotesize memory};
    \draw[->, thick] (5.8,1.1) -- (7.8,1.1) node[pos=0.58, above] {$B$};
    \draw[thick, fill=black, rounded corners] (7.8,0.3) rectangle (9.7,1.9);
    \node at (8.75,1.1) {\color{white}$M_{b|y}$};
    \draw[->, thick] (8.75,2.55) node[above] {$y$} -- (8.75,1.9);
    \draw[->, thick] (8.75,0.3) -- (8.75,-0.6) node[below] {$b$};
\end{tikzpicture}}
    \caption{Prepare-and-measure QKD. Alice's source receives her input $x$ and emits the system $Q$. Eve applies an isometry $\mathcal U$ to $Q$ and a system $E_0$, which may be initially classically correlated or entangled with Alice's source (wavy dashed line). She forwards $B$ to Bob and keeps $E$ in a quantum memory. Bob's device measures $B$ according to his input $y$ and returns the outcome $b$. }
    \label{fig:pnmqkd}
\end{figure}

Eve controls the channel. In every round she applies an isometry $\mathcal U:QE_0\to BE$, forwards the system $B$ to Bob and stores $E$ in a quantum memory. The states after her intervention are
\begin{equation}
    \label{eq:postattack}
    \rho_x^{BE} = \mathcal U\,\rho_x^{QE_0}\,\mathcal U^\dagger .
\end{equation}
Bob draws an input $y\in\mathcal Y=\{1,\dots,m\}$, measures $B$ and records an outcome $b$. His measurement for input $y$ is described by a POVM $\{M_{b|y}\}_b$. The observed correlations are
\begin{equation}
    \label{eq:behaviour}
    p(b|x,y) = \Tr\bigl(\rho_x^{B} M_{b|y}\bigr), \qquad \rho_x^B=\Tr_E\,\rho_x^{BE} .
\end{equation}
We refer to the table $p(b|x,y)$ as the \emph{behaviour}. 

\subsection{The leakage bound}
\label{sec:ri}

Following Refs.~\cite{PhysRevLett.115.110501,Tavakoli2020,Tavakoli2022}, we quantify how much information a set of states $\{\rho_x\}$ carries about the label $x$ by the optimal probability of guessing $x$ from the state when $x$ is drawn with probabilities $q_x$,
\begin{equation}
    \label{eq:pg}
    \Pg\bigl(\{q_x,\rho_x\}\bigr) =
    \max_{\{N_x\}} \sum_x q_x \Tr(\rho_x N_x),
\end{equation}
the maximum being over all POVMs, $N_x\geq 0$ and $\sum_x N_x=\id$. The guessing probability ranges from $\max_x q_x$, attained when all states are identical and carry no information about $x$, to $1$, attained when they are perfectly distinguishable. We take the reference prior uniform, $q_x=1/n$, which is a natural choice for the symmetric ensembles used below and the most constraining one~\cite{tavakoli2021correlations}, and we rescale the guessing probability to
\begin{equation}
    \label{eq:pgn}
    \Pgn\bigl(\{\rho_x\}\bigr)=n\,\Pg\bigl(\{\tfrac1n,\rho_x\}\bigr)=\max_{\{N_x\}}\sum_x\Tr(\rho_xN_x),
\end{equation}
the expected number of inputs correctly guessed when each input is sent once. It ranges from $1$, when the states are identical, to $n$, when they are perfectly distinguishable. The \emph{leakage bound} (LB) is the assumption that the source states $\rho_x^{QE_0}$ satisfy
\begin{equation}
    \label{eq:ri}
    \Pgn\bigl(\{\rho_x^{QE_0}\}\bigr) \leq G ,
\end{equation}
where the value $G$ is fixed in advance and is part of the assumption. Equivalently, Eq.~\eqref{eq:ri} bounds the information that $QE_0$ carries about Alice's input, through the following one-shot version of the accessible information:
\begin{align}
    \Imin(X{:}QE_0) &= H_{\min}(X)-H_{\min}(X|QE_0) \nonumber\\
    &= \log_2\Pgn \leq \log_2 G .
    \label{eq:imin}
\end{align}
Thus $n$ states carry $\log_2 n$ bits of information about $x$ when they are perfectly distinguishable and $0$ bits when they are identical, and the bound $G=2$, which is the honest value for all the protocols of this work, corresponds to one bit of information about $x$ whatever $n$. Constraints of this form were introduced in Refs.~\cite{Tavakoli2020,Tavakoli2022}, where they are called information restrictions. 

 In the security analysis below, except for the leakage bound~\eqref{eq:ri}, we make no assumptions about the states $\rho_x^{QE_0}$ characterising Alice's source and the measurements $\{M_{b|y}\}$ characterising Bob's device, which may differ from the honest implementation. 

We make several remarks about this assumption.

\emph{The uniform prior is part of the assumption.} The bound~\eqref{eq:ri} is a characterisation of the source as a physical device. It says how much information the device would leak about its input if the inputs were drawn uniformly and the emitted systems were measured with an optimal POVM $\{N_x\}$, in the same way as a capacity characterises a channel independently of the messages actually sent through it. It is not a statement about the information leaked in an actual run of the protocol, where the inputs may be drawn with probabilities $p_x$ that differ from the uniform prior used to formulate the LB assumption. A non-uniform reference prior $q_x$ could be used instead, in which case $\Pg/\max_xq_x$ plays the role of $\Pgn$ in Eq.~\eqref{eq:imin}.

\emph{The states are not assumed pure.} Since no restriction is placed on Bob's POVMs, they may be taken projective without loss of generality by
Naimark dilation~\cite{Holevo1973,Nielsen_Chuang_2010}, since the dilation appends an
ancilla in a fixed, $x$-independent state, which changes neither
Eq.~\eqref{eq:behaviour} nor Eq.~\eqref{eq:pg}. 

It is not, however, generically possible to purify the source states $\rho_x^{QE_0}$ without violating the leakage bound, as the purification may depend on $x$ and increase the guessing probability~\cite{Tavakoli2022}. For instance, take two states $\rho_0=\ket0\bra0$ and $\rho_1=\id/2$. Then $P_g=3/4$. But any purifications $\ket{\psi_0}$ and $\ket{\psi_1}$, such as $\ket{\psi_0}=\ket{00}$ and $\ket{\psi_1}=(\ket{01}+\ket{10})/\sqrt2$, have $P_g\geq\tfrac12(1+1/\sqrt2)\approx0.85$.  Mixed preparations must thus be retained as genuine variables of the problem, and the techniques of Refs.~\cite{Wang2019,Ioannou_2022}, which rely on purity, do not apply here.

\emph{From the emitted states to the post-attack states.} The leakage bound is a property of the source, and Eq.~\eqref{eq:ri} is accordingly imposed on the emitted states $\rho_x^{QE_0}$. The behaviour $p(b|x,y)$, however, depends on the states $\rho_x^{BE}$ after Eve's intervention. The whole problem can nevertheless be formulated in terms of the post-attack states $\rho_x^{BE}$ only, since an LB on the emitted states implies the same LB on the post-attack states.
Indeed, Eve's isometry $\mathcal U$ is independent of $x$, and minimum-error discrimination cannot be improved by a common isometry. Hence
\begin{equation}
    \label{eq:ri_be}
    \Pgn\bigl(\{\rho_x^{BE}\}\bigr) \leq \Pgn\bigl(\{\rho_x^{QE_0}\}\bigr) \leq G ,
\end{equation}
and the same holds for any $x$-independent channel, by data processing. From now on, we thus drop the superscript and write $\rho_x$ for the joint states $\rho^{BE}_x$ of $B$ and $E$; it is this object that appears in the optimisation problems of Sec.~\ref{sec:framework}, and Bob's marginal is $\Tr_E(\rho_x)$. We reintroduce the superscripts only when we need to distinguish the states before or after Eve's intervention. 

\emph{Semidefinite form.} The guessing probability~\eqref{eq:pgn} is the optimal value of a semidefinite
program, whose dual is the minimum-error discrimination problem in its
Holevo--Yuen--Kennedy--Lax form~\cite{Holevo1973,YuenKennedyLax1975,Watrous2018},
\begin{equation}
    \label{eq:duality}
    \Pgn = \min_\sigma \bigl\{ \Tr\sigma \;:\;
          \sigma \geq \rho_x \ \ \forall x \bigr\} .
\end{equation}
Weak duality is immediate: for any feasible
$\sigma$ and any POVM $\{N_x\}$,
$\sum_x \Tr(\rho_x N_x) \leq \sum_x \Tr(\sigma N_x) = \Tr\sigma$.
Strong duality holds by Slater's condition, since the primal is strictly
feasible: the POVM $N_x=\id/n$ lies in the interior of the feasible set~\cite{Watrous2018}. Consequently
the assumption $\Pgn\leq G$ is equivalent to the existence of an auxiliary
operator $\sigma$ with
\begin{equation}
    \label{eq:sigma}
    \sigma \geq \rho_x \quad \forall x, \qquad \Tr\sigma \leq G .
\end{equation}
This reformulation replaces the maximisation in Eq.~\eqref{eq:pgn} by operator inequalities and a linear
trace condition, which can be conveniently handled by the moment relaxation used in Sec.~\ref{sec:framework}.

\emph{Qudit sources.}
Suppose that the $n$ states $\rho_x$ are supported on a common $d$-dimensional subspace with projector $\Pi$ (and $d\leq n$). Then $\sigma=\Pi$ satisfies Eq.~\eqref{eq:sigma}, since $\rho_x\leq\Pi$, and $\Tr\sigma=d$. Hence
\begin{equation}
    \label{eq:pg_dim}
    \Pgn\leq d, \qquad \Imin(X{:}QE_0)\leq \log_2 d\text{ bits}.
\end{equation}
This is the situation, e.g., of a qudit source that is not entangled with Eve: $\rho_{x}^{QE_0}=\rho_x^Q\otimes\rho^{E_0}$, with $Q$ a $d$-dimensional system. Thus a dimension bound is a special case of a leakage bound~\cite{Tavakoli2022,Pauwels2025}. The bound holds whatever the states within the qudit space are, pure or mixed, and whether or not they are the intended ones. As we discuss in Sec.~\ref{sec:imperfect_corr}, it also holds if the source is classically correlated with Eve, even if the correlations change the qudit subspace itself.

It does not hold, however, if the source is entangled with Eve. For instance, if the emitted qudit $Q$ is maximally entangled with a qudit $E_0$ held by Eve, the total space  $QE_0$ is of dimension $d^2$ and $I_{\min}(X{:}QE_0)$ can reach $2\log_2 d$ bits (this is just dense coding). This emphasises that the LB is a property of the source, not of the intended signal alone. The bound should include any external system $E_0$ that is correlated with the source. Similarly, $Q$ should include everything that leaves Alice's laboratory: the intended signal, which ideally reaches Bob, but also any unintended side channel, such as light in other optical modes, the light reflected back into the channel in a Trojan-horse attack, or electromagnetic radiation. The LB captures the total information about $x$ that leaves Alice's laboratory. Determining the value of $G$ thus relies on some informed physical model of the source and represents an assumption that is as good as the model. But our security analysis does not rely on the model itself, only on the resulting bound. For the moment, we suppose some value of $G$ is fixed;  we discuss how it can be determined in practice in Sec.~\ref{sec:imperfect}.

\subsection{QKD Protocols}
\label{sec:generic}

The protocols we study share the following structure, which generalises that of BB84. Alice's inputs are of two kinds. The key-generating inputs $x\in\{1,\ldots,2\nu\}$ form a subset of all $n$ inputs. We reinterpret them as pairs $x=(\theta,a)$ where $\theta=1,\dots,\nu$ are \emph{key bases} and $a=0,1$ are the \emph{raw key symbols}. In the honest implementation, the source emits two orthogonal qubit states $\ket{\psi(\theta,0)}$ and $\ket{\psi(\theta,1)}$ for each key basis $\theta$. The remaining inputs, if any, serve only to test the source. Bob's inputs likewise consist of key-generating measurements $y\in\{1,\dots,\nu\}$ and, possibly, additional test measurements. In the honest implementation, Bob's measurement $y=\theta$ is the projective measurement onto the basis $\{\ket{\psi(\theta,0)},\ket{\psi(\theta,1)}\}$, so that his outcome reproduces $a$. In BB84, for instance, $\nu=2$, the key basis for $\theta=1$ is $\{\ket 0,\ket 1\}$, the key basis for $\theta=2$ is $\{\ket +,\ket -\}$, measured with $Z$ for $y=1$ and $X$ for $y=2$, and there are no additional test inputs or measurements.

 The protocol runs as follows.
\begin{enumerate}
    \item \emph{Quantum phase.} In each of $N$ rounds, Alice draws $x$ with probability $p_x$ and Bob draws $y$ with probability $p_y$, independently. Alice's source emits $Q$; Bob measures the system he receives and records $b$.
    \item \emph{Sifting.} Over an authenticated public channel, Bob announces his inputs, and Alice announces for every round whether her input was a key or a test input and, for a key input, the basis $\theta$ but not the symbol $a$. The rounds in which Alice used the key basis $\theta$ and Bob the matching measurement $y=\theta$ are the \emph{key rounds}; Alice's raw key symbols in these rounds form her raw key, and Bob's outcomes form his. All other rounds, together with a random subset of the key rounds, are \emph{test rounds}.
    \item \emph{Parameter estimation.} In the test rounds, Alice and Bob announce their inputs and outcomes and perform a statistical test. They abort if the test fails and continue otherwise. Intuitively, the test is designed to verify that their device behaviour does not deviate too much from the expected one $p(b|x,y)$.
    \item \emph{Post-processing.} Alice and Bob perform information reconciliation, so that they end up with identical keys,     
    followed by privacy amplification, which yields the final key~\cite{wolf2021quantum}.
\end{enumerate}

\emph{Assumptions.} Beyond the validity of quantum theory and the leakage bound, we make the following assumptions common to all QKD protocols, including DI ones~\cite{PhysRevLett.98.230501,Pironio2009,BarrettColbeckKent2013,CurtyLo2019}: (i) Alice's and Bob's inputs are drawn from trusted randomness and are independent of the devices and of Eve; (ii) the classical post-processing is trusted, and all public messages are authenticated; (iii) no private data -- the inputs $x$ and $y$, the outcomes $b$, the data of the classical post-processing -- leave the laboratories other than through the public channel, and no signal travels from Bob's device back to Alice's source. Assumption (iii) may seem to sit uneasily with a bound on leakage. The two are compatible because they concern different parts of Alice's laboratory. The leakage bound concerns the quantum source, a partly untrusted device whose emissions $Q$ are precisely what Eve has access to, and it covers every physical system that this device sends out. The rest of Alice's laboratory, where $x$ is drawn, stored and processed, must be secure, exactly as in DI QKD (a source that leaks a bounded amount of information about $x$ is acceptable, but a random-number generator or a computer that leaks $x$ is not).

\subsection{Key rate}
\label{sec:keyrate}

We take the key-generating inputs to be equiprobable, so that in the key rounds each basis is used with probability $1/\nu$ and each of its two inputs with probability $1/2$. After sifting, Eve knows which basis was used in each key round, so the
classical--quantum state of Alice's raw key and Eve's information in a key round is
\begin{equation}
    \label{eq:rho_AE}
    \rho_{AE\Theta} = \frac{1}{2\nu}\sum_{\theta=1}^{\nu}\sum_{a=0}^{1}
    \ket{a}\bra{a}_A \otimes \Tr_B\bigl(\rho^{BE}_{(\theta,a)}\bigr)\otimes\ket{\theta}\bra{\theta}_{\Theta} ,
\end{equation}
where the classical register $\Theta$ holding the basis label belongs to Eve.

For one-way reconciliation from Alice to Bob, the
asymptotic key rate per key round against collective attacks is
bounded from below by the Devetak--Winter formula~\cite{DevetakWinter2005},
\begin{equation}
    \label{eq:dw}
    K_\infty \;\geq\; H(A|E\Theta)_{\rho_{AE\Theta}} - H(A|B) ,
\end{equation}
where $H(A|E\Theta)$ is the conditional von Neumann entropy of Alice's raw key given
Eve's quantum memory $E$ and the basis register $\Theta$, and $H(A|B)$ is the conditional Shannon entropy of Alice's raw
key given Bob's outcome in the key rounds. The rate per emitted signal is
\begin{equation}
    \label{eq:rate_per_signal}
    R = \pkey \, K_\infty ,
\end{equation}
where $\pkey$ is the probability that a round is a key round. If Alice and Bob bias their inputs towards the subset of key-generating ones, $\pkey$ can be made arbitrarily close to $1/\nu$, since given a key basis $\theta$, the probability that Bob chooses a matching measurement $y=\theta$ is $1/\nu$. 

The reconciliation term $H(A|B)$ is a function of the observed behaviour
alone. The security content of Eq.~\eqref{eq:dw} is therefore concentrated in
$H(A|E\Theta)$, which must be minimised over every state and measurement compatible
with the observed behaviour and with the LB assumption~\eqref{eq:ri_be}. That
minimisation is the subject of Sec.~\ref{sec:framework}. 

\section{Security and non-classicality}
\label{sec:protocols}

Many QKD security proofs assume from the outset that the devices behave quantumly. Device-dependent analyses of BB84, for instance, assume that the source prepares the non-orthogonal states $\ket0$, $\ket1$, $\ket+$ and $\ket-$ and that Bob performs incompatible measurements. Fidelity-based analyses such as those of Refs.~\cite{Pereira2019,Pereira2020,CurrasLorenzo2025,CurrasLorenzo2025num,Pereira2025} assume that the actual states are close to such quantum reference states, and thus also assume quantum behaviour. The leakage bound is not an assumption of this kind. It restricts how much the source reveals about the input $x$, not the nature of the systems that carry it, and purely classical devices can satisfy it -- just as they can satisfy the requirement, in DI QKD, that the devices do not communicate. In DI QKD, genuinely quantum properties are therefore not assumed from the start but inferred from sufficiently strong correlations in the observed statistics. The same holds in our analysis.

A behaviour $p(b|x,y)$ is \emph{classical}, given the bound $G$, if it can be produced by an encoding $p(z|x)$ of the input into a classical message $z$ and responses $p(b|z,y)$ of Bob's device to that message,
\begin{equation}
    \label{eq:classical_model}
    p(b|x,y) = \sum_z p(z|x)\,p(b|z,y),
\end{equation}
with an information content respecting the bound,
\begin{equation}
    \label{eq:classical_pg}
    \sum_z \max_x\, p(z|x) \;\leq\; G .
\end{equation}
We denote by $\cset_G$ the set of these classical behaviours. The left-hand side of Eq.~\eqref{eq:classical_pg} is exactly
Eq.~\eqref{eq:pg} evaluated on the classical ensemble
$\rho_x=\sum_z p(z|x)\ket z\bra z$, whose optimal discrimination strategy is to
read $z$ and output the most likely input. The set $\cset_G$ is a convex
polytope~\cite{Tavakoli2022}.

Classical behaviours certify no key. This is the P\&M counterpart, under a leakage bound, of the statement that
correlations admitting a local model certify no key~\cite{acin2006from} in a DI setting.
To see this, take a classical model~\eqref{eq:classical_model}--\eqref{eq:classical_pg}, and consider the attack where Eve intercepts the message $z$, keeps a copy and forwards the original to Bob. The behaviour is unchanged and the rescaled guessing probability $\Pgn$ is still bounded by $G$, since duplicating a classical register does not make it easier to read. Conditioned on Eve's copy of $z$, Bob's data $(y,b)$ are independent of Alice's, since $y$ is chosen independently of everything else and $b$ is drawn from $p(b|z,y)$. Hence $I(A{:}B|E)=0$, where $A$ and $B$ denote all the classical data of Alice and Bob in a round. The intrinsic information $I(A{:}B\!\downarrow\!E)\leq I(A{:}B|E)$ therefore vanishes. Since it upper-bounds the rate at which Alice and Bob can distil secret key from i.i.d.\ copies of the triple $(A,B,E)$ by any protocol with public discussion, including two-way protocols~\cite{MaurerWolf1999}, no secret key can be certified from $\bigl(p(b|x,y),G\bigr)$ when $p\in\cset_G$, for any choice of key rounds, sifting or post-selection, and any one-way or two-way post-processing.

\subsection{Protocols without certifiable key}
\label{sec:nokey}
Being outside $\cset_G$ at the honest $G$ is thus a necessary condition for any candidate protocol to generate a secret key. This condition rules out two broad classes of protocols. The first is any protocol that uses only $n=2$ input states, such as B92~\cite{B92}. Indeed, it was shown in Ref.~\cite{Tavakoli2022} that for two inputs, the set of classical behaviours $\cset_G$ coincides with the set of quantum behaviours $\qset_G $. That is, any behaviour obtained from two quantum states with $\Pgn\leq G$, for any number of measurements $y$ and outcomes $b$ on Bob's side, admits a classical model with the same leakage bound. 

The second class consists of matched-basis protocols. In these protocols, each input $x$ leads to a deterministic outcome $a_x$ for one of Bob's measurements, $y(x)$, and to uniformly random outcomes for all other measurements $y\neq y(x)$. Moreover, at least two of the preparations correspond to orthogonal states. BB84, the six-state protocol and the three-state protocol of Ref.~\cite{Tamaki2014} are of this kind, since each prepared state is an eigenstate of one of Bob's $m$ measurements and is unbiased with respect to the others.

Matched-basis protocols certify no key, whatever the value of $G$. The reason is that a classical message carrying a single bit of information reproduces their statistics. Given her input $x$, let Alice's device send a string $z=(z_1,\ldots,z_m)$ of $m$ bits, one for each of Bob's measurements. Each bit $z_i$ is uniformly random, except the bit $z_{y(x)}$ corresponding to the measurement $y(x)$, which is set to the deterministic outcome $a_x$. Upon receiving $z$ and his input $y$, Bob's device outputs $b=z_y$. This reproduces the behaviour. For BB84, for instance, the states $\ket0$, $\ket1$, $\ket+$ and $\ket-$ are encoded as $(0,r)$, $(1,r)$, $(r,0)$ and $(r,1)$, with $r$ a uniformly random bit, and Bob's device outputs the first bit for $Z$ and the second for $X$. Each input sends each of the $2^{m-1}$ strings compatible with it with probability $2^{-(m-1)}$, so the left-hand side of Eq.~\eqref{eq:classical_pg} is $2^m\cdot2^{-(m-1)}=2$. The behaviour is thus classical at $G=2$, with one bit of information revealed about $x$. The honest source cannot satisfy a smaller bound, because it prepares at least two orthogonal states, which can be distinguished perfectly, so that $\Pgn\geq2$. Since $\cset_G$ grows with $G$, the behaviour is classical at every possible bound $G\geq2$, and no key can be certified. In BB84, the first bit of the message that Eve copies reveals the key in the $Z$ basis and the second the key in the $X$ basis, which she can read once the basis is announced.

\subsection{Protocols considered here}
\label{sec:protchsh}

The entanglement-based version of BB84 does not provide a positive key rate in the DI setting, because its correlations are local; this is the DI counterpart of the observation above. Rotating Bob's measurements by $45^\circ$ from the $Z$ and $X$ axes, however, produces correlations that violate the CHSH inequality~\cite{Clauser1969}, and the resulting protocol certifies a positive key rate~\cite{PhysRevLett.98.230501}. The same holds in the prepare-and-measure setting with a leakage bound, where tilting Bob's measurements produces behaviours that lie outside $\cset_G$ at the honest bound, and the resulting protocols certify a positive key rate.

We study four protocols of this kind, summarised in Table~\ref{tab:protocols_def}. All of them are qubit protocols in the $XZ$ plane of the Bloch sphere. We write
\begin{equation}
    \label{eq:Ophi}
    O_\varphi=\cos\varphi\,Z+\sin\varphi\,X
\end{equation}
for the observable whose Bloch vector makes the angle $\varphi$ with the $Z$ axis, and denote by $\prot{n}{m}$ a protocol with $n$ preparations and $m$ measurements. In each protocol, Alice prepares eigenstates of some of these observables, and Bob measures some of them; Table~\ref{tab:protocols_def} specifies each protocol by the corresponding angles $\varphi$.

\begin{table}[tb]
    \centering
    \begin{tabular}{lllc}
        \toprule
        $(n,m)$ & Preparations \phantom{abc}& Measurements \phantom{abc} & $\nu$ \\
        \midrule
        $\prot43$ & $0$, ${\pi/2}$ & $0$, ${\pi/4}$, ${-\pi/4}$ & $1$ \\
        $\prot44$ & $0$, ${\pi/2}$ & $0$, ${\pi/2}$, ${\pi/4}$, ${-\pi/4}$ & $2$ \\
        $\prot63$ & $0$, ${\pi/3}$, ${2\pi/3}$ & $0$, ${\pi/3}$, ${2\pi/3}$ & $3$ \\
        $\prot33$ & $0$, $\pi/2^+$ & $0$, ${\pi/4}$, ${-\pi/4}$ & $1$ \\
        \bottomrule
    \end{tabular}
    \caption{The protocols studied in this work, specified by the Bloch angles $\varphi$ of the observables $O_\varphi$ of Eq.~\eqref{eq:Ophi}. For each angle in the second column, Alice prepares the two eigenstates of $O_\varphi$, except for $\pi/2^+$, which indicates that only the $+1$ eigenstate of $O_{\pi/2}=X$, the state $\ket+$, is prepared. For each angle in the third column, in the order $y=1,\dots,m$, Bob measures $O_\varphi$. The first $\nu$ of the preparation and measurement bases are the key bases $\theta=1,\dots,\nu$. With inputs biased towards the key bases, the fraction of key rounds tends to $\pkey=1/\nu$ (Sec.~\ref{sec:keyrate}). The honest bound is $G=2$ for every protocol.}
    \label{tab:protocols_def}
\end{table} 

Alice's first $\nu$ bases are the eigenbases of Bob's first $\nu$ observables. They are the key bases $\theta=1,\dots,\nu$ of Sec.~\ref{sec:generic}, measured by Bob's key measurements $y=\theta$, so that in the noiseless implementation Bob's outcome $b$ always equals Alice's raw key symbol $a$. The remaining preparations and measurements serve only as tests.

Since the states are qubit states, $\Pgn\leq2$ by Eq.~\eqref{eq:pg_dim}, and since each protocol contains an orthogonal pair of preparations, $\Pgn\geq2$ (Sec.~\ref{sec:nokey}). The honest leakage bound is therefore $G=2$. All four protocols thus assume, exactly as BB84 does, that the source reveals at most $\log_2G=1$ bit of information about Alice's input, in the sense of Eq.~\eqref{eq:imin}. They differ in the behaviour they produce.

\emph{The P\&M CHSH protocol $\prot43$.} Alice prepares one of the four BB84 states, which we label $x=(\theta,a)$ as in Sec.~\ref{sec:generic},
\begin{equation}
\label{eq:4states}
\begin{aligned}
    x=(1,0) &\to \ket 0, & x=(1,1) &\to \ket 1, \\
    x=(2,0) &\to \ket +, & x=(2,1) &\to \ket -,
\end{aligned}
\end{equation}
and Bob measures one of the three observables
\begin{equation}
\label{eq:3meas}
    y=1 \to Z, \qquad y=2 \to \frac{Z+X}{\sqrt 2}, \qquad
    y=3 \to \frac{Z-X}{\sqrt 2} .
\end{equation}
The key rounds are those with $\theta=1$ and $y=1$, so that $\nu=1$ and the fraction of key rounds $\pkey$ can be made arbitrarily close to $1$ (Sec.~\ref{sec:keyrate}). The preparations $(2,0)$ and $(2,1)$ and the two tilted measurements $y=2,3$ serve only as tests. This is the P\&M analogue of the DI protocol of Ref.~\cite{PhysRevLett.98.230501}, in which Alice measures $Z$ or $X$ on her half of a maximally entangled state, with the preparation of their eigenstates replacing her measurements. The test part of the protocol, the four preparations and the two tilted measurements, is the $2\to1$ random access code~\cite{814608,PhysRevLett.114.170502} in a rotated frame: the four states encode the two bits $(a,a\oplus(\theta-1))$, and Bob reads the first bit with $y=2$ and the second with $y=3$, each with probability $\cos^2(\pi/8)$ in the ideal implementation. The protocol thus consists of a random access code supplemented by the key measurement $Z$. The same arrangement was analysed under a qubit assumption in Refs.~\cite{erik_2013,PhysRevLett.115.150501}.

\emph{The $\prot44$ protocol} adds the measurement of $X$ to $\prot43$, so that the $X$ basis becomes a second key basis. With the convention $y=\theta$ of Sec.~\ref{sec:generic}, Bob measures $Z$ for $y=1$, $X$ for $y=2$, and the tilted observables $(Z+X)/\sqrt2$ and $(Z-X)/\sqrt2$ for $y=3,4$. Key is generated in both bases, so that $\nu=2$ and $\pkey\to1/2$. The protocol is BB84 supplemented by the two tilted measurements of $\prot43$, that is, by its random access code test.

\emph{The $\prot63$ protocol.} Alice prepares the eigenstates of three observables $O_{(\theta-1)\pi/3}$ that are $60^\circ$ apart on the Bloch sphere for $\theta=1,2,3$, and Bob measures the same three observables. Every measurement generates key in the rounds with $y=\theta$, so that $\nu=3$ and $\pkey\to1/3$, and the rounds in which Bob's basis differs from Alice's serve as tests. Unlike those of the six-state protocol, the three bases are not mutually unbiased, so that the protocol is not of the matched-basis type of Sec.~\ref{sec:nokey}.

\emph{The $\prot33$ protocol.}
Two preparations do not suffice to certify a key, but this protocol shows that three preparations do. Alice prepares $\ket0$, $\ket1$ and $\ket+$, that is, the preparations $(1,0)$, $(1,1)$ and $(2,0)$ of Eq.~\eqref{eq:4states}. These are the states of the three-state protocol of Ref.~\cite{Tamaki2014}, but Bob measures as in $\prot43$, Eq.~\eqref{eq:3meas}, instead of $Z$ and $X$, which would make it a matched-basis protocol. The key rounds are those with $\theta=1$ and $y=1$, so that $\nu=1$ and $\pkey\to1$, and the preparation $(2,0)$ serves as a test.

\subsection{Non-classicality witnesses}
\label{sec:facets}

To show that the behaviours of these protocols lie outside $\cset_G$ at the honest bound, it suffices to exhibit, for each of them, a linear inequality that every classical behaviour satisfies and that the target behaviour violates. We describe Bob's binary measurements by the biases
\begin{equation}
    \label{eq:bias}
    E_{x,y} = p(0|x,y)-p(1|x,y),
\end{equation}
and, for a basis $\theta$ whose two preparations are labelled $(\theta,0)$ and $(\theta,1)$, the correlation between Alice's bit $a$ and Bob's outcome by the correlator
\begin{equation}
    \label{eq:correlator}
    C_{\theta,y} = \tfrac12\bigl(E_{(\theta,0),y}-E_{(\theta,1),y}\bigr) .
\end{equation}
In the entanglement-based twin, in which Alice measures the basis $\theta$ on her half of a maximally entangled state, $C_{\theta,y}$ is the correlator between her measurement and Bob's measurement $y$.

\emph{The P\&M CHSH inequality.} For $\prot43$, consider the following P\&M CHSH value
\begin{equation}
    \label{eq:chsh}
    S_{\mathrm{CHSH}} = C_{1,2}+C_{1,3}+C_{2,2}-C_{2,3} .
\end{equation}
Classical behaviours satisfy
\begin{equation}
    \label{eq:chsh_bound}
    S_{\mathrm{CHSH}} \leq G .
\end{equation}
More precisely, the largest classical value of $S_{\mathrm{CHSH}}$ is $\min\{G,4G-4\}$ for every $G$, the second bound being the stronger one only for $G<4/3$ (Appendix~\ref{app:classical}). At $G=2$, Eq.~\eqref{eq:chsh_bound} gives $2$, the local bound of the CHSH inequality.

The target states and measurements give $C_{1,2}=C_{1,3}=C_{2,2}=-C_{2,3}=1/\sqrt2$, and hence
\begin{equation}
    S_{\mathrm{CHSH}} = 2\sqrt2 \;>\; G = 2 .
\end{equation}
This is the largest value compatible with a one-bit bound~\cite{Tavakoli2022}, the analogue of Tsirelson's bound.

The witness and its bound connect to earlier work. $S_{\mathrm{CHSH}}$ is the score of the random access code of Sec.~\ref{sec:protchsh}: $S_{\mathrm{CHSH}}=8P_{\mathrm{rac}}-4$, where $P_{\mathrm{rac}}$ is the probability, averaged over the four inputs and the two tilted measurements, that Bob outputs the requested bit. Equation~\eqref{eq:chsh_bound} thus reads $P_{\mathrm{rac}}\leq\tfrac12+G/8$. In this form, it is the two-bit case of a bound on random access codes under a distinguishability constraint proved in Ref.~\cite{Manna2024} by the counting argument of Appendix~\ref{app:classical}. As noted in Ref.~\cite{Manna2024}, this bound is not tight for all values of the success probability; the refinement $4G-4$ for $G<4/3$ gives the exact classical maximum. At $G=2$, the bound reproduces the familiar value $3/4$ for one bit of classical communication, although $\cset_{2}$ is defined by the information carried by the message rather than by its alphabet. The same witness, with a bound on the dimension of the emitted system in place of the leakage bound, underlies the semi-device-independent QKD protocols of Refs.~\cite{onewaysdiqkd,erik_2013,PhysRevLett.115.150501,Woodhead_2016}. Under the information restriction, it was used in Ref.~\cite{Tavakoli2022} for randomness generation, where its largest quantum value at one bit of information was shown to coincide with the qubit value $2\sqrt2$.

\emph{The other protocols.} For $\prot44$, the CHSH value~\eqref{eq:chsh}, built from the tilted measurements $y=3,4$ instead of $y=2,3$, obeys the same bound and has the same ideal value.

For $\prot63$, whose ideal correlators are $C_{\theta,y}=\cos\bigl((\theta-y)\pi/3\bigr)$, three inequalities certify non-classicality. The first one is
\begin{equation}
    \label{eq:witness63}
    \sum_{\theta,y}C_{\theta,y}-2\bigl(C_{1,3}+C_{3,1}\bigr)\leq3+G ,
\end{equation}
the sum of all nine correlators, with the sign reversed for the two pairs of bases $120^\circ$ apart, for which the ideal correlator is negative. The ideal behaviour gives $6$, against $5$ at $G=2$. The second is a CHSH inequality between the bases $\theta=2,3$ and the measurements $y=1,3$,
\begin{equation}
    \label{eq:witness63b}
    C_{2,1}+C_{2,3}-C_{3,1}+C_{3,3}\leq G ,
\end{equation}
whose bound at $G=2$ is again the local bound $2$, while the ideal value is $5/2$.  The third one is
\begin{multline}
    \label{eq:witness63c}
    C_{1,1}+C_{1,2}-C_{1,3}-C_{3,1}+C_{3,2}+C_{3,3}\\
    +E_{(2,0),1}+E_{(2,0),2}+E_{(2,0),3}\leq1+2G ,
\end{multline}
which coincides with Eq.~\eqref{eq:witness63} on the bases $\theta=1,3$ and has the same values, $6$ for the ideal behaviour against $5$ at $G=2$, but a different dependence on $G$.

For $\prot33$, in which only one state of the $X$ basis is prepared, classical behaviours satisfy
\begin{multline}
    \label{eq:witness33}
    E_{(1,0),1}-E_{(1,1),1}+E_{(1,0),3}-E_{(1,1),2}\\
    +E_{(2,0),2}-E_{(2,0),3}\leq2G ,
\end{multline}
for which the ideal behaviour gives $2+2\sqrt2$, against $4$ at $G=2$, and
\begin{multline}
    \label{eq:witness33b}
    E_{(1,0),1}+E_{(1,0),2}+E_{(1,0),3}-E_{(1,1),2}\\
    +E_{(2,0),2}-E_{(2,0),3}\leq2G ,
\end{multline}
whose ideal value is $1+5/\sqrt2\approx4.54$.

We obtained these inequalities from the dual solutions of the linear programs of Sec.~\ref{sec:lp}. Their classical bounds are established in Appendix~\ref{app:classical}.

\subsection{Visibility and efficiency thresholds}
\label{sec:lp}
How far a behaviour lies from the classical set can be quantified in several ways. The classical information content $\Gmin(p)$ of a behaviour $p$ is the smallest $G$ such that $p\in\cset_G$. For the ideal behaviour, it measures how much information a classical strategy would need to reproduce the target statistics. 

At the honest leakage bound $G$, the robustness of non-classicality to noise can be measured by the classical visibility $\vcl$, the largest visibility $v$ at which the behaviour $v\,p +(1-v)\,p_{\mathrm{rand}}$ is still classical, where $p_{\mathrm{rand}}$ is the uniform behaviour with ${p_{\mathrm{rand}}(b|x,y)=1/2}$ for all $x$, $y$ and $b$. This noisy behaviour results from modelling imperfect preparation and channel noise by a depolarising channel of visibility $v$:
\begin{equation}
    \label{eq:depol}
    \rho_x \to v\,\rho_x + (1-v)\,\frac{\id}{2} .
\end{equation}
For the key basis the visibility is related to the quantum bit error
rate by $\mathrm{QBER}=(1-v)/2$.

We can also quantify the robustness of non-classicality to loss, represented  by an erasure channel of transmission $\eta\in[0,1]$, which appends a
no-click outcome $\varnothing$ to each of Bob's measurements,
\begin{equation}
    \label{eq:loss}
    M_{b|y}^{(\eta)} = \eta\, M_{b|y}, \qquad
    M_{\varnothing|y}^{(\eta)} = (1-\eta)\,\id .
\end{equation}
The parameter $\eta$ includes the detector efficiency and the channel transmission. If we bin the no-click outcome with the outcome $b=0$, the resulting behaviour is $\eta\,p(b|x,y)+(1-\eta)\,\delta_{b,0}$. The classical efficiency $\eta_{\mathrm{cl}}$ is the largest $\eta$ at which this behaviour is still classical. 

Since classical behaviours certify no key, $v>\vcl$ and $\eta>\eta_{\mathrm{cl}}$ are necessary conditions for a positive key rate at the honest leakage bound $G$. 

All three quantities $\Gmin$, $\vcl$, and $\eta_{\mathrm{cl}}$ can be determined by linear programs over $\cset_G$~\cite{Tavakoli2020,Tavakoli2022}. Table~\ref{tab:protocols} lists the results. The inequalities of Sec.~\ref{sec:facets} reproduce all of them. For instance, for $\prot43$ and $\prot44$, the CHSH inequality gives $\Gmin=2\sqrt2$, and, since the CHSH value becomes $2\sqrt2\,v$ under white noise and $2\sqrt2\,\eta$ under binned loss, $\vcl=\eta_{\mathrm{cl}}=1/\sqrt2$.

These classical thresholds are necessary conditions only; the key-rate thresholds of Sec.~\ref{sec:results} are higher, for instance $v_c=0.852$ and $\eta_c=0.791$ for $\prot43$.

\begin{table}[tb]
    \centering
    \begin{tabular}{lcccc}
        \toprule
        Protocol & $\Gmin$ & $\log_2\Gmin$ & $\vcl$ & $\eta_{\mathrm{cl}}$ \\
        \midrule
        $\prot43$ & $2\sqrt2$ & $3/2$ & $1/\sqrt2$ & $1/\sqrt2$ \\
        $\prot44$ & $2\sqrt2$ & $3/2$ & $1/\sqrt2$ & $1/\sqrt2$ \\
        $\prot63$ & $3$ & $\log_23$ & $4/5$ & $2/3$ \\
        $\prot33$ & $1+\sqrt2$ & $\log_2(1+\sqrt2)$ & $2\sqrt2-2$ & $(4+10\sqrt2)/23$ \\
        \bottomrule
    \end{tabular}
    \caption{Classical thresholds of the protocols of Table~\ref{tab:protocols_def}: the classical information content $\Gmin$ of the ideal behaviour, the corresponding number of bits $\log_2\Gmin$ that a classical strategy would need, against $\log_2G=1$ for the honest bound, and the white-noise visibility $\vcl$ and detection efficiency $\eta_{\mathrm{cl}}$ (with the no-click outcome binned) below which the behaviour is classical at the honest bound. All values were computed by linear programming (Appendix~\ref{app:classical}) and are reproduced by the inequalities of Sec.~\ref{sec:facets}.}
    \label{tab:protocols}
\end{table}

\subsection{Ideal implementations certify one bit of key}
\label{sec:ideal}
Beyond producing non-classical behaviours, the above protocols certify one bit of key per key round in the ideal implementation, at the honest leakage bound, as we now show.

\begin{proposition}
\label{prop:ideal}
For any of the protocols $\prot43$, $\prot44$, $\prot63$ and $\prot33$, assume the ideal target behaviour and the leakage bound $G=2$. Then the state $\rho_{AE\Theta}$ of Eq.~\eqref{eq:rho_AE} factorises as $\rho_{AE\Theta}=\rho_{A\Theta}\otimes\rho_E$, so that the key $A$ is independent of Eve's system $E$, i.e., $H(A|E\Theta)=1$. Since Alice's raw key symbol is uniformly random and perfectly correlated with Bob's outcome, i.e., $H(A|B)=0$, one bit of secret key per key round is certified.
\end{proposition}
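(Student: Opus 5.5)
The plan is to use the dual characterisation \eqref{eq:sigma} of the leakage bound and show that the ideal behaviour saturates it so tightly that $\sigma$ and the post-attack states $\rho_x=\rho_x^{BE}$ are pinned down. Let $\sigma\geq\rho_x$ for all $x$ with $\Tr\sigma\leq2$. Because the key measurement reproduces $a$ perfectly, the two states of each key basis are perfectly distinguishable: $\Tr(\rho_{(\theta,0)}M_{0|\theta})=\Tr(\rho_{(\theta,1)}M_{1|\theta})=1$, with $M_{b|\theta}$ understood as $M_{b|\theta}\otimes\id_E$. Hence
\[
2\geq\Tr\sigma\geq\Tr(\sigma M_{0|\theta})+\Tr(\sigma M_{1|\theta})\geq\Tr(\rho_{(\theta,0)}M_{0|\theta})+\Tr(\rho_{(\theta,1)}M_{1|\theta})=2 .
\]
All these inequalities must then be equalities. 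In particular $\Tr(\sigma-\rho_{(\theta,a)})M_{a|\theta}=0$ with $\sigma-\rho_{(\theta,a)}\geq0$, which forces $(\sigma-\rho_{(\theta,a)})M_{a|\theta}=0$ when Bob's measurements are taken projective (Naimark, as remarked in Sec.~\ref{sec:ri}). It follows that $\rho_{(\theta,a)}=M_{a|\theta}\,\sigma\, M_{a|\theta}$ and $\sigma=\rho_{(\theta,0)}+\rho_{(\theta,1)}$ for every key basis $\theta$. Here $\sigma$ is a positive operator of trace $2$ on $BE$, and $\tau:=\sigma/2$ is a state.

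The second step uses the test statistics to show that $\tau$ has a qubit-times-environment structure. Its reduced state on $B$ must be maximally mixed, and the $E$-part must be decoupled. The idea mirrors self-testing from maximal CHSH violation. Summing each correlator over the pair $(\theta,0),(\theta,1)$ gives $C_{\theta,y}=\Tr\bigl(\tau\,(\Pi_{\theta}^0-\Pi_{\theta}^1)\,\cdot\,\text{stuff}\bigr)$. More precisely, with $Z_\theta:=M_{0|\theta}-M_{1|\theta}$ and $B_y:=M_{0|y}-M_{1|y}$, the relation $\rho_{(\theta,a)}=M_{a|\theta}\sigma M_{a|\theta}$ turns the correlators into $C_{\theta,y}=\Tr(\tau\,\tfrac12\{Z_\theta,B_y\}_{\text{pinched}})$-type expressions. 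These are quadratic in binary observables evaluated on the single state $\tau$, exactly like the correlators of an entanglement-based CHSH scenario. The ideal value $S_{\mathrm{CHSH}}=2\sqrt2$ (and the corresponding maximal values of the witnesses \eqref{eq:witness63}--\eqref{eq:witness33b} for the other protocols) should then force, via a sum-of-squares decomposition of $2\sqrt2\id-\mathcal S$, anticommutation relations among the $Z_\theta$ and the $B_y$ on the support of $\tau$. This is the step I expect to be the main obstacle. The pinching by $M_{a|\theta}$ makes the operator expression differ from a plain Bell operator. Moreover, for $\prot33$ only one $X$-state is available, so I would have to use the fixed ideal values of the individual biases $E_{x,y}$, not just correlators, to recover the relation between $Z$ and the tilted observables. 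I would first treat $\prot43$ and then adapt the argument, using the $60^\circ$ structure for $\prot63$.

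From the anticommutation one gets a local isometry $B\mapsto B'\otimes J$ with $B'$ a qubit. Under it $Z_1\mapsto Z\otimes\id$, and $\tau$ has the form $\tfrac{\id}{2}\otimes\omega_{JE}$ for the ideal maximally-mixed Bob marginal. The relation $\rho_{(1,a)}=M_{a|1}\sigma M_{a|1}$ then gives $\rho_{(1,a)}=\ket a\bra a\otimes\omega_{JE}$, and similarly for the other key bases. Tracing out $B$, Eve's state $\Tr_B\rho_{(\theta,a)}=\omega_E$ is the same for all $(\theta,a)$. Therefore $\rho_{AE\Theta}=\rho_{A\Theta}\otimes\omega_E$, with $A$ uniform given $\Theta$, so $H(A|E\Theta)=1$. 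Finally, perfect correlation in key rounds gives $H(A|B)=0$, and Eq.~\eqref{eq:dw} yields one bit per key round.
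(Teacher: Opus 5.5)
Your first step is the same as the paper's. Perfect distinguishability in each key basis saturates $\Tr\sigma\le2$ and gives $\sigma=\rho_{(\theta,0)}+\rho_{(\theta,1)}$, and your extra identity $\rho_{(\theta,a)}=M_{a|\theta}\sigma M_{a|\theta}$ is also correct. The gap is in the second step, which carries the whole content of the proposition and which, as you set it up, cannot be completed.

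Your identities, and your operators $Z_\theta$ on Bob's system, exist only for key bases. In $\prot43$ and $\prot33$ the $X$-basis states are test preparations, not key states, since Bob never measures $X$. The only information about them is $\rho_{(2,a)}\le\sigma$. The correlators $C_{2,y}$ in $\prot43$, and the biases $E_{(2,0),y}$ in $\prot33$, are then not expectation values in $\tau$ of any product of observables on $BE$. So there is no operator $\mathcal S$ on $BE$ with $S_{\mathrm{CHSH}}=\Tr(\tau\mathcal S)$, and no sum-of-squares decomposition to invoke. Even for key bases, $\Tr(\tau Z_\theta B_y)$ pairs two non-commuting observables on the same system, so the standard self-test does not apply to it as it stands. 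Your final claim $\tau=\tfrac{\id}{2}\otimes\omega_{JE}$ amounts to $\Tr_B\rho_{(1,0)}=\Tr_B\rho_{(1,1)}$, which is exactly the decoupling to be proved, and nothing on $BE$ in $\prot43$ supplies it.

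The missing idea is a change of picture that turns the data into a genuine bipartite Bell scenario. Since $\rho_x\le\sigma$ and $\Tr(\sigma-\rho_x)=1$, every preparation gives an equal-weight decomposition $\sigma/2=\tfrac12\rho_x+\tfrac12(\sigma-\rho_x)$. Every key basis gives $\sigma/2=\tfrac12\rho_{(\theta,0)}+\tfrac12\rho_{(\theta,1)}$. Purify $\sigma/2$ to $\ket\Psi$ on $CBE$. By the HJW (steering) theorem, each such decomposition is induced by a $\pm1$ observable on the auxiliary system $C$: $A_\theta$ for a key basis, $A'_x$ for any other preparation. These commute with Bob's observables, and Eve holds the rest of the purification. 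They satisfy
$\langle A_\theta B_y\rangle=C_{\theta,y}$ and $\langle A'_xB_y\rangle=E_{x,y}-\tfrac12\bigl(E_{(\theta,0),y}+E_{(\theta,1),y}\bigr)$.
Equivalently, all correlators are Hilbert--Schmidt inner products in which Alice's operator multiplies $\sqrt\tau$ from the left and Bob's from the right, and left and right multiplications commute.

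For $\prot43$ and $\prot33$, $A_1$ and $A'_{(2,0)}$ with the two tilted measurements reach $2\sqrt2$. This self-tests $\ket{\Phi^+}_{C'B'}\otimes\ket\xi$ with $A_1\mapsto Z$, so Eve's state does not depend on the key bit. This uses only $\ket0,\ket1,\ket+$, which settles your concern about $\prot33$. For $\prot44$ the same holds with $A_1,A_2$. For $\prot63$ no pair of bases violates CHSH maximally (the largest value is $5/2$), so the CHSH self-test is unavailable and a separate argument is needed. The paper uses $A_\theta\ket\Psi=B_\theta\ket\Psi$ and $A_1\ket\Psi=(A_2-A_3)\ket\Psi$ to show that $A_1$ anticommutes on $\ket\Psi$ with $K=(A_2+A_3)/\sqrt3$, and then applies the swap isometry.
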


\begin{proof}
The statement that $H(A|B)=0$ is immediate from the ideal correlations. To show that Eve has no information about the key, we show that, for the ideal behaviour, the protocol is equivalent to an entanglement-based protocol based on a purification of the state $\sigma$ of Eq.~\eqref{eq:sigma}. The ideal correlations then self-test a maximally entangled state, which decouples Eve.

Let $\sigma$ satisfy Eq.~\eqref{eq:sigma} with $G=2$, that is, $\sigma\geq\rho_x$ for every $x$ and $\Tr\sigma\leq2$. Bob's key measurement $y=\theta$ distinguishes the two states $\rho_{(\theta,0)}$ and $\rho_{(\theta,1)}$ of a key basis perfectly, $\Tr(M_{a|\theta}\rho_{(\theta,a)})=1$, so that
\begin{equation}
    2\geq\Tr\sigma=\sum_{a=0}^1\Tr(M_{a|\theta}\sigma)\geq\sum_{a=0}^1\Tr(M_{a|\theta}\rho_{(\theta,a)})=2 .
\end{equation}
All inequalities are thus equalities. For positive operators $P$ and $M$, $\Tr(MP)=\|\sqrt P\sqrt M\|^2=0$ implies $PM=0$, where $\|\cdot\|$ is the Hilbert--Schmidt norm. Applied to $\Tr\bigl[M_{a|\theta}(\sigma-\rho_{(\theta,a)})\bigr]=0$ and to $\Tr\bigl[(\id-M_{a|\theta})\rho_{(\theta,a)}\bigr]=0$, this gives $\sigma M_{a|\theta}=\rho_{(\theta,a)}M_{a|\theta}=\rho_{(\theta,a)}$, and summing over $a$,
\begin{equation}
    \label{eq:tau_sum}
    \sigma=\rho_{(\theta,0)}+\rho_{(\theta,1)}
\end{equation}
for every key basis $\theta$.

On the other hand, $\sigma-\rho_x$ is a positive operator of unit trace, hence a state, for every preparation $x$, and the state $\sigma/2$ thus admits the decompositions
\begin{equation}
    \label{eq:two_decompositions}
    \frac\sigma2=\tfrac12\rho_{(\theta,0)}+\tfrac12\rho_{(\theta,1)}=\tfrac12\rho_x+\tfrac12(\sigma-\rho_x)
\end{equation}
into two states with equal weights.

Let $\ket\Psi$ be a purification of $\sigma/2$ on $CBE$, where $C$ is an auxiliary system. Every decomposition of $\sigma/2$ into two states with weights $1/2$ is induced by a two-outcome measurement on $C$~\cite{Watrous2018}, which can be taken projective by enlarging $C$. Equation~\eqref{eq:two_decompositions} thus defines, for each key basis $\theta$, a $\pm1$ observable $A_\theta$ on $C$ whose outcome $(-1)^a$ leaves $BE$ in the state $\rho_{(\theta,a)}$, and, for each preparation $x$ outside the key bases, an observable $A'_x$ whose outcomes $+1$ and $-1$ leave $BE$ in the states $\rho_x$ and $\sigma-\rho_x$. The statistics of these virtual measurements with Bob's are fixed by the observed behaviour. Writing $\langle\cdot\rangle$ for expectation values in $\ket\Psi$ and $B_y=M_{0|y}-M_{1|y}$ for Bob's observables, with his measurements taken projective without loss of generality, Eq.~\eqref{eq:two_decompositions} gives
\begin{equation}
    \label{eq:virtual_corr}
    \begin{aligned}
    \langle A_\theta B_y\rangle&=\tfrac12\Tr\bigl[(\rho_{(\theta,0)}-\rho_{(\theta,1)})B_y\bigr]=C_{\theta,y},\\
    \langle A'_xB_y\rangle&=\Tr\bigl[(\rho_x-\tfrac12\sigma)B_y\bigr]\\&=E_{x,y}-\tfrac12\bigl(E_{(\theta,0),y}+E_{(\theta,1),y}\bigr),
    \end{aligned}
\end{equation}
with the biases and correlators of Eqs.~\eqref{eq:bias} and~\eqref{eq:correlator}, and $\langle A_\theta\rangle=\langle A'_x\rangle=0$.

The pure state $\ket\Psi$ on $CBE$, with Alice measuring $A_\theta$ or $A'_x$ on $C$, Bob measuring $B_y$ on $B$ and Eve holding $E$, defines an entanglement-based protocol in which Alice's and Bob's measurements act on separate systems of arbitrary dimension and Eve holds the purification of their joint state. It is equivalent to the P\&M protocol in two respects. First, its correlations between $C$ and $B$ are fixed by the observed behaviour, through Eq.~\eqref{eq:virtual_corr}. Second, it has the same key state. If Alice measures $A_\theta$ and $B$ is traced out, the classical--quantum state of her outcome and $E$ is $\tfrac12\sum_a\ket a\bra a\otimes\Tr_B(\rho_{(\theta,a)})$, and averaging over the key bases with the label $\theta$ in the register $\Theta$ gives $\rho_{AE\Theta}$ of Eq.~\eqref{eq:rho_AE}. Moreover, the marginal of $\ket\Psi$ on $E$ is the same whichever observable is measured on $C$. The factorisation $\rho_{AE\Theta}=\rho_{A\Theta}\otimes\rho_E$ therefore follows if we show that, for every key basis $\theta$, the state of $E$ conditioned on the outcome of $A_\theta$ does not depend on that outcome. This is what the ideal correlations establish.

For $\prot43$, consider the key observable $A\equiv A_1$ and the observable $A'\equiv A'_{(2,0)}$ of the preparation $\ket+$. At the ideal behaviour, Eq.~\eqref{eq:virtual_corr} gives $\langle AB_2\rangle=\langle AB_3\rangle=\langle A'B_2\rangle=-\langle A'B_3\rangle=1/\sqrt2$, hence the CHSH value
\begin{equation}
    \label{eq:virtual_chsh}
    S=\langle AB_2\rangle+\langle AB_3\rangle+\langle A'B_2\rangle-\langle A'B_3\rangle=2\sqrt2 .
\end{equation}
These are the ideal statistics of the CHSH-based DI protocol of Ref.~\cite{PhysRevLett.98.230501}, and the maximal CHSH value self-tests the maximally entangled state~\cite{PopescuRohrlich1992,MayersYao2004,McKague2012,Supic2020}. Specifically, for any pure state $\ket\Psi$ on $CBE$ and any $\pm1$ observables $A$, $A'$ on $C$ and $B_2$, $B_3$ on $B$ with $S=2\sqrt2$, there are local isometries on $C$ and on $B$ that map $\ket\Psi$ to $\ket{\Phi^+}_{C'B'}\otimes\ket\xi_{C''B''E}$, with $\ket{\Phi^+}$ a maximally entangled state of two qubits, and $A\ket\Psi$ to $(Z\otimes\id)\ket{\Phi^+}\otimes\ket\xi$. The outcome of $A$ is therefore uniformly random, and the state of $E$ conditioned on it is $\Tr_{C''B''}\ket\xi\bra\xi$ for both outcomes, which is the claim. The protocol $\prot33$ contains the same three preparations and the same measurements, so the same argument applies. In $\prot44$, both $A_1$ and $A_2$ are key observables, with the CHSH value $\langle A_1B_3\rangle+\langle A_1B_4\rangle+\langle A_2B_3\rangle-\langle A_2B_4\rangle=2\sqrt2$ on the tilted measurements. The self-test maps $A_1$ and $A_2$ to $Z$ and $X$ on $C'$, so the claim holds for both key bases.

For $\prot63$, all three bases are key bases, and the virtual observables $A_1$, $A_2$, $A_3$ reproduce with $B_1$, $B_2$, $B_3$ the statistics of a maximally entangled pair measured along three coplanar directions $60^\circ$ apart, $\langle A_\theta B_y\rangle=\cos\bigl((\theta-y)\pi/3\bigr)$. No pair of these observables violates the CHSH inequality maximally, the largest value being $5/2$, so the CHSH self-test cannot be invoked. Appendix~\ref{app:ideal} shows that these correlations nevertheless self-test a maximally entangled state in each key basis. The perfect correlations $A_\theta\ket\Psi=B_\theta\ket\Psi$ and the correlators between different bases force $A_1$ to anticommute on $\ket\Psi$ with an observable built from $A_2$ and $A_3$, which Bob reproduces as well. This is the structure from which the self-test of the maximal CHSH violation extracts $\ket{\Phi^+}$, obtained here from three bases instead of two, and it gives the claim for each $A_\theta$ in the same way.
\end{proof}

\section{Imperfect implementations}
\label{sec:imperfect}

QKD protocols are typically described, as in the previous section, by an honest and ideal implementation, free of noise, imperfections, and side channels. However, in a real implementation, the actual states $\rho_x$ and the actual measurements $M_{b|y}$ may differ from the target states and measurements. Such deviations have to be taken into account in the security analysis.

Our security analysis takes two inputs: the observed behaviour $p(b|x,y)$ and the leakage bound $G$. Together they certify that the behaviour is non-classical and, through the bound on $H(A|E\Theta)$, that key can be extracted. Imperfections affect these two inputs differently. They change the observed behaviour, and they may invalidate the leakage bound. The first effect does not compromise security. The analysis never assumes that the behaviour is the ideal one; it bounds the key rate for whatever behaviour is observed, and a degraded behaviour simply yields a lower rate, or none if it falls into the classical set, exactly as a reduced Bell violation lowers the rate of a DI protocol or brings it to zero. Deviations of Bob's measurements, which are uncharacterised, can only have this first effect. The second effect is the problematic one, and it concerns the source alone. The bound is an assumption, not an observation. If the actual source leaks more than $G$ allows, the certified rate is meaningless, and nothing in the observed data reveals it. The value of $G$ must therefore be chosen so that it holds for the actual source, including the expected, or modelled worst-case, imperfections. The rest of this section discusses how to do this. We do not provide a general recipe, but illustrate how the analysis can be applied to typical imperfections, and we give a few examples of how to model them.

Remember the setting of Fig.~\ref{fig:pnmqkd}. In each round, Alice's input $x$ is drawn inside her laboratory, and her source emits the system $Q$ into the channel. Eve controls the channel, and she may in addition hold a system $E_0$ that was correlated with the source before the round began. Everything that Eve learns about $x$, she learns from $QE_0$, since her final state, and with it the state $\rho_{AE\Theta}$ of Eq.~\eqref{eq:rho_AE} from which the key rate is computed, is obtained from $\rho_x^{QE_0}$ by her isometry $\mathcal U$. This is why the leakage bound~\eqref{eq:ri} is a statement about $\rho_x^{QE_0}$, and why, for the analysis to be valid, $Q$ and $E_0$ must comprise every system outside Alice's laboratory that depends on $x$ and that Eve may use: the intended signal, any unintended emission in other degrees of freedom, the light returned by a probe that Eve sends into the laboratory, and any record of the internal state of the source that Eve may hold. Conversely, nothing else needs to be counted. Information that never reaches Eve, because it stays inside Alice's laboratory, does not affect $\rho_{AE\Theta}$ and plays no role in the bound. The leakage bound is thus a bound on the total information about $x$ that is available to Eve, whatever its physical carrier.

In the honest implementation of the previous section, the states are modelled as
\begin{equation}
    \label{eq:honest_form}
    \rho_x^{QE_0}=\ket{\psi_x}\bra{\psi_x}\otimes\tau^{E_0},
\end{equation}
with $\tau^{E_0}$ independent of $x$, meaning that the source emits the target states, which are pure qubit states for the protocols of this work, and Eve's prior system is uncorrelated with them. The rescaled guessing probability $\Pgn$ of this ensemble is the honest bound, which we denote $G_{\mathrm{hon}}$; for our protocols $G_{\mathrm{hon}}=2$, one bit of information. It is the smallest bound that can be declared for these states, which contain orthogonal pairs. In a real implementation, imperfections modify the states~\eqref{eq:honest_form}. They are of two types: those that change the states without changing the leakage bound, and those that may increase it. 

\subsection{Imperfections that leave the bound valid}
\label{sec:imperfect_corr}

Many imperfections change the emitted states without increasing their distinguishability. As noted above, they change the observed behaviour and lower the key rate, but do not require a specific treatment in the security analysis. 

The simplest such imperfections are those that act on the emitted system independently of $x$: depolarisation or any other noise in the channel, loss, inefficient or mismatched detectors, and even active attacks on the detection devices. An $x$-independent channel cannot increase the guessing probability, so $\Pgn$, and with it the bound $G_{\mathrm{hon}}$, is unaffected. 

Imperfections that depend on $x$ may also leave the bound unchanged. For the protocols of this work, this follows from the dimension bound of Eq.~\eqref{eq:pg_dim}, according to which $G_{\mathrm{hon}}=2$ is the largest value of $\Pgn$ for any ensemble supported on a two-dimensional subspace, whatever the states within it, pure or mixed, aligned with the target states or not. Every imperfection that keeps the emission within the qubit subspace therefore leaves $G_{\mathrm{hon}}$ valid as it stands, even if it depends on $x$: a rotation of each prepared state away from its target by an $x$-dependent angle, a modulator that prepares $\cos\gamma_x\ket{\psi_x}+\sin\gamma_x\ket{\psi_x^\perp}$ instead of $\ket{\psi_x}$, a dephasing or depolarisation within the subspace whose strength depends on $x$, or a mixing of the target states among themselves. In all these cases the emitted states differ from the target states and the behaviour differs from the ideal one, but no measurement can guess $x$ with $\Pgn>2$.

The same holds for imperfections that fluctuate from round to round in a way that Eve may know, that is, for classical correlations between the source and Eve's prior system $E_0$. The security analysis itself does not care whether $E_0$ is correlated with the source, since it uses only the bound~\eqref{eq:ri_be} on the post-attack states; what matters is whether the correlations affect the value of the bound. 

The typical situation is a source with an internal state $\lambda$, such as the configuration of its components, the drift of a modulator or a fluctuation of the emitted mode, on which the emitted state $\rho_{x,\lambda}$ depends. To Alice and Bob, who do not know $\lambda$, the emission looks like the mixture $\bar\rho_x=\sum_\lambda p(\lambda)\rho_{x,\lambda}$. But Eve may hold a record of $\lambda$, for instance because she has measured the temperature of the laboratory, or because the source is simply a device of her making. Then $E_0$ carries $\lambda$, and $\rho_x^{QE_0}=\sum_\lambda p(\lambda)\ket\lambda\bra\lambda\otimes\rho_{x,\lambda}$, with $p(\lambda)$ independent of $x$. These states are block diagonal in $\lambda$, so an optimal measurement first reads $\lambda$, and
\begin{align}
    \Pgn\bigl(\{\rho_x^{QE_0}\}\bigr)&=\sum_\lambda p(\lambda)\,\Pgn\bigl(\{\rho_{x,\lambda}\}\bigr) \nonumber\\
    &\geq\Pgn\bigl(\{\bar\rho_x\}\bigr).
    \label{eq:pg_lambda}
\end{align}
Note first from the upper equality that the assumption bounds the average of the leakage conditioned on $\lambda$, not its worst case. The source may leak much more, or even reveal $x$ completely, for rare values of $\lambda$, provided that it leaks correspondingly less for the others; the multi-photon term below is an example, with $\lambda$ the photon number.

Second, the bottom inequality implies that a characterisation of the average states $\bar\rho_x$ does not suffice, however complete. For instance, if $\rho_{x,\lambda}=\ket{x+\lambda}\bra{x+\lambda}$ in dimension $n$, with addition modulo $n$ and $\lambda$ uniform, then $\bar\rho_x=\id/n$ for every $x$, whereas $\Pgn=n$: the source encrypts $x$ with a one-time pad whose key Eve holds. The characterisation must therefore hold conditionally on $\lambda$. 

Third, every bound that holds for each $\lambda$ carries over unchanged. In particular, if for every $\lambda$ the states $\rho_{x,\lambda}$ are supported on a qubit subspace with projector $\Pi_\lambda$, then $\sigma=\sum_\lambda p(\lambda)\ket\lambda\bra\lambda\otimes\Pi_\lambda$ satisfies Eq.~\eqref{eq:sigma} with $\Tr\sigma=2$. The variable $\lambda$ may thus change the encoded states, and even move the qubit subspace, at no cost, because the bound $G_{\mathrm{hon}}=2$ requires only that, for every internal state of the source, the emitted states be supported on some two-dimensional subspace, not that this subspace, or the states within it, be known. Fluctuations of this kind are therefore harmless, like the $x$-dependent imperfections within the qubit subspace discussed above, provided that the characterisation of the source holds conditionally on~$\lambda$.

Correlations of this type are known in the implementation-security literature as setting-choice-independent correlations~\cite{Mizutani2019,CurrasLorenzo2025}. In those analyses the characterisation must hold for every value of $\lambda$: the encoded phases and intensities must lie in given intervals~\cite{Mizutani2019}, or the fidelity to fixed pure reference states must exceed $1-\epsilon$~\cite{CurrasLorenzo2025}. Rounds that violate it can be tagged; their number must then be bounded, and they are assumed to reveal everything to Eve. The leakage bound needs only the average of the leakage over $\lambda$, not how it is distributed\footnote{This is true within the collective attacks considered here, where $\lambda$ is drawn independently in each round. Variables that drift over many rounds would need a more complex round-by-round formulation.}, and under the dimension argument it leaves the qubit states uncharacterised. 

\subsection{Imperfections that raise the bound}
\label{sec:raise}

An imperfection that takes the emission out of the qubit subspace in an $x$-dependent way can increase $\Pgn$ above $G_{\mathrm{hon}}$. Multi-photon emissions are an example: two copies of $\ket{\psi_x}$ are more distinguishable than one. So are side channels, that is, degrees of freedom of the emitted light other than the encoded one -- spectral, temporal or spatial modes -- that depend on $x$, and the light that a Trojan-horse probe sent by Eve into the laboratory brings back, modulated by the encoding device. In all these cases the emitted states are no longer supported on a two-dimensional subspace, and the dimension argument no longer applies.

Unlike the imperfections of the previous subsection, these do not necessarily leave a trace in the behaviour that Alice and Bob observe. A side channel, for instance, can reveal $x$ to Eve while Bob's statistics remain those of the target implementation. The bound must therefore be raised on the basis of a physical characterisation of the source. 

For instance, let the source emit a mixture of photon-number states with an $x$-independent distribution $\{p_k\}$, as for phase-randomised or heralded sources, and let $p_{\geq2}=\sum_{k\geq2}p_k$ be the multi-photon probability. Suppose, conservatively, that Eve learns the photon number and that the multi-photon components reveal $x$ completely. Then, by Eq.~\eqref{eq:pg_lambda} with $\lambda$ the photon number, $\Pgn\leq p_0+2p_1+n\,p_{\geq2}$, which exceeds $2$ by at most $(n-2)\,p_{\geq2}$.

As another example, consider side channels and Trojan-horse attacks. Write the single-photon component of the state as $\psi_x\otimes\tau_x$, where $\psi_x$ are arbitrary qubit states and $\tau_x$ collects the unintended degrees of freedom: spectral, temporal or spatial modes that depend on $x$, and the light returned in a Trojan-horse attack. If $F(\tau_x,\tau)\geq1-\epsilon$ for an $x$-independent state $\tau$, these add at most $n\sqrt\epsilon$ to $\Pgn$.
Indeed, a bound on the fidelity between the actual states and reference states of known guessing probability implies a leakage bound.

\begin{proposition}
\label{prop:fidelity}
Let $\rho_x$ be the states of the full emitted system and $\phi_x$ reference states on the same space with fidelity $F(\rho_x,\phi_x)=\|\sqrt{\rho_x}\sqrt{\phi_x}\|_1^2\geq1-\epsilon_x$. Then
\begin{equation}
    \label{eq:fidelity_bound}
    \Pgn\bigl(\{\rho_x\}\bigr)\leq\Pgn\bigl(\{\phi_x\}\bigr)+\sum_x\sqrt{\epsilon_x} .
\end{equation}
\end{proposition}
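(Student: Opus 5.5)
Let $\{N_x\}$ be a POVM attaining $\Pgn(\{\rho_x\})$ in Eq.~\eqref{eq:pgn}. Then
\begin{equation*}
\Pgn(\{\rho_x\})=\sum_x\Tr(\rho_xN_x)=\sum_x\Tr(\phi_xN_x)+\sum_x\Tr\bigl[(\rho_x-\phi_x)N_x\bigr].
\end{equation*}
The first sum is at most $\Pgn(\{\phi_x\})$, since $\{N_x\}$ is one admissible POVM for the reference ensemble. It therefore suffices to prove the termwise estimate $\Tr[(\rho_x-\phi_x)N_x]\leq\sqrt{\epsilon_x}$ for every $x$ and every operator $0\leq N_x\leq\id$. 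Summing these estimates gives Eq.~\eqref{eq:fidelity_bound}.

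For a single $x$, $\Tr[(\rho_x-\phi_x)N_x]$ is the difference in acceptance probability of the two-outcome measurement $\{N_x,\id-N_x\}$ on $\rho_x$ and on $\phi_x$. This difference is at most the trace distance $T(\rho_x,\phi_x)=\tfrac12\|\rho_x-\phi_x\|_1$. This is the Helstrom variational characterisation: $\max_{0\leq N\leq\id}\Tr[N(\rho-\phi)]$ equals the trace of the positive part of $\rho-\phi$, which is $T$ when both operators have unit trace. The Fuchs--van de Graaf inequality $T\leq\sqrt{1-F}$, with $F$ the squared fidelity as defined in the statement, then gives $T(\rho_x,\phi_x)\leq\sqrt{\epsilon_x}$.

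There is no real obstacle, only conventions to check. The fidelity in the statement is the squared Uhlmann fidelity, and with that convention Fuchs--van de Graaf reads exactly $T\leq\sqrt{1-F}$. If one prefers a self-contained argument, Uhlmann's theorem supplies purifications $\ket{\rho_x},\ket{\phi_x}$ with $|\braket{\rho_x|\phi_x}|^2=F$. The trace distance of two pure states is $\sqrt{1-|\braket{\cdot|\cdot}|^2}$, and monotonicity of trace distance under the partial trace gives the same bound.

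Only the positive part of $\rho_x-\phi_x$ enters, so the constant is $\sqrt{\epsilon_x}$ per input, not $2\sqrt{\epsilon_x}$. The same argument applies verbatim if the states carry a classically correlated $E_0$, as in Eq.~\eqref{eq:pg_lambda}, since nothing beyond unit trace and positivity is used.
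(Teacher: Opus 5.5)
Your proposal is correct and follows the paper's proof essentially step by step. You bound each $\Tr[(\rho_x-\phi_x)N_x]$ by the trace distance, apply Fuchs--van de Graaf, and use that $\{N_x\}$ is an admissible measurement for the reference ensemble. The only difference is cosmetic: you fix an optimal POVM at the outset, whereas the paper proves the bound for every POVM and then maximises, which avoids having to assume the maximum is attained.
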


\begin{proof}
For $0\leq N_x\leq\id$, $\Tr(\rho_xN_x)-\Tr(\phi_xN_x)\leq\tfrac12\|\rho_x-\phi_x\|_1\leq\sqrt{1-F(\rho_x,\phi_x)}\leq\sqrt{\epsilon_x}$, the second step being the Fuchs--van de Graaf inequality~\cite{FuchsvandeGraaf1999}. Hence, for every POVM $\{N_x\}$,
\begin{align*}
    \sum_x\Tr(\rho_xN_x)&\leq\sum_x\Tr(\phi_xN_x)+\sum_x\sqrt{\epsilon_x}\\
    &\leq\Pgn\bigl(\{\phi_x\}\bigr)+\sum_x\sqrt{\epsilon_x} ,
\end{align*}
where the last step holds because $\{N_x\}$ is a valid, though in general not optimal, measurement for the reference states. The right-hand side does not depend on $\{N_x\}$, so maximising the left-hand side over POVMs gives the claim.
\end{proof}

Applied to $\rho_x=\psi_x\otimes\tau_x$ with the reference states $\psi_x\otimes\tau$, for which $\Pgn\leq2$ by Eq.~\eqref{eq:pg_dim} since the $\psi_x$ are qubit states, it gives $\Pgn(\{\rho_x\})\leq2+n\sqrt\epsilon$, whatever the qubit states $\psi_x$ are. If several unintended degrees of freedom contribute, $\tau_x=\bigotimes_j\tau_x^{(j)}$ with $F(\tau_x^{(j)},\tau^{(j)})\geq1-\epsilon_j$, then $\epsilon\leq\sum_j\epsilon_j$, because the fidelity of product states is the product of the fidelities. 

For a Trojan-horse attack the natural reference state $\tau$ is the vacuum, and the relevant parameter is the mean photon number $\mu_{\mathrm{out}}$ of the light that returns from the laboratory. A returned coherent pulse has $\epsilon_{\mathrm{THA}}=1-e^{-\mu_{\mathrm{out}}}\leq\mu_{\mathrm{out}}$, and more generally any returned state of mean photon number at most $\mu_{\mathrm{out}}$ has vacuum probability, hence fidelity to the vacuum, at least $1-\mu_{\mathrm{out}}$. Proposition~\ref{prop:fidelity} then gives $\Pgn\leq2+n\sqrt{\mu_{\mathrm{out}}}$. Bounds on $\mu_{\mathrm{out}}$, from the isolation of the source and the strongest probe that Eve can send into it, are provided, e.g., in Refs.~\cite{Lucamarini2015,Tamaki2016,Sixto2025}. 

Proposition~\ref{prop:fidelity} is not limited to side channels of product form. For the protocols of this work, Eq.~\eqref{eq:pg_dim} gives $\Pgn(\{\phi_x\})\leq2$ for any qubit states $\phi_x$, not only for the target states. The reference ensemble may therefore be taken to be the qubit ensemble closest to the actual states, so that $\epsilon$ quantifies only the part of the emitted states that lies outside the qubit subspace. A deviation $\epsilon_x\leq\epsilon$ from the qubit subspace, in this sense, translates into a relaxation $\delta G\leq n\sqrt\epsilon$ of the bound.

Note that the bound of Proposition~\ref{prop:fidelity} is tight. Consider the four states of $\prot43$ with a side channel in the state $\tau$ for three of them and in a state $\tau'$ with $|\braket{\tau|\tau'}|^2=1-\epsilon$ for the fourth, say $\ket+$. Eve can discriminate $\tau$ from $\tau'$ on the side channel, which succeeds with probability $\tfrac12(1+\sqrt\epsilon)$~\cite{Helstrom1969}, and then measure the qubit in the $X$ basis if she found $\tau'$ and in the $Z$ basis otherwise. This guesses $x$ with $\Pgn=2+\sqrt\epsilon$, which is the bound. When all four side-channel states deviate from $\tau$, with $\ket{\tau_x}=\sqrt{1-\epsilon}\,\ket\tau+\sqrt\epsilon\,\ket{e_x}$ and the $\ket{e_x}$ orthonormal and orthogonal to $\ket\tau$, one finds $\Pgn=2+2\sqrt{\epsilon(2-\epsilon)}$,\footnote{The square-root measurement is optimal here. The Gram matrix $\Gamma$ of the four states satisfies $(\Gamma-\id)^2=(1-\epsilon)^2\id$, so that $\sqrt\Gamma$ has the constant diagonal $\alpha=\tfrac12(\sqrt{2-\epsilon}+\sqrt\epsilon)$. The square-root measurement then guesses $x$ with $\Pgn=4\alpha^2$, and $\sigma=\alpha\,(\sum_x\rho_x)^{1/2}$ satisfies $\sigma\geq\rho_x$ with $\Tr\sigma=4\alpha^2$, so that $\Pgn=4\alpha^2$ by Eq.~\eqref{eq:duality}.} against the bound $2+4\sqrt\epsilon$. The square-root dependence on $\epsilon$ is thus unavoidable.

Note also that fidelity bounds also survive the classical correlations of Sec.~\ref{sec:imperfect_corr} when the reference states are pure and independent of $\lambda$. Applying Proposition~\ref{prop:fidelity} for each $\lambda$ and using the concavity of the square root yields Eq.~\eqref{eq:fidelity_bound} with $\epsilon_x=1-\bra{\phi_x}\bar\rho_x\ket{\phi_x}$, which is determined by the average states because the fidelity to a pure state is linear.

\subsection{Entanglement with Eve}
\label{sec:entanglement}

The honest form~\eqref{eq:honest_form} assumes that Eve's prior system $E_0$ is not entangled with the source. Entanglement is different from classical correlations. The dense-coding example of Sec.~\ref{sec:ri} shows that the dimension bound does not survive it, since a qubit that is maximally entangled with a qubit of Eve can carry two bits about $x$. The characterisation of the source must therefore either exclude entanglement with the outside or bound it.

A fidelity bound to pure reference states survives entanglement between the source and $E_0$, because a state close to a pure state can be only weakly entangled with any other system. Let $F(\rho_x^Q,\phi_x)\geq1-\epsilon_x$ with $\phi_x$ pure, purify $\rho_x^{QE_0}$ into $\ket{\Psi_x}$ on $QE_0R$, and let $\ket{a_x}=(\bra{\phi_x}\otimes\id)\ket{\Psi_x}$, so that $\braket{a_x|a_x}=\bra{\phi_x}\rho_x^Q\ket{\phi_x}\geq1-\epsilon_x$. The marginal $\omega_x$ of $\ket{\Psi_x}$ on $E_0R$ satisfies $\omega_x\geq\ket{a_x}\bra{a_x}$, hence $F(\Psi_x,\phi_x\otimes\omega_x)=\bra{a_x}\omega_x\ket{a_x}\geq(1-\epsilon_x)^2$. Tracing out $R$ cannot decrease the fidelity, and $\Tr_R\,\omega_x=\rho^{E_0}$, so
\begin{equation}
    \label{eq:fid_E0}
    F\bigl(\rho_x^{QE_0},\phi_x\otimes\rho^{E_0}\bigr)\geq(1-\epsilon_x)^2 .
\end{equation}
Since $\rho^{E_0}$ does not depend on $x$, the reference states $\phi_x\otimes\rho^{E_0}$ have the rescaled guessing probability $\Pgn(\{\phi_x\})$, and Proposition~\ref{prop:fidelity} applied to $QE_0$ gives
\begin{equation}
    \label{eq:fidelity_bound_E0}
    \Pgn\bigl(\{\rho_x^{QE_0}\}\bigr)\leq\Pgn\bigl(\{\phi_x\}\bigr)+\sum_x\sqrt{\epsilon_x(2-\epsilon_x)},
\end{equation}
whatever the correlations between the source and Eve. Compared with Eq.~\eqref{eq:fidelity_bound}, prior entanglement costs at most a factor $\sqrt2$ in the correction. For the Trojan-horse attack of Sec.~\ref{sec:raise}, with the vacuum as the pure reference state for the returned light, whose vacuum probability is at least $1-\mu_{\mathrm{out}}$ for any probe, this replaces $\epsilon_{\mathrm{THA}}$ by $1-(1-\mu_{\mathrm{out}})^2\leq2\mu_{\mathrm{out}}$ when the probe is entangled with a system of Eve.

As a corollary, a bound on the leakage from $Q$ alone, together with a bound on the purity of the emitted states, bounds the leakage of $QE_0$, whatever Eve's prior system. Let $1-\epsilon_x$ be the largest eigenvalue of $\rho_x^Q$ and $\phi_x$ the corresponding eigenvector; a bound $\Tr(\rho_x^Q)^2\geq1-\epsilon_x$ on the purity implies this, since $\Tr\rho^2\leq\lambda_{\max}\Tr\rho$. Then $F(\rho_x^Q,\phi_x)=\bra{\phi_x}\rho_x^Q\ket{\phi_x}=1-\epsilon_x$, and Eq.~\eqref{eq:fidelity_bound_E0} applies with the pure reference states $\phi_x$. Its right-hand side involves $\Pgn(\{\phi_x\})$, the leakage of the eigenvectors, which are not characterised; we replace it by the leakage of the actual states. Since $\phi_x$ is an eigenvector, $\rho_x^Q=(1-\epsilon_x)\phi_x+\epsilon_x\chi_x$ with $\chi_x$ a state orthogonal to $\phi_x$, so that $\tfrac12\|\rho_x^Q-\phi_x\|_1=\epsilon_x$ and $\Tr(\phi_xN_x)-\Tr(\rho_x^QN_x)\leq\epsilon_x$ for every $0\leq N_x\leq\id$. Summing over $x$ and maximising over POVMs, as in the proof of Proposition~\ref{prop:fidelity}, gives $\Pgn(\{\phi_x\})\leq\Pgn(\{\rho_x^Q\})+\sum_x\epsilon_x$, and Eq.~\eqref{eq:fidelity_bound_E0} becomes
\begin{align}
    \Pgn\bigl(\{\rho_x^{QE_0}\}\bigr)\leq{}&\Pgn\bigl(\{\rho_x^Q\}\bigr) \nonumber\\
    &+\sum_x\Bigl(\epsilon_x+\sqrt{\epsilon_x(2-\epsilon_x)}\Bigr).
    \label{eq:purity_bound}
\end{align}
At $\epsilon_x=0$ the correction vanishes, as it must, since a pure $\rho_x^Q$ forces $\rho_x^{QE_0}=\rho_x^Q\otimes\rho^{E_0}$, so that the leakage of $QE_0$ is that of $Q$. For small $\epsilon_x$, the square-root dependence cannot be avoided either. If Alice applies one of the four Pauli operators to a qubit prepared in the state $\sqrt{1-\epsilon}\ket{00}+\sqrt\epsilon\ket{11}$ with a qubit held by Eve, then $\Pgn(\{\rho_x^Q\})=2(1-\epsilon)$, whereas $\Pgn(\{\rho_x^{QE_0}\})=2+4\sqrt{\epsilon(1-\epsilon)}$, which interpolates between the qubit value $2$ at $\epsilon=0$ and dense coding, $\Pgn=4$, at $\epsilon=1/2$. The same example shows that Eq.~\eqref{eq:fidelity_bound_E0} is tight up to a factor $\sqrt2$ in the correction: with the eigenvectors $\phi_x\in\{\ket0,\ket1\}$ as reference states, it gives $2+4\sqrt{\epsilon(2-\epsilon)}$ against the actual $2+4\sqrt{\epsilon(1-\epsilon)}$. Only mixedness that may come from entanglement with Eve matters here. Indeed, noise applied after the encoding through an $x$-independent channel cannot increase $\Pgn$, so Eq.~\eqref{eq:purity_bound} may be applied to the states before such noise.

\subsection{Final bound on the leakage}
\label{sec:physical}

To illustrate how the bounds of this section combine, consider a source designed to emit qubit states, and write $\delta G=G-G_{\mathrm{hon}}$, with $G_{\mathrm{hon}}=2$, for the relaxation of the declared bound. By Sec.~\ref{sec:imperfect_corr}, imperfections within the qubit subspace, such as modulation errors or depolarisation, fluctuations of the encoding known to Eve, and noise that acts after the encoding through an $x$-independent channel, cost nothing, provided that the qubit is not entangled with Eve. Only what takes the emitted system out of a qubit subspace in an $x$-dependent way contributes, and the two examples of Sec.~\ref{sec:raise} bound the contributions separately: multi-photon emissions add at most $(n-2)\,p_{\geq2}$, and side channels add at most $n\sqrt{\epsilon_{\mathrm{modes}}+\mu_{\mathrm{out}}}$, where $\epsilon_{\mathrm{modes}}$ bounds the mismatch between the unintended modes emitted for different inputs and $\mu_{\mathrm{out}}$ the light returned by a Trojan-horse probe, the two fidelities being combined as in Sec.~\ref{sec:raise}. Altogether,
\begin{equation}
    \label{eq:budget}
    \delta G\leq (n-2)\,p_{\geq2}+n\sqrt{\epsilon_{\mathrm{modes}}+\mu_{\mathrm{out}}} .
\end{equation}
For illustration, take a multi-photon probability $p_{\geq2}=10^{-3}$, a mismatch $\epsilon_{\mathrm{modes}}=10^{-5}$ between the modes emitted for different inputs, and $\mu_{\mathrm{out}}=10^{-6}$. This gives $\delta G\leq2\cdot10^{-3}+1.3\cdot10^{-2}\approx1.5\cdot10^{-2}$ for the four-state protocols, which we round up to $\delta G=0.02$, and $\delta G\leq2.4\cdot10^{-2}$ for $\prot63$, which we round up to $0.03$. With an entangled probe, $\mu_{\mathrm{out}}$ is doubled and these values become $1.6\cdot10^{-2}$ and $2.5\cdot10^{-2}$. The square root makes the side channels the dominant term, so they must be characterised to high precision. Every term must be an upper bound obtained from a characterisation of the source; none can be estimated from the observed statistics. Where a characterisation is uncertain, the corresponding term may be chosen conservatively. The price is a lower certified key rate, not a loss of security.

\subsection{Relation to other descriptions of the source}
\label{sec:otherassumptions}

Several semi-device-independent and implementation-security analyses characterise the source by a few parameters, such as its dimension, energy or state overlaps~\cite{onewaysdiqkd,Vanhimbeeck,Ioannou_2022}, or fidelities to reference states~\cite{Pereira2019,Pereira2023,CurrasLorenzo2025,CurrasLorenzo2025num,Pereira2025}. Each of them implies a leakage bound and can therefore serve as an input to our analysis. Indeed, a model under which a protocol is secure must exclude $\Pg=1$, since perfect disclosure of $x$ would let Eve learn the key without disturbing Bob's statistics, and therefore implies a non-trivial leakage bound~\cite{Pauwels2025,BraskReview2026}. Proposition~\ref{prop:fidelity} makes the bound explicit for the fidelity-based descriptions, including the Trojan-horse bounds of Refs.~\cite{Lucamarini2015,Tamaki2016,Sixto2025}. The bounds implied by assumptions on the dimension, the energy, the vacuum component or the overlaps of the emitted states are derived systematically in Ref.~\cite{Pauwels2025}. Our analysis retains only the resulting bound and the observed statistics. It discards the rest of the model, which is what makes it insensitive to the details of the imperfections, possibly at the price of a lower noise tolerance. In exchange, the proof remains valid under deviations from the model, such as a violated fidelity bound, as long as $\Pgn\leq G$ holds; $G$ can be chosen conservatively to cover uncertainties in the characterisation. Furthermore, unlike in several of these analyses, Bob's device remains uncharacterised.

\section{Bounding the conditional entropy}
\label{sec:framework}

We now construct the lower bound on $H(A|E\Theta)$ necessary to compute the key rate~\eqref{eq:dw}. The quantity to compute is
\begin{equation}
    \label{eq:target}
    \min\; H(A|E\Theta)_{\rho_{AE\Theta}}
\end{equation}
over all states $\rho_x$ on $\mathcal H_B\otimes\mathcal H_E$ of unrestricted
dimension and all POVMs $\{M_{b|y}\}$ on $\mathcal H_B$ that reproduce the
observed behaviour $p(b|x,y)$ and satisfy the leakage
bound~\eqref{eq:ri_be}, with $\rho_{AE\Theta}$ given by Eq.~\eqref{eq:rho_AE}.

\subsection{A polynomial lower bound on the entropy}
\label{sec:entropy}

Write the conditional entropy as a relative entropy,
\begin{equation}
    H(A|E\Theta)_{\rho_{AE\Theta}} = -D\bigl(\rho_{AE\Theta}\,\|\,\id_A\otimes\rho_{E\Theta}\bigr),
\end{equation}
with $D(\rho\|\varsigma)=\Tr(\rho\log\rho)-\Tr(\rho\log\varsigma)$. We use the
variational bound of Koßmann and
Schwonnek~\cite{koßmann2025reliableentropyestimationobserved}, which is built
on an integral representation of the relative
entropy~\cite{Frenkel2023,kossmann2024_optimization}. For a partition $0<t_1<\dots<t_r=1$ of the integration domain, the bound of Ref.~\cite{koßmann2025reliableentropyestimationobserved} reads
\begin{equation}
    \label{eq:entropy_bound}
\begin{split}
    H(A|E\Theta)_{\rho_{AE\Theta}} \geq \frac{1}{\ln 2}\Bigl[
        \Tr\bigl(\id_A\otimes\rho_{E\Theta} - \rho_{AE\Theta}\bigr) \\
        {}+ \sum_{k=0}^r \inf_{P_k}
        \Tr\bigl(P_k(-\alpha_k\rho_{AE\Theta}-\beta_k\,\id_A\otimes\rho_{E\Theta})\bigr)
    \Bigr],
\end{split}
\end{equation}
where the infima are over projectors, $P_k^2=P_k=P_k^\dagger$, acting on $A$ together with Eve's system, and the real coefficients $\alpha_k,\beta_k$ depend only on the partition. They are given in Appendix~\ref{app:coefficients}. The bound becomes tight as the partition is refined.

Two features of Eq.~\eqref{eq:entropy_bound} make it suitable here. It is a
polynomial of low degree in $\rho_{AE\Theta}$ and the projectors $P_k$, and it
requires no assumption on the purity of the preparations. The projectors $P_k$ act on Eve's system and
therefore commute with Bob's measurement operators,
\begin{equation}
    \label{eq:commutation}
    [P_k, M_{b|y}] = 0 .
\end{equation}
We use Ref.~\cite{koßmann2025reliableentropyestimationobserved} rather than
the Brown--Fawzi--Fawzi construction~\cite{brown2024device} because its
variational objects are projectors rather than general non-Hermitian
operators, which yields smaller moment matrices at the same level. We are not
aware of a general statement on which of the two converges faster. 

Substituting Eq.~\eqref{eq:rho_AE} into Eq.~\eqref{eq:entropy_bound} and using
the classical--quantum structure of $\rho_{AE\Theta}$ reduces the objective to a
polynomial in the states $\rho_x$ and a set of
projectors on Eve's system alone. The reduction is carried out in
Appendix~\ref{app:hae}; the resulting bound is
\begin{equation}
\label{eq:objective}
    H(A|E\Theta) \;\geq\; \frac{1}{\ln 2}\Bigl[\, 1 \,+\,
    \sum_{k=0}^{r}\inf \sum_{\theta=1}^{\nu} \Tr\bigl(\Omega_k^\theta\bigr) \Bigr] ,
\end{equation}
where the infimum is over the projectors, the states and Bob's measurements, and
\begin{equation}
\label{eq:objective_term}
\begin{split}
    \Omega_k^\theta = {}& -\frac{\alpha_k}{2\nu}
        \bigl(P_{0,k}^\theta \rho_{(\theta,0)} + P_{1,k}^\theta \rho_{(\theta,1)}\bigr) \\
    & -\frac{\beta_k}{2\nu}\bigl(P_{0,k}^\theta+P_{1,k}^\theta\bigr)
        \bigl(\rho_{(\theta,0)}+\rho_{(\theta,1)}\bigr) .
\end{split}
\end{equation}
Here $\theta=1,\dots,\nu$ runs over the key bases and $\rho_{(\theta,a)}$ is the state of the key input $(\theta,a)$ (Sec.~\ref{sec:generic}). The projectors $P_{a,k}^\theta$ acting on Eve's system carry the basis index $\theta$ (reflecting the fact that Eve's information is made of a quantum part $E$ and a classical part $\Theta$).

\subsection{Moment relaxation}
\label{sec:tracial}

Collecting Eqs.~\eqref{eq:sigma}, \eqref{eq:objective}, \eqref{eq:objective_term} and~\eqref{eq:commutation}, the bound on the
entropy is the optimal value of the non-commutative polynomial program
\begin{align}
    \inf_{\{\rho_x\},\sigma,\{M_{b|y}\},\{P^\theta_{a,k}\}}
    & \Tr\bigl(\Omega(\rho_x, P^\theta_{a,k})\bigr), \nonumber\\
    \text{s.t.}\quad
    & \rho_x \geq 0, \quad \Tr\rho_x = 1, \nonumber\\
    & \sigma-\rho_x \geq 0, \quad \Tr\sigma \leq G, \nonumber\\
    & M_{b|y}M_{b'|y} = \delta_{b,b'}M_{b|y}, \quad \textstyle\sum_b M_{b|y}=\id, \nonumber\\
    & \Tr(\rho_x M_{b|y}) = p(b|x,y), \nonumber\\
    & (P^\theta_{a,k})^2 = P^\theta_{a,k}, \quad [P^\theta_{a,k}, M_{b|y}] = 0 ,
    \label{eq:ncpop}
\end{align}
where $\Omega=\sum_{k,\theta}\Omega_k^\theta$ is the polynomial of
Eqs.~\eqref{eq:objective}--\eqref{eq:objective_term}, which involves only the key states and Eve's projectors; Bob's measurements enter through the constraints alone.\footnote{Ref.~\cite{Tavakoli2022} imposes in addition the constraints $\rho_x-\rho_x^2\geq0$ and $G\id-\sigma\geq0$. They are redundant, since $\rho_x\geq0$ and $\Tr\rho_x=1$ imply $\rho_x\leq\id$, and $\sigma\geq0$ and $\Tr\sigma\leq G$ imply $\sigma\leq G\id$, but they make the feasible set Archimedean, the standard condition under which tracial moment hierarchies converge~\cite{burgdorf2016optimization}. We drop them because the relaxation below retains only the moments with a single trace-class factor (Proposition~\ref{lem:onefactor}), on which $\rho_x-\rho_x^2\geq0$ reduces to $\rho_x\geq0$ and $G\id-\sigma\geq0$ contributes nothing. Convergence plays no role in the security claims, since each level is used only as a lower bound.} Bob's measurements are taken projective without loss of generality.

Problem~\eqref{eq:ncpop} is an instance of \emph{tracial} non-commutative
polynomial optimisation~\cite{PNA2010,burgdorf2016optimization}: the objective and the
constraints involve traces of polynomials rather than expectation values in a
fixed state. It can be solved through semidefinite relaxations \cite{Tavakoli2022,glty-kkbp} built from the moments $L(w)=\Tr(w)$ of words $w$ in the operators
\begin{equation}
    \label{eq:operators}
    \mathcal O = \bigl\{\{\rho_x\}_x,\ \sigma,\
                  \{M_{b|y}\}_{b,y},\ \{P_{a,k}^\theta\}_{\theta,a,k}\bigr\}\,.
\end{equation}
Such SDP relaxation impose that the moment matrix $\Gamma_{u,v}=L(u^\dagger v)$ and the localising matrices $\Gamma^{(g)}_{u,v}=L(u^\dagger g\,v)$, for
$g\in\{\rho_x,\ \sigma-\rho_x\}$, be positive semidefinite, with rows and columns indexed by chosen sets of monomials $u,v$ in the variables \eqref{eq:operators}. The remaining conditions of Eq.~\eqref{eq:ncpop} become linear
identities among the moments. These are: $L(\rho_x)=1$, $L(\sigma)\leq G$, $L(\rho_xM_{b|y})=p(b|x,y)$; the identities that follow from
projectivity of $M_{b|y}$ and $P_{a,k}^\theta$, from completeness
$\sum_b M_{b|y}=\id$, from the commutation
relations~\eqref{eq:commutation}, and from cyclicity of the trace. Any realisation yields moments satisfying all of these conditions, so minimising the objective -- a linear
functional of the moments -- is a semidefinite program whose value lower-bounds Eq.~\eqref{eq:ncpop}.

\emph{Moments with $k$ trace-class factors.}  The operators $\rho_x$ and $\sigma$ are the only ones in Eq.~\eqref{eq:operators} with a finite trace; we call them the trace-class operators, and we classify the moments $L(w)$ by the number $k$ of trace-class factors in the word $w$.

Only the moments with $k=1$ carry the data of the problem. Indeed, the objective is a sum of terms $\Tr(P^\theta_{a,k}\rho_{(\theta,a)})$ and the linear conditions are $\Tr\rho_x=1$, $\Tr\sigma\leq G$ and $\Tr(\rho_xM_{b|y})=p(b|x,y)$, all of the form $\Tr(\rho_xw)$ or $\Tr(\sigma w)$ with $w$ a word in Bob's and Eve's projectors. Furthermore, the algebraic identities among the moments, which come from projectivity, completeness, the commutation relations~\eqref{eq:commutation} and cyclicity of the trace, never relate moments with different values of $k$. The moments with $k=0$ or $k\geq2$ thus appear only as entries of the moment and localising matrices, that is, only through positivity constraints, and are otherwise unrelated to the $k=1$ moments. This suggests simplifying the relaxation by dropping the rows and columns of the moment and localising matrices that are indexed by monomials with $k=0$ or $k\geq2$.

\emph{The one-factor relaxation.} Let $W$ be a set of words in Bob's and Eve's projectors, and consider the program whose variables are the moments $L(\zeta w)$, with $\zeta\in\{\rho_x,\sigma\}$ and $w$ a word in Bob's and Eve's projectors, subject to the conditions $L(\rho_x)=1$, $L(\sigma)\leq G$ and $L(\rho_xM_{b|y})=p(b|x,y)$, to the identities among the moments that follow from projectivity, completeness and commutation, and to
\begin{equation}
    \label{eq:swnpa}
    \Gamma^{(\rho_x)}\succeq 0, \qquad \Gamma^{(\sigma-\rho_x)}\succeq0 \qquad (x=1,\dots,n),
\end{equation}
with rows and columns indexed by $W$. Here $\Gamma^{(\zeta)}_{u,v}=L(\zeta u^\dagger v)$ is the usual NPA moment matrix~\cite{Navascues2007,Navascues2008} of the words $W$, with the unnormalised operator $\zeta$ in place of the state. It coincides with the localising matrix of $\zeta$ defined above by cyclicity of the trace (up to the relabelling $u\to u^\dagger$ of its rows and columns).

Moment matrices indexed by the preparations $\rho_x$ were introduced for contextuality scenarios in Ref.~\cite{Chaturvedi2021}, and the one-factor relaxation of informationally restricted correlations, including the operator $\sigma$, appears in Ref.~\cite{Manna2024}. Every entry is finite for trace-class $\zeta$, so the relaxation is manifestly valid in any dimension, and it is much smaller than the tracial relaxation on the same words, since no moment with two or more trace-class factors appears. The next proposition shows that such one-factor relaxations are as strong as the full tracial relaxations.

\begin{proposition}
\label{lem:onefactor}
Let $W$ be a set of words in Bob's and Eve's projectors, and consider a tracial relaxation of Eq.~\eqref{eq:ncpop} whose localising matrices of $\rho_x$ and $\sigma-\rho_x$ are indexed by a set of monomials $V$ whose words without a trace-class factor are exactly those of $W$. Then its certified dual bound is equal to that of the one-factor relaxation~\eqref{eq:swnpa} on $W$. If in addition the tracial relaxation has no duality gap, then its optimal value is also equal to that of the one-factor relaxation.
\end{proposition}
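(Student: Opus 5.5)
The plan is to compare the two relaxations through their Lagrangian duals. The key observation is the one made just before the statement: the objective and all data constraints involve only moments with exactly one trace-class factor ($k=1$), and the algebraic identities never relate moments with different $k$.

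\emph{Step 1 (the one-factor relaxation is weaker).} First I would fix notation for the tracial relaxation. Its variables are the moments $L(w)$ for all words $w$ appearing in the moment matrix $\Gamma$ and the localising matrices $\Gamma^{(\rho_x)},\Gamma^{(\sigma-\rho_x)}$, all indexed by $V$. I would split $V=W\cup V'$, where $V'$ collects the monomials containing at least one trace-class factor. In $\Gamma^{(\zeta)}_{u,v}=L(u^\dagger\zeta v)$, the entries with $u,v\in W$ have $k=1$. These are exactly the entries of the one-factor matrix~\eqref{eq:swnpa}, using cyclicity of the trace and the relabelling $u\to u^\dagger$. All other entries of the localising matrices have $k\geq2$. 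Hence the $W\times W$ principal submatrices of the tracial localising matrices are the one-factor matrices. Any feasible tracial moment vector therefore restricts to a feasible one-factor vector with the same objective value. At the level of duals, a one-factor dual certificate $(Y^{(\zeta)},\text{multipliers})$ can be padded with zeros on the rows and columns indexed by $V'$, with the dual variable of $\Gamma$ set to zero. Stationarity for every $k\neq1$ moment then holds trivially, because those moments do not enter the objective or the linear constraints. Thus the tracial dual bound is at least the one-factor dual bound.

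\emph{Step 2 (converse: a tracial dual certificate restricts to a one-factor one).} Take a feasible tracial dual $(Z,Y^{(\zeta)},\lambda)$, where $Z\succeq0$ pairs with $\Gamma$, $Y^{(\zeta)}\succeq0$ pairs with the localising matrices, and $\lambda$ collects the multipliers of the linear constraints. Dual feasibility means that for each moment $L(w)$ the total coefficient vanishes: the objective coefficient, minus $\sum\langle Y^{(\zeta)},\partial\Gamma^{(\zeta)}/\partial L(w)\rangle$, minus $\langle Z,\partial\Gamma/\partial L(w)\rangle$, minus the linear-constraint terms, all modulo the identities. Since the identities preserve $k$, these equations decouple by $k$. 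For the $k=1$ moments, the contributions from $Y^{(\zeta)}$ come only from its $W\times W$ block. I would then show that $Z$ contributes nothing to them, i.e.\ that $Z$ vanishes on the blocks of $\Gamma$ where $L(u^\dagger v)$ has $k=1$ (pairs with $u\in W$, $v\in V'$ carrying one factor, and similar). If that holds, $(Y^{(\zeta)}|_{W\times W},\lambda)$ is a one-factor dual certificate with the same value, since principal submatrices of PSD matrices are PSD. This gives equality of the certified dual bounds.

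\emph{Main obstacle: controlling $Z$.} This is the step I expect to be hardest. My first attempt would be the following. The $k=0$ moments $L(u^\dagger v)$, $u,v\in W$, fill the $W\times W$ block $Z_{WW}$ and appear nowhere else: not in the objective, not in the linear data, and not in the localising matrices, which always contain a factor $\zeta$. They are therefore free variables, and stationarity forces $\langle Z_{WW},\partial\Gamma_{WW}/\partial L(w)\rangle=0$ for every $k=0$ moment $L(w)$. Since every diagonal entry $L(u^\dagger u)$ is such a moment, I expect to be able to conclude that $Z_{WW}$ is zero modulo the identities. The difficulty is that $Z_{WW}$ is pinned only modulo the identities, so I will need to check that each diagonal entry of $Z_{WW}$ receives its own free variable and is not merely constrained in a combination with off-diagonal entries. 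Granting $Z_{WW}=0$, positivity of $Z$ forces its off-diagonal blocks $Z_{W,V'}$ to vanish. Those are exactly the blocks carrying the $k=1$ entries of $\Gamma$, so $Z$ drops out of the $k=1$ stationarity equations, as required. I would apply the same zero-diagonal argument to the $k\geq2$ diagonal entries to make sure nothing leaks back.

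\emph{Step 3 (optimal values).} If the tracial relaxation has no duality gap, its optimal value equals its dual bound, which by Steps 1--2 equals the one-factor dual bound. Weak duality for the one-factor relaxation then sandwiches its optimal value between the common dual bound and, by Step 1, the tracial optimum. This gives equality of the optimal values.
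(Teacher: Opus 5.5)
Your overall plan matches the paper's: you decouple the dual-feasibility conditions by the number $k$ of trace-class factors, zero-pad a one-factor certificate to get $\beta\geq\beta_1$, restrict the localising duals to $W\times W$ for the converse, and use the weak-duality sandwich for the optimal values. The gap is the step you flag as the main obstacle, which is where the content of the proof lies. Your proposed route for it does not work.

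Write $Z$ in Gram form, with Gram polynomials $p_i$ and trace-free parts $p_{i,0}$. The $k=0$ stationarity conditions then say only that $\sum_ip_{i,0}^\dagger p_{i,0}\equiv0$ modulo projectivity, completeness, commutation and cyclicity. Diagonal entries do not get their own variables: $L(M_{0|y}^\dagger M_{0|y})=L(M_{0|y})=L(\id^\dagger M_{0|y})$. So a positive $Z_{uu}$ can be cancelled by negative off-diagonal entries in the condition for that moment, and a moment-by-moment check proves nothing.

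What is missing is a single linear functional on the trace-free words that respects all the identities and is faithful, i.e. strictly positive on $p^\dagger p$ for every nonzero $p$. The paper supplies it:
\begin{itemize}
\item rewrite Bob's and Eve's projectors as the $\pm1$ observables $M_y=M_{0|y}-M_{1|y}$ and $Q^\theta_{a,k}=2P^\theta_{a,k}-\id$;
\item the trace-free polynomials are then the group algebra of the group these involutions generate, with Bob's commuting with Eve's;
\item on it, $\tau(b)=b_e$, the coefficient of the identity element, is a trace with $\tau(b^\dagger b)=\sum_g|b_g|^2$.
\end{itemize}
Applying $\tau$ to $\sum_ip_{i,0}^\dagger p_{i,0}\equiv0$ gives $p_{i,0}=0$ in the algebra for every $i$.

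Even with this, your next step is misphrased. ``$Z_{WW}=0$ modulo the identities'' is not vanishing of the matrix, so the positivity argument that kills the off-diagonal blocks does not apply. For example, if $W$ contains $\id$, $M_{0|y}$ and $M_{1|y}$, the vector $c=(1,-1,-1)$ gives a nonzero block $cc^\dagger$ whose polynomial is zero by completeness. That block can then be extended to a PSD $Z$ with nonzero $Z_{W,V'}$.

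The correct conclusion comes from the Gram form. The only contributions of $Z$ to the $k=1$ sector are the cross terms $p_{i,0}^\dagger p_{i,1}+p_{i,1}^\dagger p_{i,0}$, and these vanish modulo the identities once $p_{i,0}=0$ in the algebra. Alternatively, first prune the trace-free monomials to a set that is linearly independent modulo the identities; after that your positivity argument works.

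Two smaller points. The moment matrix is indexed by its own set $U$, which the statement does not tie to $V$; the argument runs with the trace-free part of $U$ in place of $W$. The $k\geq2$ check you mention is unnecessary, since those moments never enter the $k=1$ equations.
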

\begin{proof}
Let $\alpha$, $\beta$ be the primal and dual values of the full tracial relaxation and $\alpha_1,\beta_1$ the primal and dual values of the one-factor relaxation.  We show below that the dual bounds coincide, $\beta=\beta_1$.

Furthermore, the tracial relaxation is at least as strong as the one-factor relaxation, since every feasible point of the former yields a feasible point of the latter with the same objective. Hence $\alpha_1\leq\alpha$. If, in addition, the tracial relaxation has no duality gap, then $\alpha=\beta$, and thus $\alpha = \beta=\beta_1\leq \alpha_1$, where the last inequality follows from weak duality. We thus deduce that $\alpha=\alpha_1$, and the two relaxations have the same optimal value.

To show  $\beta=\beta_1$, we first make the dual of the tracial relaxation explicit. Its objective is $L(\Omega)$, with $\Omega=\sum_{k,\theta}\Omega^\theta_k$ the polynomial of Eqs.~\eqref{eq:objective}--\eqref{eq:objective_term}. Let $U$ be the set of monomials indexing the moment matrix $\Gamma$ and $V$ the set of monomials indexing the localising matrices $\Gamma^{(g)}$. 
For a vector of coefficients $c=(c_u)$ indexed by $U$ and the polynomial $p=\sum_uc_uu$, one has $c^\dagger\Gamma c=\sum_{u,v}c_u^*L(u^\dagger v)c_v=L(p^\dagger p)$. Likewise, for coefficients indexed by $V$, $c^\dagger\Gamma^{(g)}c=L(p^\dagger gp)$. The positivity constraints of the relaxation thus state that $L(p^\dagger p)\geq0$ and $L(p^\dagger gp)\geq0$ for all such polynomials. A dual certificate of a bound $\beta$ consists of real multipliers $\lambda_x$, $\nu_{bxy}$ and $\mu\geq0$, and of polynomials $p_i$ with monomials in $U$ and $a_{xi}$, $b_{xi}$ with monomials in $V$, such that the polynomial
\begin{equation}\label{eq:certificate}
    h=\Omega-\sum_x\lambda_x\rho_x-\sum_{b,x,y}\nu_{bxy}\,\rho_xM_{b|y}+\mu\,\sigma
\end{equation}
satisfies
\begin{equation}\label{eq:certificate2}
    h\equiv\sum_ip_i^\dagger p_i+\sum_{x,i}a_{xi}^\dagger\rho_xa_{xi}+\sum_{x,i}b_{xi}^\dagger(\sigma-\rho_x)b_{xi},
\end{equation}
where $\equiv$ denotes equality of the two polynomials after reduction by the algebraic relations (projectivity, completeness and commutation) and cyclic reordering, that is, equality of $L$ on both sides for every $L$ that satisfies the identities of the relaxation. Applying a feasible $L$, the right-hand side of \eqref{eq:certificate2} is non-negative, thus $L(h)\geq0$, and since $L(\rho_x)=1$, $L(\rho_xM_{b|y})=p(b|x,y)$ and $\mu L(\sigma)\leq\mu G$, \eqref{eq:certificate} gives
\begin{equation*}
    L(\Omega)\;\geq\;\sum_x\lambda_x+\sum_{b,x,y}\nu_{bxy}\,p(b|x,y)-\mu G\;=:\;\beta .
\end{equation*}
The certified bound of the relaxation is the largest $\beta$ of this form. It is what the solver returns, and it is a valid lower bound on $L(\Omega)$, hence on the entropy, whether or not it equals the optimal value of the relaxation.

Every term of $h$ has exactly one trace-class factor, and the identities among the moments never change this number, so the identity holds separately for the terms with zero, one, and two or more trace-class factors.

We now show that the moments without a trace-class factor, such as $\Tr(\id)$, $\Tr(M_{b|y})$ or $\Tr(M_{b|y}P^\theta_{a,k})$, play no role in the certificate. These moments are infinite in general, since the Hilbert space is unrestricted; they are left as free variables of the relaxation, so that in particular $\Tr\id$ is not normalised. Write $p_i=p_{i,0}+p_{i,+}$, where $p_{i,0}$ collects the monomials of $p_i$ without a trace-class factor. The terms without a trace-class factor on the right-hand side are $\sum_ip_{i,0}^\dagger p_{i,0}$, and there are none on the left, so $\sum_ip_{i,0}^\dagger p_{i,0}\equiv0$ modulo the identities among the words in Bob's and Eve's projectors. 

This forces $p_{i,0}=0$, as we show by exhibiting a trace on the algebra of Bob's and Eve's projectors that vanishes only for the zero polynomial. This algebra is generated by the operators $M_y = M_{0|y} - M_{1|y}$ and $Q^\theta_{a,k} = 2 P^\theta_{a,k}-\id$, which are Hermitian and square to the identity, and satisfy the commutation relations $[M_y,Q^\theta_{a,k}]=0$. All relations satisfied by Bob's and Eve's operators are consequences of these, so  the polynomials in Bob's and Eve's projectors are exactly the elements $b=\sum_gb_g\,g$ of the group algebra of the group $\mathcal G$ generated by these operators, and the reduced words are the elements $g\in\mathcal G$.

 On this group algebra, let $\tau(b)=b_e$ be the coefficient of the identity element $e$. It is a trace, since $gh=e$ if and only if $hg=e$, so it is well defined modulo cyclic reordering. Furthermore, $b^\dagger b=\sum_{g,h}b_g^*b_h\,g^{-\dagger}h$ has identity coefficient $\sum_g|b_g|^2$ (since $g^{-\dagger}h=e$ if and only if $g=h$), so $\tau(b^\dagger b) = \sum_g|b_g|^2$ vanishes only for $b=0$. Applying $\tau$ to $\sum_ip_{i,0}^\dagger p_{i,0}\equiv0$ gives $\sum_i\tau(p_{i,0}^\dagger p_{i,0})=0$, a sum of non-negative terms, hence $p_{i,0}=0$ for every $i$. Every certificate thus vanishes on the rows and columns of the moment matrix indexed by monomials without a trace-class factor, and these rows can be deleted without changing the certified bound.

Since every monomial of the $p_i$ now contains a trace-class factor, every term $p_i^\dagger p_i$ has at least two. Collecting the terms with exactly one gives
\begin{equation*}
    h\equiv\sum_{x,i}a_{xi,0}^\dagger\rho_xa_{xi,0}+\sum_{x,i}b_{xi,0}^\dagger(\sigma-\rho_x)b_{xi,0},
\end{equation*}
where $a_{xi,0}$ and $b_{xi,0}$ are the parts of $a_{xi}$ and $b_{xi}$ without a trace-class factor, which lie in the span of $W$. But this is a certificate of the same bound $\beta$ for the one-factor relaxation~\eqref{eq:swnpa}, so that $\beta_1\geq\beta$. Conversely, a certificate for Eq.~\eqref{eq:swnpa} involves only polynomials in the span of $W\subseteq V$ and is therefore a certificate for the tracial relaxation, so that $\beta\geq\beta_1$. The two dual bounds thus coincide, $\beta=\beta_1$.
\end{proof}

The proposition relies on the states entering only linearly in the problem. If they were required to be pure, $\rho_x^2=\rho_x$, the moments with several trace-class factors would be tied to those with one, and the full tracial hierarchy would be needed; under the leakage bound, however, purity cannot be assumed (Sec.~\ref{sec:ri}). As in the hierarchy of Ref.~\cite{Tavakoli2022}, we do not claim convergence of the hierarchy to the exact value; each finite level is used only as a relaxation, which is sufficient to prove security.

\section{Results}
\label{sec:results}

The results of this section are computed with the reduced relaxation~\eqref{eq:swnpa} of Proposition~\ref{lem:onefactor}. For $\prot43$ and $\prot33$, $W$ consists of all words of length at most three in Bob's and Eve's projectors; for $\prot44$ and $\prot63$, of all words of length at most two together with the products $M_{0|y}M_{0|y'}P^\theta_{a,k}$. The entropy bound of Eq.~\eqref{eq:objective} is evaluated on a partition with $26$ nodes, and the term of each node is minimised separately, which is a further relaxation~\cite{brown2024device}. The semidefinite programs are solved with MOSEK, and the dual solution is converted into a certified lower bound. Every reported value is therefore a lower bound on the key rate. The longer words and finer partitions we tested lower the thresholds by at most $3\cdot10^{-3}$. Appendix~\ref{app:implementation} gives the details of the implementation, and Appendix~\ref{app:checks} its validation and further results.


\subsection{Noise model}
\label{sec:noise}

Imperfect state preparation and channel noise are described by the depolarising channel of visibility $v$, Eq.~\eqref{eq:depol}, and loss by the erasure channel of transmission $\eta$, Eq.~\eqref{eq:loss}, which appends a no-click outcome $\varnothing$ to each of Bob's measurements. 

The no-click outcome is treated differently in the two terms of
Eq.~\eqref{eq:dw}. In the reconciliation term $H(A|B)$ we keep $\varnothing$
as a separate outcome, since it is part of Bob's actual data and he uses it when correcting his string. Since Bob measures every key basis exactly, this gives $H(A|B)=1-\eta+\eta\,h\bigl((1-v)/2\bigr)$, where $h$ is the binary entropy. In the entropy term $H(A|E\Theta)$ we bin $\varnothing$ into
the outcome $b=0$ of each measurement, as in Sec.~\ref{sec:lp}, i.e.\ we use the two-outcome POVMs $\{\eta M_{0|y}+(1-\eta)\id,\ \eta M_{1|y}\}$. This removes one operator from the
semidefinite program. For the protocols of Table~\ref{tab:protocols}, binning leaves $\eta_{\mathrm{cl}}$ unchanged.

We keep the honest bound $G$ at the value of the noiseless target states, as appropriate when the noise arises in the channel. If the noise arises in the source, the honest value for the depolarised target states is smaller, $\Pgn=1+v$, and a correspondingly tighter bound could be declared. Unlike $G=2$, which holds for any qubit states by Eq.~\eqref{eq:pg_dim}, this tighter bound relies on the precise form of the states. This makes a large difference, as the visibility thresholds with $G=1+v$ become $v_c=0.727$ for $\prot43$ and $0.715$ for $\prot44$, instead of $0.852$ and $0.838$ at fixed $G$ (Appendix~\ref{app:checks}). These values rest on a stronger assumption, however.

\subsection{Key rates}
\label{sec:keyrates}

Figure~\ref{fig:keyrates} shows the key rate per key round, $K_\infty$, as a function of visibility at $\eta=1$ and as a function of detection efficiency at $v=1$, for $\prot43$, $\prot44$, $\prot63$ and $\prot33$ at the honest $G$. We plot $K_\infty$ rather than the rate per signal $R=\pkey K_\infty$. 
Figure~\ref{fig:dG} shows the thresholds when the declared bound is relaxed to $G+\delta G$, as functions of $\delta G$. Table~\ref{tab:thresholds} collects these thresholds for some values of $\delta G$. 

\begin{figure*}[tb]
    \centering
    \includegraphics[width=\textwidth]{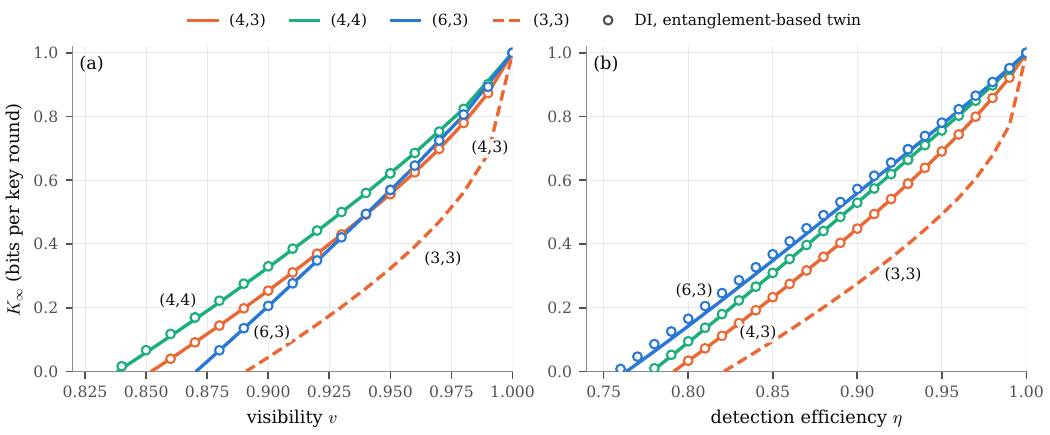}
    \caption{Certified key rate per key round, $K_\infty$, (a) versus visibility $v$ at unit detection efficiency and (b) versus detection efficiency $\eta$ at unit visibility, at the honest bound $G=2$. The curves are lower bounds under the leakage bound, computed as described in Appendix~\ref{app:implementation}; each ends at the threshold of Table~\ref{tab:thresholds}. Open circles are the device-independent rates of the entanglement-based twins of $\prot43$, $\prot44$ and $\prot63$, computed from the same statistics (Sec.~\ref{sec:comparison}). The exact DI rate upper-bounds what can be certified under the leakage bound. The circles are finite NPA relaxations of it, and where one lies below a curve, as for $\prot44$ and $\prot63$ near the ideal point, that relaxation has not converged.}
    \label{fig:keyrates}
\end{figure*}

\begin{table}[tb]
    \centering
    \begin{tabular}{lccccc}
        \toprule
        & \multicolumn{4}{c}{LB, $\delta G=$} & \\
        \cmidrule(lr){2-5}
        Protocol & $0$ & $0.04$ & $0.08$ & $0.12$ & DI twin \\
        \midrule
        \multicolumn{6}{l}{\textit{visibility} $v_c$ (at $\eta=1$)} \\
        $\prot43$ & $0.852$ & $0.865$ & $0.879$ & $0.891$ & $0.852$ \\
        $\prot44$ & $0.838$ & $0.852$ & $0.864$ & $0.875$ & $0.837$ \\
        $\prot63$ & $0.870$ & $0.880$ & $0.889$ & $0.897$ & $0.870$ \\
        $\prot33$ & $0.891$ & $0.905$ & $0.919$ & $0.930$ & --- \\
        \midrule
        \multicolumn{6}{l}{\textit{detection efficiency} $\eta_c$ (at $v=1$)} \\
        $\prot43$ & $0.791$ & $0.810$ & $0.827$ & $0.843$ & $0.791$ \\
        $\prot44$ & $0.779$ & $0.794$ & $0.808$ & $0.821$ & $0.778$ \\
        $\prot63$ & $0.764$ & $0.785$ & $0.803$ & $0.820$ & $0.758$ \\
        $\prot33$ & $0.821$ & $0.841$ & $0.860$ & $0.878$ & --- \\
        \bottomrule
    \end{tabular}
    \caption{Thresholds for a positive key rate: the visibility $v_c$ (upper block) and detection efficiency $\eta_c$ (lower block) below which the certified rate vanishes, when the declared bound is relaxed from the honest value $G$ to $G+\delta G$. At $v=\eta=1$ and $\delta G=0$ every protocol certifies $K_\infty\geq0.997$ bits per key round. The last column gives the thresholds of the device-independent bound for the entanglement-based twin (Sec.~\ref{sec:comparison}); no twin was computed for $\prot33$.}
    \label{tab:thresholds}
\end{table}

\begin{figure*}[tb]
    \centering
    \includegraphics[width=\textwidth]{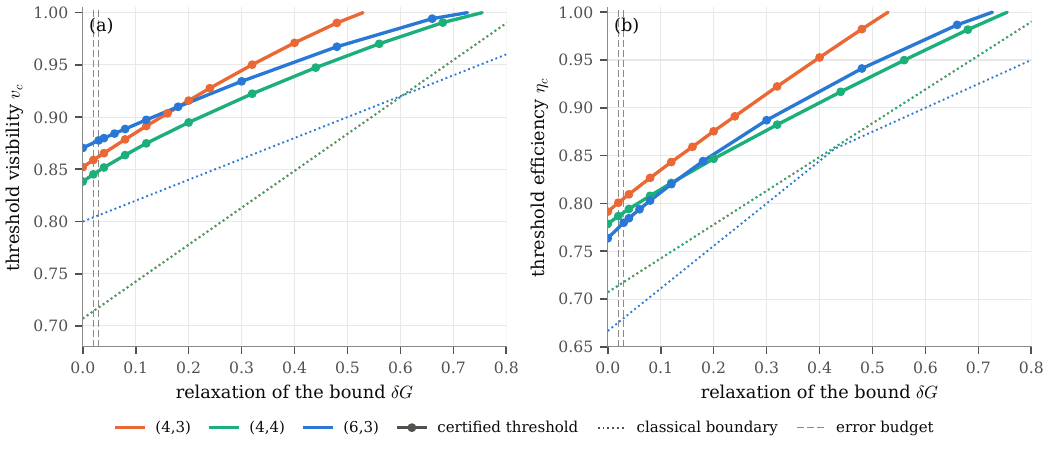}
    \caption{Thresholds for a positive key rate as functions of the relaxation $\delta G$ of the declared bound, $G=2+\delta G$: (a) visibility $v_c$ at $\eta=1$ and (b) detection efficiency $\eta_c$ at $v=1$, computed with the relaxations of Table~\ref{tab:thresholds}. Each curve ends at the largest $\delta G$ at which the ideal behaviour still certifies key. Dotted lines are the classical boundaries $\vcl$ and $\eta_{\mathrm{cl}}$, below which the behaviour is classical at $G=2+\delta G$; those of $\prot43$ and $\prot44$ coincide. The dashed lines mark the error budgets of Sec.~\ref{sec:physical}, $\delta G=0.02$ for the four-state protocols and $0.03$ for $\prot63$.}
    \label{fig:dG}
\end{figure*}

The following features stand out.

\emph{Ideal rate.} At $v=\eta=1$ and the honest $G$, every protocol certifies $K_\infty\geq0.997$, against the exact value of one bit established in Proposition~\ref{prop:ideal}. The small deficit is numerical.

\emph{Visibility.} The $\prot43$ protocol reaches $v_c=0.852$. The $\prot44$ protocol, which also generates key in the $X$ basis, reaches $v_c=0.838$, the best visibility threshold of our protocols. The two tilted measurements reproduce the random-access-code configuration, whose statistics are furthest from the classical set; a single tilted measurement is not enough (see below). Across our protocols, the visibility thresholds follow the ordering of the classical thresholds $\vcl$ of Table~\ref{tab:protocols}. Within the family $\cos\gamma\,Z\pm\sin\gamma\,X$, the choice $\gamma=\pi/4$ is optimal for $\prot43$ among the angles we tested (Appendix~\ref{app:checks}).

\emph{Loss.} The visibility threshold of the $(6,3)$ protocol, $v_c=0.870$, is higher than those of $\prot43$ and $\prot44$, but its efficiency threshold, $\eta_c=0.764$, is the lowest of all our protocols. The symmetric arrangement of the bases matters: the bases of $Z$, $X$ and $(Z+X)/\sqrt2$ give $v_c=0.892$ and $\eta_c=0.795$.

\emph{Three preparations.} The $\prot33$ protocol, which drops the state $\ket-$ from $\prot43$ and keeps its measurements, reaches $v_c=0.891$ and $\eta_c=0.821$, only $0.03$--$0.04$ above the thresholds of $\prot43$. Its rate, the dashed curve of Fig.~\ref{fig:keyrates}, however, falls much faster below the ideal point, and at $v=0.95$ or $\eta=0.95$ it certifies $0.32$ and $0.49$ bits, against $0.56$ and $0.69$ for $\prot43$. Three preparations thus suffice to generate a key, but the fourth state matters away from the ideal point.

\emph{Sensitivity to $G$.}  Relaxing the bound from one bit to $\log_2(2+\delta G)\approx1.08$ bits, i.e.\ $\delta G=0.12$, costs $0.039$, $0.037$ and $0.027$ in threshold visibility and $0.052$, $0.042$ and $0.056$ in threshold efficiency for $\prot43$, $\prot44$ and $\prot63$. The efficiency threshold of $\prot63$ rises fastest, so that its advantage in loss tolerance is lost beyond $\delta G\approx 0.13$, where $\prot44$ takes over. By Proposition~\ref{prop:fidelity}, a fidelity imperfection $\epsilon$ outside the qubit subspace gives $\delta G=n\sqrt\epsilon$, so that at equal $\epsilon$ the relaxation is larger for $\prot63$ than for the four-state protocols by the factor $3/2$. At the error budget of Sec.~\ref{sec:physical}, $\delta G=0.02$ for the four-state protocols and $0.03$ for $\prot63$, the thresholds are $v_c=0.859$, $0.845$ and $0.878$ and $\eta_c=0.801$, $0.787$ and $0.780$ for $\prot43$, $\prot44$ and $\prot63$, i.e.\ about $0.007$ above their values at $\delta G=0$ in visibility and $0.008$--$0.016$ in efficiency. A well-characterised source therefore costs little. Since $G$ must be justified by a trusted characterisation of the source rather than inferred from the observed data, the experimental requirement lies in the precision of that characterisation, in particular of the side channels, which enter through $n\sqrt\epsilon$. At $v=\eta=1$, key is certified up to $\delta G=0.53$, $0.76$ and $0.73$, where the curves of Fig.~\ref{fig:dG} end (Appendix~\ref{app:checks}), and the min-entropy bound vanishes at the same points. 

\emph{Loss is the main practical limitation.} The loss thresholds, $\eta_c\geq0.764$, correspond to at most about $1.2\,\mathrm{dB}$ of total loss, including detector inefficiency. As in DI QKD, this reflects the absence of any assumption on the receiver. Below $\eta_{\mathrm{cl}}$ the lossy behaviour admits a classical model (Table~\ref{tab:protocols}), and more generally, below a critical detection efficiency, the lossy measurements of an untrusted receiver become partially jointly measurable, which gives Eve a direct attack with classical side information~\cite{Masini2024,Lobo2026}. 

\emph{Other protocols.} We also computed the key rates of several variants (Appendix~\ref{app:checks}, Table~\ref{tab:others}). With a single tilted measurement instead of two, the three-state protocol only reaches $v_c=0.939$ and $\eta_c=0.906$. Likewise, BB84 supplemented by the single measurement $(Z+X)/\sqrt2$, keeping both BB84 key bases, gives only $v_c=0.919$ and $\eta_c=0.867$, against $0.838$ and $0.779$ for $\prot44$, which adds both tilted measurements. Robustness thus originates from the pair of tilted test measurements.

\subsection{Comparison with device-independent bounds}
\label{sec:comparison}

Each of our four-state and six-state protocols has an entanglement-based twin, in which Alice measures her half of a maximally entangled state in the bases whose eigenstates she prepares in the P\&M version, and Bob measures as before. For $\prot43$ the twin is the CHSH-based DI protocol of Ref.~\cite{PhysRevLett.98.230501}.

A DI attack on the twin induces a P\&M attack on the original protocol that respects the leakage bound. Let $\rho_{(\theta,a)}$ be the state of Bob's and Eve's systems conditioned on Alice's setting $\theta$ and outcome $a$. The observed statistics fix Alice's outcomes to be uniform, so $\tfrac12\sum_a\rho_{(\theta,a)}=\rho_{BE}$ for every $\theta$, and $\sigma=2\rho_{BE}$ satisfies Eq.~\eqref{eq:sigma} with $\Tr\sigma=2=G$. The induced P\&M attack thus reproduces the same statistics and the same key state at the honest $G$. Consequently, the device-independent key rate of the twin, computed from the same full statistics, is an upper bound on any rate that can be certified from those statistics under the leakage bound.

Section~\ref{sec:ideal} provides the converse at the ideal point. When a key basis is observed exactly, every attack under the honest bound is an attack on a virtual entanglement-based protocol with the same correlators and the same key state. When the key basis is not observed exactly, this reduction is not available. The leakage bound then allows a larger set of attacks than the twin, since it only requires $\sigma\geq\rho_x$ for each preparation, whereas an attack on the twin satisfies $\sigma=\rho_{(\theta,0)}+\rho_{(\theta,1)}$ for every basis $\theta$. We addressed numerically whether this additional freedom helps Eve.

We computed the DI rates of the twins with the same entropy bound, decoupled over the same nodes, and an NPA relaxation~\cite{Navascues2007,Navascues2008} (Appendix~\ref{app:checks}). They are shown as open circles in Fig.~\ref{fig:keyrates}, and their thresholds in the last column of Table~\ref{tab:thresholds}. For $\prot43$, the DI and LB thresholds coincide to within $5\cdot10^{-4}$, and the two rates agree to within $2\cdot10^{-5}$ bits along the whole visibility curve, and to within $10^{-3}$ bits along the efficiency curve. For $\prot44$, the DI thresholds, $v_c=0.837$ and $\eta_c=0.778$, lie $1.5\cdot10^{-3}$ and $10^{-3}$ below ours. Adding the products $M_{0|y}M_{0|y'}M_{0|y''}$ to our relaxation gives $v_c=0.8373$ and $\eta_c=0.7783$, which reduces the gaps to $5\cdot10^{-4}$ and $7\cdot10^{-4}$, so they mostly reflect the convergence of our relaxation. For $\prot63$, the visibility thresholds coincide, $v_c=0.870$, while the DI efficiency threshold, $\eta_c=0.758$, lies $0.006$ below ours. The products $M_{0|y}M_{0|y'}P^\theta_{a,k}$ already lowered our value by $0.019$, and we expect longer words to close part of this gap. Near the ideal point, the DI relaxations we used for $\prot44$ and $\prot63$ lie below our bounds, by up to $0.010$ and $0.009$ bits at $v=0.99$. A DI relaxation only lower-bounds the DI rate, so there it has not converged.

Thus, everywhere along the curves of Fig.~\ref{fig:keyrates}, for these protocols a bound of one bit on the information that leaves Alice's laboratory certifies, within the precision of our relaxations, as much key as a device-independent analysis of the entanglement-based twin (the largest remaining difference, $0.006$ in the efficiency threshold of $\prot63$, is at least partly due to the convergence of our relaxation). The coincidence of LB and DI bounds mirrors the observation of Ref.~\cite{Tavakoli2022} for randomness generation, where the one-bit leakage bound and the qubit bound certify the same randomness.

\subsection{Entropy as a function of the CHSH value}
\label{sec:chsh}

In the DI twin of $\prot43$, it is possible to certify entropy from the CHSH value alone, instead of the full statistics~\cite{PhysRevLett.98.230501,Pironio2009}:
\begin{equation}
    \label{eq:di_chsh}
    H(A|E)\geq1-h\Bigl(\frac12+\frac12\sqrt{(S/2)^2-1}\Bigr),\qquad S\geq2,
\end{equation}
where $h$ is the binary entropy. 

The same is true in the P\&M protocol under the leakage bound. We define
\begin{equation}
    \label{eq:FSG}
    F(S,G) = \inf\bigl\{H(A|E) : S_{\mathrm{CHSH}}\geq S,\ \Pgn\leq G\bigr\} ,
\end{equation}
where $S_{\mathrm{CHSH}}$ is the P\&M CHSH value of Eq.~\eqref{eq:chsh}, and where the infimum runs over all states, measurements and attacks that satisfy the two constraints. Bob's key measurement and all other statistics are left free. Since $\nu=1$ for $\prot43$, the basis register is trivial and $H(A|E\Theta)=H(A|E)$; we write the latter here. The key rate of $\prot43$ is then at least $F(S_{\mathrm{CHSH}},G)-H(A|B)$, which requires only the CHSH value and the statistics of the key rounds. We computed lower bounds on $F$ with the relaxation~\eqref{eq:swnpa} on the words of length at most four, in steps of $0.02$ in $S$ (Appendix~\ref{app:checks}). They are shown in Fig.~\ref{fig:chsh}.

\begin{figure*}[tb]
    \centering
    \includegraphics[width=\textwidth]{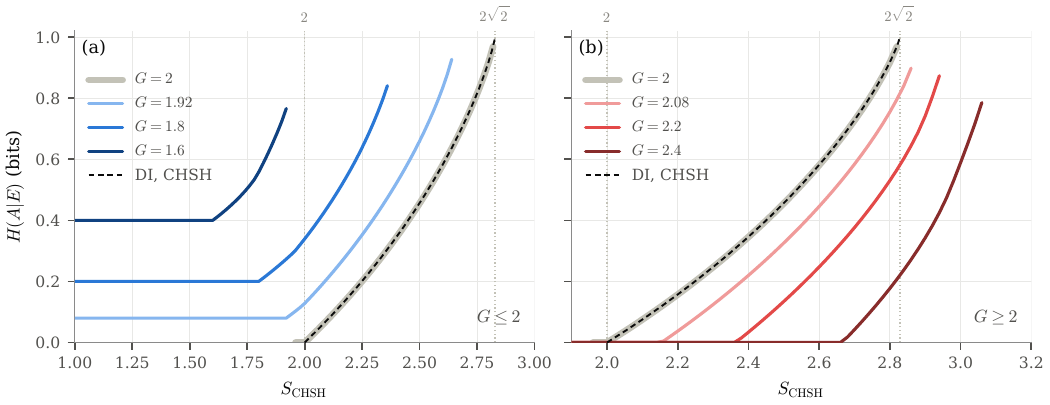}
    \caption{Lower bounds on $F(S,G)$, the conditional entropy  $H(A|E)$ certified by the CHSH value $S_{\mathrm{CHSH}}\geq S$ of Eq.~\eqref{eq:chsh} and the bound $\Pgn\leq G$, (a) for $G\leq2$ and (b) for $G\geq2$. The thick gray curve is $G=2$, and the dashed curve is the device-independent CHSH bound, Eq.~\eqref{eq:di_chsh}. Each curve ends at the largest value of $S$ compatible with $G$. The curves are computed with the relaxation~\eqref{eq:swnpa} on the words of length at most four and the $26$-node partition. Longer words and a finer partition change them by at most $1.4\cdot10^{-3}$ bits for $S\leq2\sqrt2$ (Appendix~\ref{app:checks}).}
    \label{fig:chsh}
\end{figure*}

At $G=2$ the computed bound coincides with the device-independent bound. For $2.1\leq S\leq2\sqrt2$, $F(S,2)$ lies below the right-hand side of Eq.~\eqref{eq:di_chsh} by at most $1.2\cdot10^{-3}$ bits with the $26$-node partition of Fig.~\ref{fig:chsh}, and by at most $5\cdot10^{-5}$ bits when the partition is refined to $120$ nodes (Appendix~\ref{app:checks}). Eve's guessing probability of the key bit likewise attains the device-independent value $\tfrac12+\tfrac12\sqrt{2-S^2/4}$~\cite{Pironio2010}, to within $5\cdot10^{-4}$. For the behaviours where the correlations in the key bases are perfect, this is a consequence of the reduction of Sec.~\ref{sec:ideal}, which shows that every attack under the leakage bound is an attack on a virtual entanglement-based protocol with the same correlators and the same key state. The numerics show that it remains true when the key basis is noisy. 

For $G\neq2$ the curves are not simple rescalings of Eq.~\eqref{eq:di_chsh}. Below $G=2$, entropy is certified even without a violation of the CHSH inequality, $S\leq2$. This is because the leakage bound alone limits Eve's knowledge of the key bit. Indeed, a guess of the key bit, together with the guess that the basis is $Z$, is a guess of $x$, so that $\Pg(A|E)\leq G/2<1$ whatever the statistics. In exchange, the largest CHSH value compatible with the bound decreases as $G$ is lowered, and falls below $2$ at $G=1.6$. Above $G=2$ the curves lie below the DI bound, and a CHSH violation no longer suffices, since the entropy vanishes for $S$ up to a threshold that grows with $G$ and, at $G=2.6$, already exceeds the value $2\sqrt2$ of the target behaviour.

Used as a key-rate formula for $\prot43$, $F$ loses little compared with the full statistics. At $\delta G=0.08$ it gives $v_c=0.883$ and $\eta_c=0.830$, against $0.879$ and $0.827$ in Table~\ref{tab:thresholds}.
\section{Conclusion}
\label{sec:conclusion}

We have shown that a bound on the information that the source emits about Alice's input is enough to prove security of suitably designed P\&M protocols. The leakage bound, a single bound on the probability of guessing Alice's input from everything that leaves her laboratory, yields security against collective attacks with an uncharacterised receiver and without any assumption on the dimension, energy or form of the emitted states. 

The best protocols we studied tolerate visibilities down to $v=0.84$ and detection efficiencies down to $\eta=0.76$, against $v=0.78$ for BB84 with a characterised receiver. Relative to device independence, little is lost as our bounds certify, within the precision of our relaxations, as much as the DI analysis along the whole curves, provided leakage is bounded by its honest value.

Several directions remain open. The analysis is asymptotic and restricted to collective attacks. The dual certificates of the semidefinite programs are the natural input to entropy accumulation~\cite{ArnonFriedman2018,Metger2024}, which in the DI setting extends security to coherent attacks and finite keys, but whether it applies here depends on how the leakage bound is formulated for a sequential protocol, where the emission of a round may depend on the past. Such a formulation is also the natural setting for sources with memory and correlations between rounds~\cite{Mizutani2019,Pereira2020}. On the experimental side, the entire analysis rests on the value of $G$, which calls for sources whose leakage is certified, for instance through the fidelity and Trojan-horse bounds of Refs.~\cite{Lucamarini2015,CurrasLorenzo2025}, and for optimised states and measurements. Finally, loss tolerance is the main obstacle to long-distance operation; combining the leakage bound with some structure on the measurement device, or with a measurement-device-independent architecture, may improve it.

\vspace{\baselineskip}\noindent\emph{Note added.} The results of this work were first presented at the Veriqtas workshop (talk, 25 March 2026) and at the QKD Summer School 2026 (poster, 16--21 August 2026). After this work was essentially completed, we became aware of Ref.~\cite{Chaturvedi2026}, which independently analyses the $\prot43$ protocol under the same bound, $\Pg\leq1/2$, and under several other operational source assumptions, and obtains min-entropy and von Neumann entropy bounds with a P\&M version of the method of Ref.~\cite{brown2024device}, using moment matrices weighted by the preparations as in Eq.~\eqref{eq:swnpa}. At $G=2$, Ref.~\cite{Chaturvedi2026} reports $v_c=0.8895$ from the min-entropy and $v_c=0.874$ from the von Neumann entropy; our min-entropy threshold coincides with the former, and our von Neumann threshold, $v_c=0.852$, is lower and equals the device-independent value. The difference may come from the entropy bound as well as from the relaxation. 

\vspace{\baselineskip}\noindent\emph{Use of AI tools.} The core analytical and numerical results of this work were obtained without the use of AI tools. AI tools were used in later stages to improve the manuscript, to write code for additional numerical examples and for the figures, and to explore how the leakage bound can be certified in practice (Sec.~\ref{sec:imperfect}). The authors have independently verified the numerical code and take full responsibility for all claims made in this work.

\vspace{\baselineskip}\noindent\emph{Acknowledgements.}
S.P., A.M. and M.B.J. acknowledge funding from the ToDiQT project within the QuantERA III Programme that has received funding from the European Union’s Horizon  research and innovation programme and the F.R.S-FNRS Pint-Multi programme under Grant Agreement R.8011.26, from the European Union’s Horizon Europe research and innovation programme under the project "Quantum Security Networks Partnership" (QSNP, grant agreement No 101114043), 
from the F.R.S-FNRS through the PDR T.0171.22,
from the FWO and F.R.S.-FNRS under the Excellence of Science (EOS) programme project 40007526,
from the FWO through the BeQuNet SBO project S008323N. A.T. is supported by the Swedish Research Council under Contract No. 2023-03498, the Swedish Foundation for Strategic Research, and the Knut and Alice Wallenberg Foundation through the Wallenberg Center for Quantum Technology (WACQT).

S.P. is a Research Director of the Fonds de la Recherche Scientifique – FNRS.

Funded by the European Union. Views and opinions expressed are however those of the authors only and do not necessarily reflect those of the European Union. Neither the European Union nor the granting authority can be held responsible for them.

\vspace{\baselineskip}\noindent\emph{Data availability}---The code and data that support the findings of this article are openly available~\cite{RIQKD} which uses PCPOP.jl~\cite{BermejoMoran2026PCPOP} in backend for building and solving the SDP relaxations.

\bibliography{Refs}

\begin{thebibliography}{81}%
\makeatletter
\providecommand \@ifxundefined [1]{%
 \@ifx{#1\undefined}
}%
\providecommand \@ifnum [1]{%
 \ifnum #1\expandafter \@firstoftwo
 \else \expandafter \@secondoftwo
 \fi
}%
\providecommand \@ifx [1]{%
 \ifx #1\expandafter \@firstoftwo
 \else \expandafter \@secondoftwo
 \fi
}%
\providecommand \natexlab [1]{#1}%
\providecommand \enquote  [1]{``#1''}%
\providecommand \bibnamefont  [1]{#1}%
\providecommand \bibfnamefont [1]{#1}%
\providecommand \citenamefont [1]{#1}%
\providecommand \href@noop [0]{\@secondoftwo}%
\providecommand \href [0]{\begingroup \@sanitize@url \@href}%
\providecommand \@href[1]{\@@startlink{#1}\@@href}%
\providecommand \@@href[1]{\endgroup#1\@@endlink}%
\providecommand \@sanitize@url [0]{\catcode `\\12\catcode `\$12\catcode `\&12\catcode `\#12\catcode `\^12\catcode `\_12\catcode `\%12\relax}%
\providecommand \@@startlink[1]{}%
\providecommand \@@endlink[0]{}%
\providecommand \url  [0]{\begingroup\@sanitize@url \@url }%
\providecommand \@url [1]{\endgroup\@href {#1}{\urlprefix }}%
\providecommand \urlprefix  [0]{URL }%
\providecommand \Eprint [0]{\href }%
\providecommand \doibase [0]{https://doi.org/}%
\providecommand \selectlanguage [0]{\@gobble}%
\providecommand \bibinfo  [0]{\@secondoftwo}%
\providecommand \bibfield  [0]{\@secondoftwo}%
\providecommand \translation [1]{[#1]}%
\providecommand \BibitemOpen [0]{}%
\providecommand \bibitemStop [0]{}%
\providecommand \bibitemNoStop [0]{.\EOS\space}%
\providecommand \EOS [0]{\spacefactor3000\relax}%
\providecommand \BibitemShut  [1]{\csname bibitem#1\endcsname}%
\let\auto@bib@innerbib\@empty
\bibitem [{\citenamefont {Bennett}\ and\ \citenamefont {Brassard}(1984)}]{BB84}%
  \BibitemOpen
  \bibfield  {author} {\bibinfo {author} {\bibfnamefont {C.~H.}\ \bibnamefont {Bennett}}\ and\ \bibinfo {author} {\bibfnamefont {G.}~\bibnamefont {Brassard}},\ }in\ \href@noop {} {\emph {\bibinfo {booktitle} {Proceedings of IEEE International Conference on Computers, Systems and Signal Processing}}}\ (\bibinfo {address} {Bangalore, India},\ \bibinfo {year} {1984})\ pp.\ \bibinfo {pages} {175--179}\BibitemShut {NoStop}%
\bibitem [{\citenamefont {Ekert}(1991)}]{Ekert91}%
  \BibitemOpen
  \bibfield  {author} {\bibinfo {author} {\bibfnamefont {A.~K.}\ \bibnamefont {Ekert}},\ }\href {https://doi.org/10.1103/PhysRevLett.67.661} {\bibfield  {journal} {\bibinfo  {journal} {Phys. Rev. Lett.}\ }\textbf {\bibinfo {volume} {67}},\ \bibinfo {pages} {661} (\bibinfo {year} {1991})}\BibitemShut {NoStop}%
\bibitem [{\citenamefont {Brassard}\ \emph {et~al.}(2000)\citenamefont {Brassard}, \citenamefont {L\"utkenhaus}, \citenamefont {Mor},\ and\ \citenamefont {Sanders}}]{PhysRevLett.85.1330}%
  \BibitemOpen
  \bibfield  {author} {\bibinfo {author} {\bibfnamefont {G.}~\bibnamefont {Brassard}}, \bibinfo {author} {\bibfnamefont {N.}~\bibnamefont {L\"utkenhaus}}, \bibinfo {author} {\bibfnamefont {T.}~\bibnamefont {Mor}},\ and\ \bibinfo {author} {\bibfnamefont {B.~C.}\ \bibnamefont {Sanders}},\ }\href {https://doi.org/10.1103/PhysRevLett.85.1330} {\bibfield  {journal} {\bibinfo  {journal} {Phys. Rev. Lett.}\ }\textbf {\bibinfo {volume} {85}},\ \bibinfo {pages} {1330} (\bibinfo {year} {2000})}\BibitemShut {NoStop}%
\bibitem [{\citenamefont {Lydersen}\ \emph {et~al.}(2010)\citenamefont {Lydersen}, \citenamefont {Wiechers}, \citenamefont {Wittmann}, \citenamefont {Elser}, \citenamefont {Skaar},\ and\ \citenamefont {Makarov}}]{Lydersen2010}%
  \BibitemOpen
  \bibfield  {author} {\bibinfo {author} {\bibfnamefont {L.}~\bibnamefont {Lydersen}}, \bibinfo {author} {\bibfnamefont {C.}~\bibnamefont {Wiechers}}, \bibinfo {author} {\bibfnamefont {C.}~\bibnamefont {Wittmann}}, \bibinfo {author} {\bibfnamefont {D.}~\bibnamefont {Elser}}, \bibinfo {author} {\bibfnamefont {J.}~\bibnamefont {Skaar}},\ and\ \bibinfo {author} {\bibfnamefont {V.}~\bibnamefont {Makarov}},\ }\href {https://doi.org/10.1038/nphoton.2010.214} {\bibfield  {journal} {\bibinfo  {journal} {Nature Photonics}\ }\textbf {\bibinfo {volume} {4}},\ \bibinfo {pages} {686} (\bibinfo {year} {2010})}\BibitemShut {NoStop}%
\bibitem [{\citenamefont {Xu}\ \emph {et~al.}(2020)\citenamefont {Xu}, \citenamefont {Ma}, \citenamefont {Zhang}, \citenamefont {Lo},\ and\ \citenamefont {Pan}}]{Xu2020}%
  \BibitemOpen
  \bibfield  {author} {\bibinfo {author} {\bibfnamefont {F.}~\bibnamefont {Xu}}, \bibinfo {author} {\bibfnamefont {X.}~\bibnamefont {Ma}}, \bibinfo {author} {\bibfnamefont {Q.}~\bibnamefont {Zhang}}, \bibinfo {author} {\bibfnamefont {H.-K.}\ \bibnamefont {Lo}},\ and\ \bibinfo {author} {\bibfnamefont {J.-W.}\ \bibnamefont {Pan}},\ }\href {https://doi.org/10.1103/RevModPhys.92.025002} {\bibfield  {journal} {\bibinfo  {journal} {Rev. Mod. Phys.}\ }\textbf {\bibinfo {volume} {92}},\ \bibinfo {pages} {025002} (\bibinfo {year} {2020})}\BibitemShut {NoStop}%
\bibitem [{\citenamefont {Zapatero}\ \emph {et~al.}(2025)\citenamefont {Zapatero}, \citenamefont {Navarrete},\ and\ \citenamefont {Curty}}]{Zapatero2025}%
  \BibitemOpen
  \bibfield  {author} {\bibinfo {author} {\bibfnamefont {V.}~\bibnamefont {Zapatero}}, \bibinfo {author} {\bibfnamefont {{\'A}.}~\bibnamefont {Navarrete}},\ and\ \bibinfo {author} {\bibfnamefont {M.}~\bibnamefont {Curty}},\ }\href {https://doi.org/10.1002/qute.202300380} {\bibfield  {journal} {\bibinfo  {journal} {Adv. Quantum Technol.}\ }\textbf {\bibinfo {volume} {8}},\ \bibinfo {pages} {2300380} (\bibinfo {year} {2025})}\BibitemShut {NoStop}%
\bibitem [{\citenamefont {Gottesman}\ \emph {et~al.}(2004)\citenamefont {Gottesman}, \citenamefont {Lo}, \citenamefont {Lütkenhaus},\ and\ \citenamefont {Preskill}}]{gottesman2004securityquantumkeydistribution}%
  \BibitemOpen
  \bibfield  {author} {\bibinfo {author} {\bibfnamefont {D.}~\bibnamefont {Gottesman}}, \bibinfo {author} {\bibfnamefont {H.-K.}\ \bibnamefont {Lo}}, \bibinfo {author} {\bibfnamefont {N.}~\bibnamefont {Lütkenhaus}},\ and\ \bibinfo {author} {\bibfnamefont {J.}~\bibnamefont {Preskill}},\ }\href {https://arxiv.org/abs/quant-ph/0212066} {\bibinfo {title} {Security of quantum key distribution with imperfect devices}} (\bibinfo {year} {2004}),\ \Eprint {https://arxiv.org/abs/quant-ph/0212066} {arXiv:quant-ph/0212066 [quant-ph]} \BibitemShut {NoStop}%
\bibitem [{\citenamefont {Koashi}\ and\ \citenamefont {Preskill}(2003)}]{KoashiPreskill2003}%
  \BibitemOpen
  \bibfield  {author} {\bibinfo {author} {\bibfnamefont {M.}~\bibnamefont {Koashi}}\ and\ \bibinfo {author} {\bibfnamefont {J.}~\bibnamefont {Preskill}},\ }\href {https://doi.org/10.1103/PhysRevLett.90.057902} {\bibfield  {journal} {\bibinfo  {journal} {Phys. Rev. Lett.}\ }\textbf {\bibinfo {volume} {90}},\ \bibinfo {pages} {057902} (\bibinfo {year} {2003})}\BibitemShut {NoStop}%
\bibitem [{\citenamefont {Lo}\ and\ \citenamefont {Preskill}(2007)}]{LoPreskill2007}%
  \BibitemOpen
  \bibfield  {author} {\bibinfo {author} {\bibfnamefont {H.-K.}\ \bibnamefont {Lo}}\ and\ \bibinfo {author} {\bibfnamefont {J.}~\bibnamefont {Preskill}},\ }\href@noop {} {\bibfield  {journal} {\bibinfo  {journal} {Quantum Inf. Comput.}\ }\textbf {\bibinfo {volume} {7}},\ \bibinfo {pages} {431} (\bibinfo {year} {2007})}\BibitemShut {NoStop}%
\bibitem [{\citenamefont {Mar{\o}y}\ \emph {et~al.}(2010)\citenamefont {Mar{\o}y}, \citenamefont {Lydersen},\ and\ \citenamefont {Skaar}}]{Maroy2010}%
  \BibitemOpen
  \bibfield  {author} {\bibinfo {author} {\bibfnamefont {{\O}.}~\bibnamefont {Mar{\o}y}}, \bibinfo {author} {\bibfnamefont {L.}~\bibnamefont {Lydersen}},\ and\ \bibinfo {author} {\bibfnamefont {J.}~\bibnamefont {Skaar}},\ }\href {https://doi.org/10.1103/PhysRevA.82.032337} {\bibfield  {journal} {\bibinfo  {journal} {Phys. Rev. A}\ }\textbf {\bibinfo {volume} {82}},\ \bibinfo {pages} {032337} (\bibinfo {year} {2010})}\BibitemShut {NoStop}%
\bibitem [{\citenamefont {Tamaki}\ \emph {et~al.}(2014)\citenamefont {Tamaki}, \citenamefont {Curty}, \citenamefont {Kato}, \citenamefont {Lo},\ and\ \citenamefont {Azuma}}]{Tamaki2014}%
  \BibitemOpen
  \bibfield  {author} {\bibinfo {author} {\bibfnamefont {K.}~\bibnamefont {Tamaki}}, \bibinfo {author} {\bibfnamefont {M.}~\bibnamefont {Curty}}, \bibinfo {author} {\bibfnamefont {G.}~\bibnamefont {Kato}}, \bibinfo {author} {\bibfnamefont {H.-K.}\ \bibnamefont {Lo}},\ and\ \bibinfo {author} {\bibfnamefont {K.}~\bibnamefont {Azuma}},\ }\href {https://doi.org/10.1103/PhysRevA.90.052314} {\bibfield  {journal} {\bibinfo  {journal} {Phys. Rev. A}\ }\textbf {\bibinfo {volume} {90}},\ \bibinfo {pages} {052314} (\bibinfo {year} {2014})}\BibitemShut {NoStop}%
\bibitem [{\citenamefont {Lucamarini}\ \emph {et~al.}(2015)\citenamefont {Lucamarini}, \citenamefont {Choi}, \citenamefont {Ward}, \citenamefont {Dynes}, \citenamefont {Yuan},\ and\ \citenamefont {Shields}}]{Lucamarini2015}%
  \BibitemOpen
  \bibfield  {author} {\bibinfo {author} {\bibfnamefont {M.}~\bibnamefont {Lucamarini}}, \bibinfo {author} {\bibfnamefont {I.}~\bibnamefont {Choi}}, \bibinfo {author} {\bibfnamefont {M.~B.}\ \bibnamefont {Ward}}, \bibinfo {author} {\bibfnamefont {J.~F.}\ \bibnamefont {Dynes}}, \bibinfo {author} {\bibfnamefont {Z.~L.}\ \bibnamefont {Yuan}},\ and\ \bibinfo {author} {\bibfnamefont {A.~J.}\ \bibnamefont {Shields}},\ }\href {https://doi.org/10.1103/PhysRevX.5.031030} {\bibfield  {journal} {\bibinfo  {journal} {Phys. Rev. X}\ }\textbf {\bibinfo {volume} {5}},\ \bibinfo {pages} {031030} (\bibinfo {year} {2015})}\BibitemShut {NoStop}%
\bibitem [{\citenamefont {Pereira}\ \emph {et~al.}(2019)\citenamefont {Pereira}, \citenamefont {Curty},\ and\ \citenamefont {Tamaki}}]{Pereira2019}%
  \BibitemOpen
  \bibfield  {author} {\bibinfo {author} {\bibfnamefont {M.}~\bibnamefont {Pereira}}, \bibinfo {author} {\bibfnamefont {M.}~\bibnamefont {Curty}},\ and\ \bibinfo {author} {\bibfnamefont {K.}~\bibnamefont {Tamaki}},\ }\href {https://doi.org/10.1038/s41534-019-0180-9} {\bibfield  {journal} {\bibinfo  {journal} {npj Quantum Inf.}\ }\textbf {\bibinfo {volume} {5}},\ \bibinfo {pages} {62} (\bibinfo {year} {2019})}\BibitemShut {NoStop}%
\bibitem [{\citenamefont {Pereira}\ \emph {et~al.}(2020)\citenamefont {Pereira}, \citenamefont {Kato}, \citenamefont {Mizutani}, \citenamefont {Curty},\ and\ \citenamefont {Tamaki}}]{Pereira2020}%
  \BibitemOpen
  \bibfield  {author} {\bibinfo {author} {\bibfnamefont {M.}~\bibnamefont {Pereira}}, \bibinfo {author} {\bibfnamefont {G.}~\bibnamefont {Kato}}, \bibinfo {author} {\bibfnamefont {A.}~\bibnamefont {Mizutani}}, \bibinfo {author} {\bibfnamefont {M.}~\bibnamefont {Curty}},\ and\ \bibinfo {author} {\bibfnamefont {K.}~\bibnamefont {Tamaki}},\ }\href {https://doi.org/10.1126/sciadv.aaz4487} {\bibfield  {journal} {\bibinfo  {journal} {Sci. Adv.}\ }\textbf {\bibinfo {volume} {6}},\ \bibinfo {pages} {eaaz4487} (\bibinfo {year} {2020})}\BibitemShut {NoStop}%
\bibitem [{\citenamefont {Zhang}\ \emph {et~al.}(2021)\citenamefont {Zhang}, \citenamefont {Coles}, \citenamefont {Winick}, \citenamefont {Lin},\ and\ \citenamefont {L{\"u}tkenhaus}}]{Zhang2021}%
  \BibitemOpen
  \bibfield  {author} {\bibinfo {author} {\bibfnamefont {Y.}~\bibnamefont {Zhang}}, \bibinfo {author} {\bibfnamefont {P.~J.}\ \bibnamefont {Coles}}, \bibinfo {author} {\bibfnamefont {A.}~\bibnamefont {Winick}}, \bibinfo {author} {\bibfnamefont {J.}~\bibnamefont {Lin}},\ and\ \bibinfo {author} {\bibfnamefont {N.}~\bibnamefont {L{\"u}tkenhaus}},\ }\href {https://doi.org/10.1103/PhysRevResearch.3.013076} {\bibfield  {journal} {\bibinfo  {journal} {Phys. Rev. Research}\ }\textbf {\bibinfo {volume} {3}},\ \bibinfo {pages} {013076} (\bibinfo {year} {2021})}\BibitemShut {NoStop}%
\bibitem [{\citenamefont {Nahar}\ \emph {et~al.}(2023)\citenamefont {Nahar}, \citenamefont {Upadhyaya},\ and\ \citenamefont {L{\"u}tkenhaus}}]{Nahar2023}%
  \BibitemOpen
  \bibfield  {author} {\bibinfo {author} {\bibfnamefont {S.}~\bibnamefont {Nahar}}, \bibinfo {author} {\bibfnamefont {T.}~\bibnamefont {Upadhyaya}},\ and\ \bibinfo {author} {\bibfnamefont {N.}~\bibnamefont {L{\"u}tkenhaus}},\ }\href {https://doi.org/10.1103/PhysRevApplied.20.064031} {\bibfield  {journal} {\bibinfo  {journal} {Phys. Rev. Applied}\ }\textbf {\bibinfo {volume} {20}},\ \bibinfo {pages} {064031} (\bibinfo {year} {2023})}\BibitemShut {NoStop}%
\bibitem [{\citenamefont {Curr{\'a}s-Lorenzo}\ \emph {et~al.}(2025{\natexlab{a}})\citenamefont {Curr{\'a}s-Lorenzo}, \citenamefont {Pereira}, \citenamefont {Kato}, \citenamefont {Curty},\ and\ \citenamefont {Tamaki}}]{CurrasLorenzo2025}%
  \BibitemOpen
  \bibfield  {author} {\bibinfo {author} {\bibfnamefont {G.}~\bibnamefont {Curr{\'a}s-Lorenzo}}, \bibinfo {author} {\bibfnamefont {M.}~\bibnamefont {Pereira}}, \bibinfo {author} {\bibfnamefont {G.}~\bibnamefont {Kato}}, \bibinfo {author} {\bibfnamefont {M.}~\bibnamefont {Curty}},\ and\ \bibinfo {author} {\bibfnamefont {K.}~\bibnamefont {Tamaki}},\ }\href {https://doi.org/10.1364/OPTICAQ.569424} {\bibfield  {journal} {\bibinfo  {journal} {Optica Quantum}\ }\textbf {\bibinfo {volume} {3}},\ \bibinfo {pages} {525} (\bibinfo {year} {2025}{\natexlab{a}})}\BibitemShut {NoStop}%
\bibitem [{\citenamefont {Ac\'{\i}n}\ \emph {et~al.}(2007)\citenamefont {Ac\'{\i}n}, \citenamefont {Brunner}, \citenamefont {Gisin}, \citenamefont {Massar}, \citenamefont {Pironio},\ and\ \citenamefont {Scarani}}]{PhysRevLett.98.230501}%
  \BibitemOpen
  \bibfield  {author} {\bibinfo {author} {\bibfnamefont {A.}~\bibnamefont {Ac\'{\i}n}}, \bibinfo {author} {\bibfnamefont {N.}~\bibnamefont {Brunner}}, \bibinfo {author} {\bibfnamefont {N.}~\bibnamefont {Gisin}}, \bibinfo {author} {\bibfnamefont {S.}~\bibnamefont {Massar}}, \bibinfo {author} {\bibfnamefont {S.}~\bibnamefont {Pironio}},\ and\ \bibinfo {author} {\bibfnamefont {V.}~\bibnamefont {Scarani}},\ }\href {https://doi.org/10.1103/PhysRevLett.98.230501} {\bibfield  {journal} {\bibinfo  {journal} {Phys. Rev. Lett.}\ }\textbf {\bibinfo {volume} {98}},\ \bibinfo {pages} {230501} (\bibinfo {year} {2007})}\BibitemShut {NoStop}%
\bibitem [{\citenamefont {Pironio}\ \emph {et~al.}(2009)\citenamefont {Pironio}, \citenamefont {Ac\'{\i}n}, \citenamefont {Brunner}, \citenamefont {Gisin}, \citenamefont {Massar},\ and\ \citenamefont {Scarani}}]{Pironio2009}%
  \BibitemOpen
  \bibfield  {author} {\bibinfo {author} {\bibfnamefont {S.}~\bibnamefont {Pironio}}, \bibinfo {author} {\bibfnamefont {A.}~\bibnamefont {Ac\'{\i}n}}, \bibinfo {author} {\bibfnamefont {N.}~\bibnamefont {Brunner}}, \bibinfo {author} {\bibfnamefont {N.}~\bibnamefont {Gisin}}, \bibinfo {author} {\bibfnamefont {S.}~\bibnamefont {Massar}},\ and\ \bibinfo {author} {\bibfnamefont {V.}~\bibnamefont {Scarani}},\ }\href {https://doi.org/10.1088/1367-2630/11/4/045021} {\bibfield  {journal} {\bibinfo  {journal} {New J. Phys.}\ }\textbf {\bibinfo {volume} {11}},\ \bibinfo {pages} {045021} (\bibinfo {year} {2009})}\BibitemShut {NoStop}%
\bibitem [{\citenamefont {Vazirani}\ and\ \citenamefont {Vidick}(2014)}]{PhysRevLett.113.140501}%
  \BibitemOpen
  \bibfield  {author} {\bibinfo {author} {\bibfnamefont {U.}~\bibnamefont {Vazirani}}\ and\ \bibinfo {author} {\bibfnamefont {T.}~\bibnamefont {Vidick}},\ }\href {https://doi.org/10.1103/PhysRevLett.113.140501} {\bibfield  {journal} {\bibinfo  {journal} {Phys. Rev. Lett.}\ }\textbf {\bibinfo {volume} {113}},\ \bibinfo {pages} {140501} (\bibinfo {year} {2014})}\BibitemShut {NoStop}%
\bibitem [{\citenamefont {Arnon-Friedman}\ \emph {et~al.}(2018)\citenamefont {Arnon-Friedman}, \citenamefont {Dupuis}, \citenamefont {Fawzi}, \citenamefont {Renner},\ and\ \citenamefont {Vidick}}]{ArnonFriedman2018}%
  \BibitemOpen
  \bibfield  {author} {\bibinfo {author} {\bibfnamefont {R.}~\bibnamefont {Arnon-Friedman}}, \bibinfo {author} {\bibfnamefont {F.}~\bibnamefont {Dupuis}}, \bibinfo {author} {\bibfnamefont {O.}~\bibnamefont {Fawzi}}, \bibinfo {author} {\bibfnamefont {R.}~\bibnamefont {Renner}},\ and\ \bibinfo {author} {\bibfnamefont {T.}~\bibnamefont {Vidick}},\ }\href {https://doi.org/10.1038/s41467-017-02307-4} {\bibfield  {journal} {\bibinfo  {journal} {Nat. Commun.}\ }\textbf {\bibinfo {volume} {9}},\ \bibinfo {pages} {459} (\bibinfo {year} {2018})}\BibitemShut {NoStop}%
\bibitem [{\citenamefont {Zapatero}\ \emph {et~al.}(2023)\citenamefont {Zapatero}, \citenamefont {van Leent}, \citenamefont {Arnon-Friedman}, \citenamefont {Liu}, \citenamefont {Zhang}, \citenamefont {Weinfurter},\ and\ \citenamefont {Curty}}]{Zapatero2023}%
  \BibitemOpen
  \bibfield  {author} {\bibinfo {author} {\bibfnamefont {V.}~\bibnamefont {Zapatero}}, \bibinfo {author} {\bibfnamefont {T.}~\bibnamefont {van Leent}}, \bibinfo {author} {\bibfnamefont {R.}~\bibnamefont {Arnon-Friedman}}, \bibinfo {author} {\bibfnamefont {W.-Z.}\ \bibnamefont {Liu}}, \bibinfo {author} {\bibfnamefont {Q.}~\bibnamefont {Zhang}}, \bibinfo {author} {\bibfnamefont {H.}~\bibnamefont {Weinfurter}},\ and\ \bibinfo {author} {\bibfnamefont {M.}~\bibnamefont {Curty}},\ }\href {https://doi.org/10.1038/s41534-023-00684-x} {\bibfield  {journal} {\bibinfo  {journal} {npj Quantum Information}\ }\textbf {\bibinfo {volume} {9}},\ \bibinfo {pages} {10} (\bibinfo {year} {2023})}\BibitemShut {NoStop}%
\bibitem [{\citenamefont {Paw\l{}owski}\ and\ \citenamefont {Brunner}(2011)}]{onewaysdiqkd}%
  \BibitemOpen
  \bibfield  {author} {\bibinfo {author} {\bibfnamefont {M.}~\bibnamefont {Paw\l{}owski}}\ and\ \bibinfo {author} {\bibfnamefont {N.}~\bibnamefont {Brunner}},\ }\href {https://doi.org/10.1103/PhysRevA.84.010302} {\bibfield  {journal} {\bibinfo  {journal} {Phys. Rev. A}\ }\textbf {\bibinfo {volume} {84}},\ \bibinfo {pages} {010302} (\bibinfo {year} {2011})}\BibitemShut {NoStop}%
\bibitem [{\citenamefont {Van~Himbeeck}\ \emph {et~al.}(2017)\citenamefont {Van~Himbeeck}, \citenamefont {Woodhead}, \citenamefont {Cerf}, \citenamefont {García-Patrón},\ and\ \citenamefont {Pironio}}]{Vanhimbeeck}%
  \BibitemOpen
  \bibfield  {author} {\bibinfo {author} {\bibfnamefont {T.}~\bibnamefont {Van~Himbeeck}}, \bibinfo {author} {\bibfnamefont {E.}~\bibnamefont {Woodhead}}, \bibinfo {author} {\bibfnamefont {N.~J.}\ \bibnamefont {Cerf}}, \bibinfo {author} {\bibfnamefont {R.}~\bibnamefont {García-Patrón}},\ and\ \bibinfo {author} {\bibfnamefont {S.}~\bibnamefont {Pironio}},\ }\href {https://doi.org/10.22331/q-2017-11-18-33} {\bibfield  {journal} {\bibinfo  {journal} {Quantum}\ }\textbf {\bibinfo {volume} {1}},\ \bibinfo {pages} {33} (\bibinfo {year} {2017})}\BibitemShut {NoStop}%
\bibitem [{\citenamefont {Ioannou}\ \emph {et~al.}(2022)\citenamefont {Ioannou}, \citenamefont {Sekatski}, \citenamefont {Abbott}, \citenamefont {Rosset}, \citenamefont {Bancal},\ and\ \citenamefont {Brunner}}]{Ioannou_2022}%
  \BibitemOpen
  \bibfield  {author} {\bibinfo {author} {\bibfnamefont {M.}~\bibnamefont {Ioannou}}, \bibinfo {author} {\bibfnamefont {P.}~\bibnamefont {Sekatski}}, \bibinfo {author} {\bibfnamefont {A.~A.}\ \bibnamefont {Abbott}}, \bibinfo {author} {\bibfnamefont {D.}~\bibnamefont {Rosset}}, \bibinfo {author} {\bibfnamefont {J.-D.}\ \bibnamefont {Bancal}},\ and\ \bibinfo {author} {\bibfnamefont {N.}~\bibnamefont {Brunner}},\ }\href {https://doi.org/10.1088/1367-2630/ac71bc} {\bibfield  {journal} {\bibinfo  {journal} {New Journal of Physics}\ }\textbf {\bibinfo {volume} {24}},\ \bibinfo {pages} {063006} (\bibinfo {year} {2022})}\BibitemShut {NoStop}%
\bibitem [{\citenamefont {Barrett}\ \emph {et~al.}(2013)\citenamefont {Barrett}, \citenamefont {Colbeck},\ and\ \citenamefont {Kent}}]{BarrettColbeckKent2013}%
  \BibitemOpen
  \bibfield  {author} {\bibinfo {author} {\bibfnamefont {J.}~\bibnamefont {Barrett}}, \bibinfo {author} {\bibfnamefont {R.}~\bibnamefont {Colbeck}},\ and\ \bibinfo {author} {\bibfnamefont {A.}~\bibnamefont {Kent}},\ }\href {https://doi.org/10.1103/PhysRevLett.110.010503} {\bibfield  {journal} {\bibinfo  {journal} {Phys. Rev. Lett.}\ }\textbf {\bibinfo {volume} {110}},\ \bibinfo {pages} {010503} (\bibinfo {year} {2013})}\BibitemShut {NoStop}%
\bibitem [{\citenamefont {Curty}\ and\ \citenamefont {Lo}(2019)}]{CurtyLo2019}%
  \BibitemOpen
  \bibfield  {author} {\bibinfo {author} {\bibfnamefont {M.}~\bibnamefont {Curty}}\ and\ \bibinfo {author} {\bibfnamefont {H.-K.}\ \bibnamefont {Lo}},\ }\href {https://doi.org/10.1038/s41534-019-0131-5} {\bibfield  {journal} {\bibinfo  {journal} {npj Quantum Inf.}\ }\textbf {\bibinfo {volume} {5}},\ \bibinfo {pages} {14} (\bibinfo {year} {2019})}\BibitemShut {NoStop}%
\bibitem [{\citenamefont {Tan}(2023)}]{Tan2023}%
  \BibitemOpen
  \bibfield  {author} {\bibinfo {author} {\bibfnamefont {E.~Y.-Z.}\ \bibnamefont {Tan}},\ }\href@noop {} {\bibinfo {title} {Robustness of implemented device-independent protocols against constrained leakage}} (\bibinfo {year} {2023}),\ \Eprint {https://arxiv.org/abs/2302.13928} {arXiv:2302.13928 [quant-ph]} \BibitemShut {NoStop}%
\bibitem [{\citenamefont {Zapatero}\ and\ \citenamefont {Curty}(2026)}]{ZapateroCurty2026}%
  \BibitemOpen
  \bibfield  {author} {\bibinfo {author} {\bibfnamefont {V.}~\bibnamefont {Zapatero}}\ and\ \bibinfo {author} {\bibfnamefont {M.}~\bibnamefont {Curty}},\ }\href@noop {} {\bibinfo {title} {Device-independent quantum cryptography with input leakage}} (\bibinfo {year} {2026}),\ \Eprint {https://arxiv.org/abs/2604.20573} {arXiv:2604.20573 [quant-ph]} \BibitemShut {NoStop}%
\bibitem [{\citenamefont {Tavakoli}\ \emph {et~al.}(2020)\citenamefont {Tavakoli}, \citenamefont {Zambrini~Cruzeiro}, \citenamefont {Brask}, \citenamefont {Gisin},\ and\ \citenamefont {Brunner}}]{Tavakoli2020}%
  \BibitemOpen
  \bibfield  {author} {\bibinfo {author} {\bibfnamefont {A.}~\bibnamefont {Tavakoli}}, \bibinfo {author} {\bibfnamefont {E.}~\bibnamefont {Zambrini~Cruzeiro}}, \bibinfo {author} {\bibfnamefont {J.~B.}\ \bibnamefont {Brask}}, \bibinfo {author} {\bibfnamefont {N.}~\bibnamefont {Gisin}},\ and\ \bibinfo {author} {\bibfnamefont {N.}~\bibnamefont {Brunner}},\ }\href {https://doi.org/10.22331/q-2020-09-24-332} {\bibfield  {journal} {\bibinfo  {journal} {Quantum}\ }\textbf {\bibinfo {volume} {4}},\ \bibinfo {pages} {332} (\bibinfo {year} {2020})}\BibitemShut {NoStop}%
\bibitem [{\citenamefont {Tavakoli}\ \emph {et~al.}(2022)\citenamefont {Tavakoli}, \citenamefont {Zambrini~Cruzeiro}, \citenamefont {Woodhead},\ and\ \citenamefont {Pironio}}]{Tavakoli2022}%
  \BibitemOpen
  \bibfield  {author} {\bibinfo {author} {\bibfnamefont {A.}~\bibnamefont {Tavakoli}}, \bibinfo {author} {\bibfnamefont {E.}~\bibnamefont {Zambrini~Cruzeiro}}, \bibinfo {author} {\bibfnamefont {E.}~\bibnamefont {Woodhead}},\ and\ \bibinfo {author} {\bibfnamefont {S.}~\bibnamefont {Pironio}},\ }\href {https://doi.org/10.22331/q-2022-01-05-620} {\bibfield  {journal} {\bibinfo  {journal} {{Q}uantum}\ }\textbf {\bibinfo {volume} {6}},\ \bibinfo {pages} {620} (\bibinfo {year} {2022})}\BibitemShut {NoStop}%
\bibitem [{\citenamefont {Acín}\ \emph {et~al.}(2006)\citenamefont {Acín}, \citenamefont {Gisin},\ and\ \citenamefont {Masanes}}]{acin2006from}%
  \BibitemOpen
  \bibfield  {author} {\bibinfo {author} {\bibfnamefont {A.}~\bibnamefont {Acín}}, \bibinfo {author} {\bibfnamefont {N.}~\bibnamefont {Gisin}},\ and\ \bibinfo {author} {\bibfnamefont {L.}~\bibnamefont {Masanes}},\ }\bibfield  {journal} {\bibinfo  {journal} {Physical Review Letters}\ }\textbf {\bibinfo {volume} {97}},\ \href {https://doi.org/10.1103/physrevlett.97.120405} {10.1103/physrevlett.97.120405} (\bibinfo {year} {2006})\BibitemShut {NoStop}%
\bibitem [{\citenamefont {Bennett}(1992)}]{B92}%
  \BibitemOpen
  \bibfield  {author} {\bibinfo {author} {\bibfnamefont {C.~H.}\ \bibnamefont {Bennett}},\ }\href {https://doi.org/10.1103/PhysRevLett.68.3121} {\bibfield  {journal} {\bibinfo  {journal} {Phys. Rev. Lett.}\ }\textbf {\bibinfo {volume} {68}},\ \bibinfo {pages} {3121} (\bibinfo {year} {1992})}\BibitemShut {NoStop}%
\bibitem [{\citenamefont {Koßmann}\ and\ \citenamefont {Schwonnek}(2025)}]{koßmann2025reliableentropyestimationobserved}%
  \BibitemOpen
  \bibfield  {author} {\bibinfo {author} {\bibfnamefont {G.}~\bibnamefont {Koßmann}}\ and\ \bibinfo {author} {\bibfnamefont {R.}~\bibnamefont {Schwonnek}},\ }\href {https://arxiv.org/abs/2411.04858} {\bibinfo {title} {Reliable entropy estimation from observed statistics for device-independent quantum cryptography}} (\bibinfo {year} {2025}),\ \Eprint {https://arxiv.org/abs/2411.04858} {arXiv:2411.04858 [quant-ph]} \BibitemShut {NoStop}%
\bibitem [{\citenamefont {Ko{\ss}mann}\ and\ \citenamefont {Schwonnek}(2024)}]{kossmann2024_optimization}%
  \BibitemOpen
  \bibfield  {author} {\bibinfo {author} {\bibfnamefont {G.}~\bibnamefont {Ko{\ss}mann}}\ and\ \bibinfo {author} {\bibfnamefont {R.}~\bibnamefont {Schwonnek}},\ }\href@noop {} {\bibfield  {journal} {\bibinfo  {journal} {arXiv:2404.17016}\ } (\bibinfo {year} {2024})}\BibitemShut {NoStop}%
\bibitem [{\citenamefont {Frenkel}(2023)}]{Frenkel2023}%
  \BibitemOpen
  \bibfield  {author} {\bibinfo {author} {\bibfnamefont {P.~E.}\ \bibnamefont {Frenkel}},\ }\href@noop {} {\bibfield  {journal} {\bibinfo  {journal} {Quantum}\ }\textbf {\bibinfo {volume} {7}},\ \bibinfo {pages} {1102} (\bibinfo {year} {2023})}\BibitemShut {NoStop}%
\bibitem [{\citenamefont {Chaves}\ \emph {et~al.}(2015)\citenamefont {Chaves}, \citenamefont {Brask},\ and\ \citenamefont {Brunner}}]{PhysRevLett.115.110501}%
  \BibitemOpen
  \bibfield  {author} {\bibinfo {author} {\bibfnamefont {R.}~\bibnamefont {Chaves}}, \bibinfo {author} {\bibfnamefont {J.~B.}\ \bibnamefont {Brask}},\ and\ \bibinfo {author} {\bibfnamefont {N.}~\bibnamefont {Brunner}},\ }\href {https://doi.org/10.1103/PhysRevLett.115.110501} {\bibfield  {journal} {\bibinfo  {journal} {Phys. Rev. Lett.}\ }\textbf {\bibinfo {volume} {115}},\ \bibinfo {pages} {110501} (\bibinfo {year} {2015})}\BibitemShut {NoStop}%
\bibitem [{\citenamefont {Tavakoli}\ \emph {et~al.}(2021)\citenamefont {Tavakoli}, \citenamefont {Pauwels}, \citenamefont {Woodhead},\ and\ \citenamefont {Pironio}}]{tavakoli2021correlations}%
  \BibitemOpen
  \bibfield  {author} {\bibinfo {author} {\bibfnamefont {A.}~\bibnamefont {Tavakoli}}, \bibinfo {author} {\bibfnamefont {J.}~\bibnamefont {Pauwels}}, \bibinfo {author} {\bibfnamefont {E.}~\bibnamefont {Woodhead}},\ and\ \bibinfo {author} {\bibfnamefont {S.}~\bibnamefont {Pironio}},\ }\href@noop {} {\bibfield  {journal} {\bibinfo  {journal} {PRX Quantum}\ }\textbf {\bibinfo {volume} {2}},\ \bibinfo {pages} {040357} (\bibinfo {year} {2021})}\BibitemShut {NoStop}%
\bibitem [{\citenamefont {Holevo}(1973)}]{Holevo1973}%
  \BibitemOpen
  \bibfield  {author} {\bibinfo {author} {\bibfnamefont {A.~S.}\ \bibnamefont {Holevo}},\ }\href@noop {} {\bibfield  {journal} {\bibinfo  {journal} {Journal of Multivariate Analysis}\ }\textbf {\bibinfo {volume} {3}},\ \bibinfo {pages} {337} (\bibinfo {year} {1973})}\BibitemShut {NoStop}%
\bibitem [{\citenamefont {Nielsen}\ and\ \citenamefont {Chuang}(2010)}]{Nielsen_Chuang_2010}%
  \BibitemOpen
  \bibfield  {author} {\bibinfo {author} {\bibfnamefont {M.~A.}\ \bibnamefont {Nielsen}}\ and\ \bibinfo {author} {\bibfnamefont {I.~L.}\ \bibnamefont {Chuang}},\ }\href@noop {} {\emph {\bibinfo {title} {Quantum Computation and Quantum Information: 10th Anniversary Edition}}}\ (\bibinfo  {publisher} {Cambridge University Press},\ \bibinfo {year} {2010})\BibitemShut {NoStop}%
\bibitem [{\citenamefont {Wang}\ \emph {et~al.}(2019)\citenamefont {Wang}, \citenamefont {Primaatmaja}, \citenamefont {Lavie}, \citenamefont {Varvitsiotis},\ and\ \citenamefont {Lim}}]{Wang2019}%
  \BibitemOpen
  \bibfield  {author} {\bibinfo {author} {\bibfnamefont {Y.}~\bibnamefont {Wang}}, \bibinfo {author} {\bibfnamefont {I.~W.}\ \bibnamefont {Primaatmaja}}, \bibinfo {author} {\bibfnamefont {E.}~\bibnamefont {Lavie}}, \bibinfo {author} {\bibfnamefont {A.}~\bibnamefont {Varvitsiotis}},\ and\ \bibinfo {author} {\bibfnamefont {C.~C.~W.}\ \bibnamefont {Lim}},\ }\href {https://doi.org/10.1038/s41534-019-0133-3} {\bibfield  {journal} {\bibinfo  {journal} {npj Quantum Information}\ }\textbf {\bibinfo {volume} {5}},\ \bibinfo {pages} {17} (\bibinfo {year} {2019})}\BibitemShut {NoStop}%
\bibitem [{\citenamefont {Yuen}\ \emph {et~al.}(1975)\citenamefont {Yuen}, \citenamefont {Kennedy},\ and\ \citenamefont {Lax}}]{YuenKennedyLax1975}%
  \BibitemOpen
  \bibfield  {author} {\bibinfo {author} {\bibfnamefont {H.}~\bibnamefont {Yuen}}, \bibinfo {author} {\bibfnamefont {R.}~\bibnamefont {Kennedy}},\ and\ \bibinfo {author} {\bibfnamefont {M.}~\bibnamefont {Lax}},\ }\href@noop {} {\bibfield  {journal} {\bibinfo  {journal} {IEEE Transactions on Information Theory}\ }\textbf {\bibinfo {volume} {21}},\ \bibinfo {pages} {125} (\bibinfo {year} {1975})}\BibitemShut {NoStop}%
\bibitem [{\citenamefont {Watrous}(2018)}]{Watrous2018}%
  \BibitemOpen
  \bibfield  {author} {\bibinfo {author} {\bibfnamefont {J.}~\bibnamefont {Watrous}},\ }\href@noop {} {\emph {\bibinfo {title} {The Theory of Quantum Information}}}\ (\bibinfo  {publisher} {Cambridge University Press},\ \bibinfo {year} {2018})\BibitemShut {NoStop}%
\bibitem [{\citenamefont {Pauwels}\ \emph {et~al.}(2025)\citenamefont {Pauwels}, \citenamefont {Pironio},\ and\ \citenamefont {Tavakoli}}]{Pauwels2025}%
  \BibitemOpen
  \bibfield  {author} {\bibinfo {author} {\bibfnamefont {J.}~\bibnamefont {Pauwels}}, \bibinfo {author} {\bibfnamefont {S.}~\bibnamefont {Pironio}},\ and\ \bibinfo {author} {\bibfnamefont {A.}~\bibnamefont {Tavakoli}},\ }\href {https://doi.org/10.22331/q-2025-02-18-1637} {\bibfield  {journal} {\bibinfo  {journal} {Quantum}\ }\textbf {\bibinfo {volume} {9}},\ \bibinfo {pages} {1637} (\bibinfo {year} {2025})}\BibitemShut {NoStop}%
\bibitem [{\citenamefont {Wolf}(2021)}]{wolf2021quantum}%
  \BibitemOpen
  \bibfield  {author} {\bibinfo {author} {\bibfnamefont {R.}~\bibnamefont {Wolf}},\ }\href {https://doi.org/10.1007/978-3-030-73991-1} {\emph {\bibinfo {title} {Quantum Key Distribution: An Introduction with Exercises}}},\ \bibinfo {series} {Lecture Notes in Physics}, Vol.\ \bibinfo {volume} {988}\ (\bibinfo  {publisher} {Springer Cham},\ \bibinfo {year} {2021})\BibitemShut {NoStop}%
\bibitem [{\citenamefont {Devetak}\ and\ \citenamefont {Winter}(2005)}]{DevetakWinter2005}%
  \BibitemOpen
  \bibfield  {author} {\bibinfo {author} {\bibfnamefont {I.}~\bibnamefont {Devetak}}\ and\ \bibinfo {author} {\bibfnamefont {A.}~\bibnamefont {Winter}},\ }\href {https://doi.org/10.1098/rspa.2004.1372} {\bibfield  {journal} {\bibinfo  {journal} {Proceedings of the Royal Society A: Mathematical, Physical and Engineering Sciences}\ }\textbf {\bibinfo {volume} {461}},\ \bibinfo {pages} {207} (\bibinfo {year} {2005})}\BibitemShut {NoStop}%
\bibitem [{\citenamefont {Curr{\'a}s-Lorenzo}\ \emph {et~al.}(2025{\natexlab{b}})\citenamefont {Curr{\'a}s-Lorenzo}, \citenamefont {Navarrete}, \citenamefont {N{\'u}{\~n}ez-Bon}, \citenamefont {Pereira},\ and\ \citenamefont {Curty}}]{CurrasLorenzo2025num}%
  \BibitemOpen
  \bibfield  {author} {\bibinfo {author} {\bibfnamefont {G.}~\bibnamefont {Curr{\'a}s-Lorenzo}}, \bibinfo {author} {\bibfnamefont {{\'A}.}~\bibnamefont {Navarrete}}, \bibinfo {author} {\bibfnamefont {J.}~\bibnamefont {N{\'u}{\~n}ez-Bon}}, \bibinfo {author} {\bibfnamefont {M.}~\bibnamefont {Pereira}},\ and\ \bibinfo {author} {\bibfnamefont {M.}~\bibnamefont {Curty}},\ }\href {https://doi.org/10.1088/2058-9565/addae1} {\bibfield  {journal} {\bibinfo  {journal} {Quantum Sci. Technol.}\ }\textbf {\bibinfo {volume} {10}},\ \bibinfo {pages} {035031} (\bibinfo {year} {2025}{\natexlab{b}})}\BibitemShut {NoStop}%
\bibitem [{\citenamefont {Pereira}\ \emph {et~al.}(2025)\citenamefont {Pereira}, \citenamefont {Curr{\'a}s-Lorenzo},\ and\ \citenamefont {Ara{\'u}jo}}]{Pereira2025}%
  \BibitemOpen
  \bibfield  {author} {\bibinfo {author} {\bibfnamefont {M.}~\bibnamefont {Pereira}}, \bibinfo {author} {\bibfnamefont {G.}~\bibnamefont {Curr{\'a}s-Lorenzo}},\ and\ \bibinfo {author} {\bibfnamefont {M.}~\bibnamefont {Ara{\'u}jo}},\ }\href@noop {} {\bibinfo {title} {Optimal key rates for quantum key distribution with partial source characterization}} (\bibinfo {year} {2025}),\ \Eprint {https://arxiv.org/abs/2510.13085} {arXiv:2510.13085 [quant-ph]} \BibitemShut {NoStop}%
\bibitem [{\citenamefont {Maurer}\ and\ \citenamefont {Wolf}(1999)}]{MaurerWolf1999}%
  \BibitemOpen
  \bibfield  {author} {\bibinfo {author} {\bibfnamefont {U.~M.}\ \bibnamefont {Maurer}}\ and\ \bibinfo {author} {\bibfnamefont {S.}~\bibnamefont {Wolf}},\ }\href {https://doi.org/10.1109/18.748999} {\bibfield  {journal} {\bibinfo  {journal} {IEEE Trans. Inf. Theory}\ }\textbf {\bibinfo {volume} {45}},\ \bibinfo {pages} {499} (\bibinfo {year} {1999})}\BibitemShut {NoStop}%
\bibitem [{\citenamefont {Clauser}\ \emph {et~al.}(1969)\citenamefont {Clauser}, \citenamefont {Horne}, \citenamefont {Shimony},\ and\ \citenamefont {Holt}}]{Clauser1969}%
  \BibitemOpen
  \bibfield  {author} {\bibinfo {author} {\bibfnamefont {J.~F.}\ \bibnamefont {Clauser}}, \bibinfo {author} {\bibfnamefont {M.~A.}\ \bibnamefont {Horne}}, \bibinfo {author} {\bibfnamefont {A.}~\bibnamefont {Shimony}},\ and\ \bibinfo {author} {\bibfnamefont {R.~A.}\ \bibnamefont {Holt}},\ }\href {https://doi.org/10.1103/PhysRevLett.23.880} {\bibfield  {journal} {\bibinfo  {journal} {Phys. Rev. Lett.}\ }\textbf {\bibinfo {volume} {23}},\ \bibinfo {pages} {880} (\bibinfo {year} {1969})}\BibitemShut {NoStop}%
\bibitem [{\citenamefont {Nayak}(1999)}]{814608}%
  \BibitemOpen
  \bibfield  {author} {\bibinfo {author} {\bibfnamefont {A.}~\bibnamefont {Nayak}},\ }in\ \href {https://doi.org/10.1109/SFFCS.1999.814608} {\emph {\bibinfo {booktitle} {40th Annual Symposium on Foundations of Computer Science (Cat. No.99CB37039)}}}\ (\bibinfo {year} {1999})\ pp.\ \bibinfo {pages} {369--376}\BibitemShut {NoStop}%
\bibitem [{\citenamefont {Tavakoli}\ \emph {et~al.}(2015)\citenamefont {Tavakoli}, \citenamefont {Hameedi}, \citenamefont {Marques},\ and\ \citenamefont {Bourennane}}]{PhysRevLett.114.170502}%
  \BibitemOpen
  \bibfield  {author} {\bibinfo {author} {\bibfnamefont {A.}~\bibnamefont {Tavakoli}}, \bibinfo {author} {\bibfnamefont {A.}~\bibnamefont {Hameedi}}, \bibinfo {author} {\bibfnamefont {B.}~\bibnamefont {Marques}},\ and\ \bibinfo {author} {\bibfnamefont {M.}~\bibnamefont {Bourennane}},\ }\href {https://doi.org/10.1103/PhysRevLett.114.170502} {\bibfield  {journal} {\bibinfo  {journal} {Phys. Rev. Lett.}\ }\textbf {\bibinfo {volume} {114}},\ \bibinfo {pages} {170502} (\bibinfo {year} {2015})}\BibitemShut {NoStop}%
\bibitem [{\citenamefont {Woodhead}\ \emph {et~al.}(2013)\citenamefont {Woodhead}, \citenamefont {Lim},\ and\ \citenamefont {Pironio}}]{erik_2013}%
  \BibitemOpen
  \bibfield  {author} {\bibinfo {author} {\bibfnamefont {E.}~\bibnamefont {Woodhead}}, \bibinfo {author} {\bibfnamefont {C.~C.~W.}\ \bibnamefont {Lim}},\ and\ \bibinfo {author} {\bibfnamefont {S.}~\bibnamefont {Pironio}},\ }in\ \href@noop {} {\emph {\bibinfo {booktitle} {Theory of Quantum Computation, Communication, and Cryptography}}},\ \bibinfo {editor} {edited by\ \bibinfo {editor} {\bibfnamefont {K.}~\bibnamefont {Iwama}}, \bibinfo {editor} {\bibfnamefont {Y.}~\bibnamefont {Kawano}},\ and\ \bibinfo {editor} {\bibfnamefont {M.}~\bibnamefont {Murao}}}\ (\bibinfo  {publisher} {Springer Berlin Heidelberg},\ \bibinfo {address} {Berlin, Heidelberg},\ \bibinfo {year} {2013})\ pp.\ \bibinfo {pages} {107--115}\BibitemShut {NoStop}%
\bibitem [{\citenamefont {Woodhead}\ and\ \citenamefont {Pironio}(2015)}]{PhysRevLett.115.150501}%
  \BibitemOpen
  \bibfield  {author} {\bibinfo {author} {\bibfnamefont {E.}~\bibnamefont {Woodhead}}\ and\ \bibinfo {author} {\bibfnamefont {S.}~\bibnamefont {Pironio}},\ }\href {https://doi.org/10.1103/PhysRevLett.115.150501} {\bibfield  {journal} {\bibinfo  {journal} {Phys. Rev. Lett.}\ }\textbf {\bibinfo {volume} {115}},\ \bibinfo {pages} {150501} (\bibinfo {year} {2015})}\BibitemShut {NoStop}%
\bibitem [{\citenamefont {Manna}\ \emph {et~al.}(2024)\citenamefont {Manna}, \citenamefont {Chaturvedi},\ and\ \citenamefont {Saha}}]{Manna2024}%
  \BibitemOpen
  \bibfield  {author} {\bibinfo {author} {\bibfnamefont {S.}~\bibnamefont {Manna}}, \bibinfo {author} {\bibfnamefont {A.}~\bibnamefont {Chaturvedi}},\ and\ \bibinfo {author} {\bibfnamefont {D.}~\bibnamefont {Saha}},\ }\href {https://doi.org/10.1103/PhysRevResearch.6.043269} {\bibfield  {journal} {\bibinfo  {journal} {Phys. Rev. Res.}\ }\textbf {\bibinfo {volume} {6}},\ \bibinfo {pages} {043269} (\bibinfo {year} {2024})}\BibitemShut {NoStop}%
\bibitem [{\citenamefont {Woodhead}(2016)}]{Woodhead_2016}%
  \BibitemOpen
  \bibfield  {author} {\bibinfo {author} {\bibfnamefont {E.}~\bibnamefont {Woodhead}},\ }\href {https://doi.org/10.1088/1367-2630/18/5/055010} {\bibfield  {journal} {\bibinfo  {journal} {New Journal of Physics}\ }\textbf {\bibinfo {volume} {18}},\ \bibinfo {pages} {055010} (\bibinfo {year} {2016})}\BibitemShut {NoStop}%
\bibitem [{\citenamefont {Popescu}\ and\ \citenamefont {Rohrlich}(1992)}]{PopescuRohrlich1992}%
  \BibitemOpen
  \bibfield  {author} {\bibinfo {author} {\bibfnamefont {S.}~\bibnamefont {Popescu}}\ and\ \bibinfo {author} {\bibfnamefont {D.}~\bibnamefont {Rohrlich}},\ }\href {https://doi.org/10.1016/0375-9601(92)90819-8} {\bibfield  {journal} {\bibinfo  {journal} {Phys. Lett. A}\ }\textbf {\bibinfo {volume} {169}},\ \bibinfo {pages} {411} (\bibinfo {year} {1992})}\BibitemShut {NoStop}%
\bibitem [{\citenamefont {Mayers}\ and\ \citenamefont {Yao}(2004)}]{MayersYao2004}%
  \BibitemOpen
  \bibfield  {author} {\bibinfo {author} {\bibfnamefont {D.}~\bibnamefont {Mayers}}\ and\ \bibinfo {author} {\bibfnamefont {A.}~\bibnamefont {Yao}},\ }\href {https://doi.org/10.26421/QIC4.4-3} {\bibfield  {journal} {\bibinfo  {journal} {Quantum Inf. Comput.}\ }\textbf {\bibinfo {volume} {4}},\ \bibinfo {pages} {273} (\bibinfo {year} {2004})}\BibitemShut {NoStop}%
\bibitem [{\citenamefont {McKague}\ \emph {et~al.}(2012)\citenamefont {McKague}, \citenamefont {Yang},\ and\ \citenamefont {Scarani}}]{McKague2012}%
  \BibitemOpen
  \bibfield  {author} {\bibinfo {author} {\bibfnamefont {M.}~\bibnamefont {McKague}}, \bibinfo {author} {\bibfnamefont {T.~H.}\ \bibnamefont {Yang}},\ and\ \bibinfo {author} {\bibfnamefont {V.}~\bibnamefont {Scarani}},\ }\href {https://doi.org/10.1088/1751-8113/45/45/455304} {\bibfield  {journal} {\bibinfo  {journal} {J. Phys. A: Math. Theor.}\ }\textbf {\bibinfo {volume} {45}},\ \bibinfo {pages} {455304} (\bibinfo {year} {2012})}\BibitemShut {NoStop}%
\bibitem [{\citenamefont {{\v{S}}upi{\'c}}\ and\ \citenamefont {Bowles}(2020)}]{Supic2020}%
  \BibitemOpen
  \bibfield  {author} {\bibinfo {author} {\bibfnamefont {I.}~\bibnamefont {{\v{S}}upi{\'c}}}\ and\ \bibinfo {author} {\bibfnamefont {J.}~\bibnamefont {Bowles}},\ }\href {https://doi.org/10.22331/q-2020-09-30-337} {\bibfield  {journal} {\bibinfo  {journal} {Quantum}\ }\textbf {\bibinfo {volume} {4}},\ \bibinfo {pages} {337} (\bibinfo {year} {2020})}\BibitemShut {NoStop}%
\bibitem [{\citenamefont {Mizutani}\ \emph {et~al.}(2019)\citenamefont {Mizutani}, \citenamefont {Kato}, \citenamefont {Azuma}, \citenamefont {Curty}, \citenamefont {Ikuta}, \citenamefont {Yamamoto}, \citenamefont {Imoto}, \citenamefont {Lo},\ and\ \citenamefont {Tamaki}}]{Mizutani2019}%
  \BibitemOpen
  \bibfield  {author} {\bibinfo {author} {\bibfnamefont {A.}~\bibnamefont {Mizutani}}, \bibinfo {author} {\bibfnamefont {G.}~\bibnamefont {Kato}}, \bibinfo {author} {\bibfnamefont {K.}~\bibnamefont {Azuma}}, \bibinfo {author} {\bibfnamefont {M.}~\bibnamefont {Curty}}, \bibinfo {author} {\bibfnamefont {R.}~\bibnamefont {Ikuta}}, \bibinfo {author} {\bibfnamefont {T.}~\bibnamefont {Yamamoto}}, \bibinfo {author} {\bibfnamefont {N.}~\bibnamefont {Imoto}}, \bibinfo {author} {\bibfnamefont {H.-K.}\ \bibnamefont {Lo}},\ and\ \bibinfo {author} {\bibfnamefont {K.}~\bibnamefont {Tamaki}},\ }\href {https://doi.org/10.1038/s41534-018-0122-y} {\bibfield  {journal} {\bibinfo  {journal} {npj Quantum Inf.}\ }\textbf {\bibinfo {volume} {5}},\ \bibinfo {pages} {8} (\bibinfo {year} {2019})}\BibitemShut {NoStop}%
\bibitem [{\citenamefont {Fuchs}\ and\ \citenamefont {van~de Graaf}(1999)}]{FuchsvandeGraaf1999}%
  \BibitemOpen
  \bibfield  {author} {\bibinfo {author} {\bibfnamefont {C.~A.}\ \bibnamefont {Fuchs}}\ and\ \bibinfo {author} {\bibfnamefont {J.}~\bibnamefont {van~de Graaf}},\ }\href {https://doi.org/10.1109/18.761271} {\bibfield  {journal} {\bibinfo  {journal} {IEEE Trans. Inf. Theory}\ }\textbf {\bibinfo {volume} {45}},\ \bibinfo {pages} {1216} (\bibinfo {year} {1999})}\BibitemShut {NoStop}%
\bibitem [{\citenamefont {Tamaki}\ \emph {et~al.}(2016)\citenamefont {Tamaki}, \citenamefont {Curty},\ and\ \citenamefont {Lucamarini}}]{Tamaki2016}%
  \BibitemOpen
  \bibfield  {author} {\bibinfo {author} {\bibfnamefont {K.}~\bibnamefont {Tamaki}}, \bibinfo {author} {\bibfnamefont {M.}~\bibnamefont {Curty}},\ and\ \bibinfo {author} {\bibfnamefont {M.}~\bibnamefont {Lucamarini}},\ }\href {https://doi.org/10.1088/1367-2630/18/6/065008} {\bibfield  {journal} {\bibinfo  {journal} {New J. Phys.}\ }\textbf {\bibinfo {volume} {18}},\ \bibinfo {pages} {065008} (\bibinfo {year} {2016})}\BibitemShut {NoStop}%
\bibitem [{\citenamefont {Sixto}\ \emph {et~al.}(2025)\citenamefont {Sixto}, \citenamefont {Navarrete}, \citenamefont {Pereira}, \citenamefont {Curr{\'a}s-Lorenzo}, \citenamefont {Tamaki},\ and\ \citenamefont {Curty}}]{Sixto2025}%
  \BibitemOpen
  \bibfield  {author} {\bibinfo {author} {\bibfnamefont {X.}~\bibnamefont {Sixto}}, \bibinfo {author} {\bibfnamefont {{\'A}.}~\bibnamefont {Navarrete}}, \bibinfo {author} {\bibfnamefont {M.}~\bibnamefont {Pereira}}, \bibinfo {author} {\bibfnamefont {G.}~\bibnamefont {Curr{\'a}s-Lorenzo}}, \bibinfo {author} {\bibfnamefont {K.}~\bibnamefont {Tamaki}},\ and\ \bibinfo {author} {\bibfnamefont {M.}~\bibnamefont {Curty}},\ }\href {https://doi.org/10.1088/2058-9565/addb6e} {\bibfield  {journal} {\bibinfo  {journal} {Quantum Science and Technology}\ }\textbf {\bibinfo {volume} {10}},\ \bibinfo {pages} {035034} (\bibinfo {year} {2025})}\BibitemShut {NoStop}%
\bibitem [{\citenamefont {Helstrom}(1969)}]{Helstrom1969}%
  \BibitemOpen
  \bibfield  {author} {\bibinfo {author} {\bibfnamefont {C.~W.}\ \bibnamefont {Helstrom}},\ }\href@noop {} {\bibfield  {journal} {\bibinfo  {journal} {Journal of Statistical Physics}\ }\textbf {\bibinfo {volume} {1}},\ \bibinfo {pages} {231} (\bibinfo {year} {1969})}\BibitemShut {NoStop}%
\bibitem [{\citenamefont {Pereira}\ \emph {et~al.}(2023)\citenamefont {Pereira}, \citenamefont {Currás-Lorenzo}, \citenamefont {Navarrete}, \citenamefont {Mizutani}, \citenamefont {Kato}, \citenamefont {Curty},\ and\ \citenamefont {Tamaki}}]{Pereira2023}%
  \BibitemOpen
  \bibfield  {author} {\bibinfo {author} {\bibfnamefont {M.}~\bibnamefont {Pereira}}, \bibinfo {author} {\bibfnamefont {G.}~\bibnamefont {Currás-Lorenzo}}, \bibinfo {author} {\bibfnamefont {{\'A}.}~\bibnamefont {Navarrete}}, \bibinfo {author} {\bibfnamefont {A.}~\bibnamefont {Mizutani}}, \bibinfo {author} {\bibfnamefont {G.}~\bibnamefont {Kato}}, \bibinfo {author} {\bibfnamefont {M.}~\bibnamefont {Curty}},\ and\ \bibinfo {author} {\bibfnamefont {K.}~\bibnamefont {Tamaki}},\ }\href {https://doi.org/10.1103/PhysRevResearch.5.023065} {\bibfield  {journal} {\bibinfo  {journal} {Phys. Rev. Res.}\ }\textbf {\bibinfo {volume} {5}},\ \bibinfo {pages} {023065} (\bibinfo {year} {2023})}\BibitemShut {NoStop}%
\bibitem [{\citenamefont {Brask}\ \emph {et~al.}(2026)\citenamefont {Brask}, \citenamefont {Brunner}, \citenamefont {Pauwels}, \citenamefont {Rusca},\ and\ \citenamefont {Tavakoli}}]{BraskReview2026}%
  \BibitemOpen
  \bibfield  {author} {\bibinfo {author} {\bibfnamefont {J.~B.}\ \bibnamefont {Brask}}, \bibinfo {author} {\bibfnamefont {N.}~\bibnamefont {Brunner}}, \bibinfo {author} {\bibfnamefont {J.}~\bibnamefont {Pauwels}}, \bibinfo {author} {\bibfnamefont {D.}~\bibnamefont {Rusca}},\ and\ \bibinfo {author} {\bibfnamefont {A.}~\bibnamefont {Tavakoli}},\ }\href@noop {} {\bibinfo {title} {Quantum correlations in prepare-and-measure scenarios and their semi-device-independent applications}} (\bibinfo {year} {2026}),\ \Eprint {https://arxiv.org/abs/2603.23604} {arXiv:2603.23604 [quant-ph]} \BibitemShut {NoStop}%
\bibitem [{\citenamefont {Brown}\ \emph {et~al.}(2024)\citenamefont {Brown}, \citenamefont {Fawzi},\ and\ \citenamefont {Fawzi}}]{brown2024device}%
  \BibitemOpen
  \bibfield  {author} {\bibinfo {author} {\bibfnamefont {P.}~\bibnamefont {Brown}}, \bibinfo {author} {\bibfnamefont {H.}~\bibnamefont {Fawzi}},\ and\ \bibinfo {author} {\bibfnamefont {O.}~\bibnamefont {Fawzi}},\ }\href {https://doi.org/10.22331/q-2024-08-27-1445} {\bibfield  {journal} {\bibinfo  {journal} {{Quantum}}\ }\textbf {\bibinfo {volume} {8}},\ \bibinfo {pages} {1445} (\bibinfo {year} {2024})}\BibitemShut {NoStop}%
\bibitem [{\citenamefont {Burgdorf}\ \emph {et~al.}(2016)\citenamefont {Burgdorf}, \citenamefont {Klep},\ and\ \citenamefont {Povh}}]{burgdorf2016optimization}%
  \BibitemOpen
  \bibfield  {author} {\bibinfo {author} {\bibfnamefont {S.}~\bibnamefont {Burgdorf}}, \bibinfo {author} {\bibfnamefont {I.}~\bibnamefont {Klep}},\ and\ \bibinfo {author} {\bibfnamefont {J.}~\bibnamefont {Povh}},\ }\href {https://doi.org/10.1007/978-3-319-33338-0} {\bibfield  {journal} {\bibinfo  {journal} {SpringerLink}\ } (\bibinfo {year} {2016})}\BibitemShut {NoStop}%
\bibitem [{\citenamefont {Pironio}\ \emph {et~al.}(2010{\natexlab{a}})\citenamefont {Pironio}, \citenamefont {Navascu{\'e}s},\ and\ \citenamefont {Ac{\'\i}n}}]{PNA2010}%
  \BibitemOpen
  \bibfield  {author} {\bibinfo {author} {\bibfnamefont {S.}~\bibnamefont {Pironio}}, \bibinfo {author} {\bibfnamefont {M.}~\bibnamefont {Navascu{\'e}s}},\ and\ \bibinfo {author} {\bibfnamefont {A.}~\bibnamefont {Ac{\'\i}n}},\ }\href {https://doi.org/10.1137/090760155} {\bibfield  {journal} {\bibinfo  {journal} {SIAM J. Optim.}\ }\textbf {\bibinfo {volume} {20}},\ \bibinfo {pages} {2157} (\bibinfo {year} {2010}{\natexlab{a}})}\BibitemShut {NoStop}%
\bibitem [{\citenamefont {Roch~i Carceller}\ \emph {et~al.}(2025)\citenamefont {Roch~i Carceller}, \citenamefont {Pauwels}, \citenamefont {Pironio},\ and\ \citenamefont {Tavakoli}}]{glty-kkbp}%
  \BibitemOpen
  \bibfield  {author} {\bibinfo {author} {\bibfnamefont {C.}~\bibnamefont {Roch~i Carceller}}, \bibinfo {author} {\bibfnamefont {J.}~\bibnamefont {Pauwels}}, \bibinfo {author} {\bibfnamefont {S.}~\bibnamefont {Pironio}},\ and\ \bibinfo {author} {\bibfnamefont {A.}~\bibnamefont {Tavakoli}},\ }\href {https://doi.org/10.1103/glty-kkbp} {\bibfield  {journal} {\bibinfo  {journal} {Phys. Rev. Lett.}\ }\textbf {\bibinfo {volume} {135}},\ \bibinfo {pages} {140802} (\bibinfo {year} {2025})}\BibitemShut {NoStop}%
\bibitem [{\citenamefont {Navascu{\'e}s}\ \emph {et~al.}(2007)\citenamefont {Navascu{\'e}s}, \citenamefont {Pironio},\ and\ \citenamefont {Ac{\'\i}n}}]{Navascues2007}%
  \BibitemOpen
  \bibfield  {author} {\bibinfo {author} {\bibfnamefont {M.}~\bibnamefont {Navascu{\'e}s}}, \bibinfo {author} {\bibfnamefont {S.}~\bibnamefont {Pironio}},\ and\ \bibinfo {author} {\bibfnamefont {A.}~\bibnamefont {Ac{\'\i}n}},\ }\href {https://doi.org/10.1103/PhysRevLett.98.010401} {\bibfield  {journal} {\bibinfo  {journal} {Phys. Rev. Lett.}\ }\textbf {\bibinfo {volume} {98}},\ \bibinfo {pages} {010401} (\bibinfo {year} {2007})}\BibitemShut {NoStop}%
\bibitem [{\citenamefont {Navascués}\ \emph {et~al.}(2008)\citenamefont {Navascués}, \citenamefont {Pironio},\ and\ \citenamefont {Acín}}]{Navascues2008}%
  \BibitemOpen
  \bibfield  {author} {\bibinfo {author} {\bibfnamefont {M.}~\bibnamefont {Navascués}}, \bibinfo {author} {\bibfnamefont {S.}~\bibnamefont {Pironio}},\ and\ \bibinfo {author} {\bibfnamefont {A.}~\bibnamefont {Acín}},\ }\href {https://doi.org/10.1088/1367-2630/10/7/073013} {\bibfield  {journal} {\bibinfo  {journal} {New Journal of Physics}\ }\textbf {\bibinfo {volume} {10}},\ \bibinfo {pages} {073013} (\bibinfo {year} {2008})}\BibitemShut {NoStop}%
\bibitem [{\citenamefont {Chaturvedi}\ \emph {et~al.}(2021)\citenamefont {Chaturvedi}, \citenamefont {Farkas},\ and\ \citenamefont {Wright}}]{Chaturvedi2021}%
  \BibitemOpen
  \bibfield  {author} {\bibinfo {author} {\bibfnamefont {A.}~\bibnamefont {Chaturvedi}}, \bibinfo {author} {\bibfnamefont {M.}~\bibnamefont {Farkas}},\ and\ \bibinfo {author} {\bibfnamefont {V.~J.}\ \bibnamefont {Wright}},\ }\href {https://doi.org/10.22331/q-2021-06-29-484} {\bibfield  {journal} {\bibinfo  {journal} {Quantum}\ }\textbf {\bibinfo {volume} {5}},\ \bibinfo {pages} {484} (\bibinfo {year} {2021})}\BibitemShut {NoStop}%
\bibitem [{\citenamefont {Masini}\ \emph {et~al.}(2024)\citenamefont {Masini}, \citenamefont {Ioannou}, \citenamefont {Brunner}, \citenamefont {Pironio},\ and\ \citenamefont {Sekatski}}]{Masini2024}%
  \BibitemOpen
  \bibfield  {author} {\bibinfo {author} {\bibfnamefont {M.}~\bibnamefont {Masini}}, \bibinfo {author} {\bibfnamefont {M.}~\bibnamefont {Ioannou}}, \bibinfo {author} {\bibfnamefont {N.}~\bibnamefont {Brunner}}, \bibinfo {author} {\bibfnamefont {S.}~\bibnamefont {Pironio}},\ and\ \bibinfo {author} {\bibfnamefont {P.}~\bibnamefont {Sekatski}},\ }\href {https://doi.org/10.22331/q-2024-12-23-1574} {\bibfield  {journal} {\bibinfo  {journal} {Quantum}\ }\textbf {\bibinfo {volume} {8}},\ \bibinfo {pages} {1574} (\bibinfo {year} {2024})}\BibitemShut {NoStop}%
\bibitem [{\citenamefont {Lobo}\ \emph {et~al.}(2026)\citenamefont {Lobo}, \citenamefont {Balanz{\'o}-Juand{\'o}},\ and\ \citenamefont {Pironio}}]{Lobo2026}%
  \BibitemOpen
  \bibfield  {author} {\bibinfo {author} {\bibfnamefont {E.~P.}\ \bibnamefont {Lobo}}, \bibinfo {author} {\bibfnamefont {M.}~\bibnamefont {Balanz{\'o}-Juand{\'o}}},\ and\ \bibinfo {author} {\bibfnamefont {S.}~\bibnamefont {Pironio}},\ }\href@noop {} {\bibinfo {title} {Generalized measurement incompatibility}} (\bibinfo {year} {2026}),\ \Eprint {https://arxiv.org/abs/2605.16151} {arXiv:2605.16151 [quant-ph]} \BibitemShut {NoStop}%
\bibitem [{\citenamefont {Pironio}\ \emph {et~al.}(2010{\natexlab{b}})\citenamefont {Pironio}, \citenamefont {Ac\'{\i}n}, \citenamefont {Massar}, \citenamefont {Boyer de~la Giroday}, \citenamefont {Matsukevich}, \citenamefont {Maunz}, \citenamefont {Olmschenk}, \citenamefont {Hayes}, \citenamefont {Luo}, \citenamefont {Manning},\ and\ \citenamefont {Monroe}}]{Pironio2010}%
  \BibitemOpen
  \bibfield  {author} {\bibinfo {author} {\bibfnamefont {S.}~\bibnamefont {Pironio}}, \bibinfo {author} {\bibfnamefont {A.}~\bibnamefont {Ac\'{\i}n}}, \bibinfo {author} {\bibfnamefont {S.}~\bibnamefont {Massar}}, \bibinfo {author} {\bibfnamefont {A.}~\bibnamefont {Boyer de~la Giroday}}, \bibinfo {author} {\bibfnamefont {D.~N.}\ \bibnamefont {Matsukevich}}, \bibinfo {author} {\bibfnamefont {P.}~\bibnamefont {Maunz}}, \bibinfo {author} {\bibfnamefont {S.}~\bibnamefont {Olmschenk}}, \bibinfo {author} {\bibfnamefont {D.}~\bibnamefont {Hayes}}, \bibinfo {author} {\bibfnamefont {L.}~\bibnamefont {Luo}}, \bibinfo {author} {\bibfnamefont {T.~A.}\ \bibnamefont {Manning}},\ and\ \bibinfo {author} {\bibfnamefont {C.}~\bibnamefont {Monroe}},\ }\href {https://doi.org/10.1038/nature09008} {\bibfield  {journal} {\bibinfo  {journal} {Nature}\ }\textbf {\bibinfo {volume} {464}},\ \bibinfo {pages} {1021} (\bibinfo {year} {2010}{\natexlab{b}})}\BibitemShut {NoStop}%
\bibitem [{\citenamefont {Metger}\ \emph {et~al.}(2024)\citenamefont {Metger}, \citenamefont {Fawzi}, \citenamefont {Sutter},\ and\ \citenamefont {Renner}}]{Metger2024}%
  \BibitemOpen
  \bibfield  {author} {\bibinfo {author} {\bibfnamefont {T.}~\bibnamefont {Metger}}, \bibinfo {author} {\bibfnamefont {O.}~\bibnamefont {Fawzi}}, \bibinfo {author} {\bibfnamefont {D.}~\bibnamefont {Sutter}},\ and\ \bibinfo {author} {\bibfnamefont {R.}~\bibnamefont {Renner}},\ }\href {https://doi.org/10.1007/s00220-024-05121-4} {\bibfield  {journal} {\bibinfo  {journal} {Commun. Math. Phys.}\ }\textbf {\bibinfo {volume} {405}},\ \bibinfo {pages} {261} (\bibinfo {year} {2024})}\BibitemShut {NoStop}%
\bibitem [{\citenamefont {Chaturvedi}\ \emph {et~al.}(2026)\citenamefont {Chaturvedi}, \citenamefont {Viola}, \citenamefont {Panwar}, \citenamefont {Prasad},\ and\ \citenamefont {Saha}}]{Chaturvedi2026}%
  \BibitemOpen
  \bibfield  {author} {\bibinfo {author} {\bibfnamefont {A.}~\bibnamefont {Chaturvedi}}, \bibinfo {author} {\bibfnamefont {G.}~\bibnamefont {Viola}}, \bibinfo {author} {\bibfnamefont {E.}~\bibnamefont {Panwar}}, \bibinfo {author} {\bibfnamefont {T.}~\bibnamefont {Prasad}},\ and\ \bibinfo {author} {\bibfnamefont {D.}~\bibnamefont {Saha}},\ }\href@noop {} {\bibinfo {title} {Semi-device-independent quantum key distribution from operational assumptions}} (\bibinfo {year} {2026}),\ \Eprint {https://arxiv.org/abs/2607.06682} {arXiv:2607.06682 [quant-ph]} \BibitemShut {NoStop}%
\bibitem [{\citenamefont {Mishra}(2026)}]{RIQKD}%
  \BibitemOpen
  \bibfield  {author} {\bibinfo {author} {\bibfnamefont {A.}~\bibnamefont {Mishra}},\ }\href@noop {} {\bibinfo {title} {{RIQKD}: Code and data for prepare-and-measure quantum key distribution from a bound on information leakage}},\ \bibinfo {howpublished} {\url{https://github.com/abh1mishra/RIQKD}} (\bibinfo {year} {2026})\BibitemShut {NoStop}%
\bibitem [{\citenamefont {{Bermejo Mor{\'a}n}}\ and\ \citenamefont {Mishra}(2026)}]{BermejoMoran2026PCPOP}%
  \BibitemOpen
  \bibfield  {author} {\bibinfo {author} {\bibfnamefont {M.}~\bibnamefont {{Bermejo Mor{\'a}n}}}\ and\ \bibinfo {author} {\bibfnamefont {A.}~\bibnamefont {Mishra}},\ }\href {https://arxiv.org/abs/2607.09339} {\bibinfo {title} {{PCPOP.jl}: A {J}ulia package for partially commutative polynomial optimization}} (\bibinfo {year} {2026}),\ \bibinfo {note} {code available at \url{https://github.com/abh1mishra/PCPOP.jl}},\ \Eprint {https://arxiv.org/abs/2607.09339} {arXiv:2607.09339 [quant-ph]} \BibitemShut {NoStop}%
\end{thebibliography}%

\onecolumngrid
\appendix

\section{Classical sets and witnesses}
\label{app:classical}

This appendix describes how the classical quantities of Sec.~\ref{sec:lp} were computed and proves the classical bounds of the inequalities of Sec.~\ref{sec:facets}.

\emph{Deterministic responses.} The classical models~\eqref{eq:classical_model}--\eqref{eq:classical_pg} can be reduced to deterministic responses of Bob's device.

\begin{lemma}
\label{lem:deterministic}
Every $p\in\cset_G$ admits a classical model~\eqref{eq:classical_model}--\eqref{eq:classical_pg} in which $p(b|z,y)\in\{0,1\}$.
\end{lemma}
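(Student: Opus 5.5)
The plan is to absorb Bob's local randomness into the message. Start from an arbitrary classical model $p(b|x,y)=\sum_z p(z|x)\,p(b|z,y)$ satisfying Eq.~\eqref{eq:classical_pg}. For each message $z$, write Bob's response $p(b|z,y)$, which is a product distribution over his $m$ settings, as a convex combination of deterministic response functions. Concretely, let $f:\mathcal Y\to\{\text{outcomes}\}$ range over all response functions and set
\[
q(f|z)=\prod_{y} p\bigl(f(y)|z,y\bigr).
\]
Then $p(b|z,y)=\sum_f q(f|z)\,\delta_{b,f(y)}$, since summing over the values of $f$ at the settings $y'\neq y$ gives $1$.

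Define the new message $z'=(z,f)$, with encoding $p'(z'|x)=p(z|x)\,q(f|z)$ and deterministic response $p'(b|z',y)=\delta_{b,f(y)}$. By construction this reproduces $p(b|x,y)$.

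It remains to check that the information content does not increase. Since $q(f|z)$ does not depend on $x$,
\[
\sum_{z,f}\max_x p(z|x)\,q(f|z)=\sum_z \max_x p(z|x)\sum_f q(f|z)=\sum_z\max_x p(z|x)\leq G .
\]
This is the same as the earlier remark that an $x$-independent post-processing, here appending an independent random variable to $z$, cannot make the message easier to read. The new model therefore belongs to $\cset_G$ and has deterministic responses.

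The only point needing care is this guessing-probability step. It works precisely because the randomness is attached after $z$ and independently of $x$, so the maximum over $x$ factors out. If $z$ or the outcome sets are infinite, the same argument goes through with measurable choices; in our finite scenarios the number of functions $f$ is finite, so no issue arises.
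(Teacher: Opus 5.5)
Your proof is correct and follows the paper's argument. Like the paper, you write Bob's response as a mixture of deterministic ones over extra randomness that is independent of $x$, append that randomness to the message, and observe that the maximum over $x$ factors out of $\sum_{z'}\max_x p(z'|x)$. The only cosmetic difference is that your mixing weights $q(f|z)$ depend on $z$ and are built by an explicit product construction, whereas the paper uses a single $z$-independent variable $\mu$ with response $f(z,y,\mu)$.
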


\begin{proof}
Write $p(b|z,y)=\sum_\mu p(\mu)\,\delta_{b,f(z,y,\mu)}$ with $\mu$ independent of $x$, and let Alice's device send the message $z'=(z,\mu)$ with $p(z'|x)=p(z|x)p(\mu)$. Bob's response $b=f(z,y,\mu)$ is then deterministic, the behaviour is unchanged, and $\sum_{z,\mu}\max_x p(z|x)p(\mu)=\sum_z\max_x p(z|x)$.
\end{proof}

\emph{Linear programs.} Once Bob's responses are deterministic, the messages can be identified with his response functions $f:\mathcal Y\to\{0,1\}$, and the conditions~\eqref{eq:classical_model}--\eqref{eq:classical_pg} become linear in the distributions $p(f|x)$, in $G$ and in auxiliary variables $t_f\geq p(f|x)$ that replace $\max_xp(f|x)$~\cite{Tavakoli2020,Tavakoli2022}. The quantities of Table~\ref{tab:protocols} are the optimal values of the resulting linear programs, and the dual of the program for $\Gmin$ yields the inequalities of Sec.~\ref{sec:facets}.

\emph{Classical bounds of the inequalities.} By Lemma~\ref{lem:deterministic}, Bob's response to a message $z$ may be taken deterministic, $b=f_z(y)$, and we write $\beta_y(z)=(-1)^{f_z(y)}$. For the CHSH value~\eqref{eq:chsh} of $\prot43$,
\begin{equation*}
    S_{\mathrm{CHSH}}=\tfrac12\sum_z\Bigl\{\bigl[p(z|(1,0))-p(z|(1,1))\bigr]\bigl[\beta_2(z)+\beta_3(z)\bigr]+\bigl[p(z|(2,0))-p(z|(2,1))\bigr]\bigl[\beta_2(z)-\beta_3(z)\bigr]\Bigr\} .
\end{equation*}
For each $z$, exactly one of $\beta_2(z)\pm\beta_3(z)$ is non-zero, and it equals $\pm2$, so that the term of $z$ is $p(z|x_+)-p(z|x_-)$ for two distinct inputs $x_\pm$ that depend on $z$. Write $M_z=\max_xp(z|x)$, so that $\sum_zM_z\leq G$ by Eq.~\eqref{eq:classical_pg}. Bounding the term of $z$ by $M_z$ gives $S_{\mathrm{CHSH}}\leq\sum_zM_z\leq G$, which is Eq.~\eqref{eq:chsh_bound}. Bounding it instead by $M_z-p(z|x_-)\leq\sum_x\bigl(M_z-p(z|x)\bigr)$ and using $\sum_{z,x}p(z|x)=4$ gives $S_{\mathrm{CHSH}}\leq4\sum_zM_z-4\leq4G-4$, which is stronger for $G<4/3$. Linear programming confirms that $\min\{G,4G-4\}$ is the largest classical value of $S_{\mathrm{CHSH}}$ for every $G\in[1,4]$. The bounds of Eqs.~\eqref{eq:witness63}, \eqref{eq:witness63b} and~\eqref{eq:witness33} follow from the same counting argument. Write a witness as $\sum_{x,y}w_{x,y}E_{x,y}$; for the inequalities written with the correlators $C$, the non-zero weights are $\pm\tfrac12$. Let $c_x(z)=\sum_yw_{x,y}\beta_y(z)$ be the value that the message $z$ contributes for preparation $x$, so that the witness equals $\sum_z\sum_xp(z|x)\,c_x(z)$. Whenever, for every $z$, the positive parts of the $c_x(z)$ sum to at most $c$, for instance because $c_x(z)$ is at most $c$ for one preparation and at most $0$ for the others, the witness is bounded by $c\sum_z\max_xp(z|x)\leq cG$. Since $\sum_zp(z|x)=1$, one may also add constants $\mu_x$ to the $c_x(z)$ before applying this criterion, at the cost of the constant term $\sum_x\mu_x$ in the bound. For Eq.~\eqref{eq:witness63b}, $c=1$, because the four preparations involved carry the four sign patterns on the measurements $y=1,3$, as in Eq.~\eqref{eq:chsh_bound}; this gives $G$. For Eq.~\eqref{eq:witness33}, $c_{(1,0)}=\beta_1+\beta_3$, $c_{(2,0)}=\beta_2-\beta_3$ and $c_{(1,1)}=-\beta_1-\beta_2$, so that at most one of them equals $2$ and the others are then at most $0$; this gives $c=2$ and the bound $2G$. For Eq.~\eqref{eq:witness63}, $2c_x(z)\in\{\pm1,\pm3\}$, and the value $3$ is reached for at most one preparation because the six rows of signs are distinct; hence $2\sum_xp(z|x)\,c_x(z)\leq\sum_xp(z|x)+2\max_xp(z|x)$, and summing over $z$ gives $3+G$. For Eq.~\eqref{eq:witness33b}, adding $\mu=(-1,1,0)$ to $c_{(1,0)}=\beta_1+\beta_2+\beta_3$, $c_{(1,1)}=-\beta_2$ and $c_{(2,0)}=\beta_2-\beta_3$ gives, for each of the eight sign patterns, one value $2$ and two values at most $0$, hence $c=2$ and the bound $2G$. For Eq.~\eqref{eq:witness63c}, with the weights $\pm\tfrac12$ on the bases $\theta=1,3$ and $1$ on the three biases of $(2,0)$, adding $\mu=(-\tfrac12,\tfrac12,-1,0,-\tfrac12,\tfrac12)$ makes the positive parts of the six shifted coefficients sum to at most $2$ for every sign pattern, which gives $2G$, plus the constant $-\sum_x\mu_x=1$: the bound $1+2G$. Linear programming confirms that both inequalities are tight at the honest bound.

\section{Proof of Proposition~\ref{prop:ideal} for $\prot63$}
\label{app:ideal}

We use the virtual entanglement-based protocol of Sec.~\ref{sec:ideal}: the pure state $\ket\Psi$ on $CBE$, the observables $A_1$, $A_2$, $A_3$ on $C$ that select the three key bases, with outcome $+1$ for the preparations $(\theta,0)$, the $+1$ eigenstates of the observables $O_{(\theta-1)\pi/3}$ of Sec.~\ref{sec:protchsh}, and Bob's observables $B_1$, $B_2$, $B_3$ on $B$. These observables square to the identity, commute across systems, and commute with every operator on $E$. For the ideal behaviour, Eq.~\eqref{eq:virtual_corr} gives
\begin{equation}
    \label{eq:63_corr}
    \langle A_\theta B_y\rangle=\cos\bigl((\theta-y)\pi/3\bigr),
\end{equation}
that is, $1$ for $\theta=y$, $\tfrac12$ for $|\theta-y|=1$ and $-\tfrac12$ for $|\theta-y|=2$. We show that Eq.~\eqref{eq:63_corr} alone implies that a local isometry maps $\ket\Psi$ to $\ket{\Phi^+}_{C'B'}\otimes\ket\xi_{C''B''E}$ and $A_1\ket\Psi$ to $(Z\otimes\id)\ket{\Phi^+}\otimes\ket\xi$, which decouples Eve from the first key as in Sec.~\ref{sec:ideal}. The relabelling $(A_1,A_2,A_3)\to(A_2,A_3,-A_1)$, applied to Bob's observables as well, leaves Eq.~\eqref{eq:63_corr} invariant, so the same holds for the other two keys.

\emph{Perfect correlations.} For $\theta=y$, $\|(A_\theta-B_\theta)\ket\Psi\|^2=2-2\langle A_\theta B_\theta\rangle=0$, so that
\begin{equation}
    \label{eq:63_perfect}
    A_\theta\ket\Psi=B_\theta\ket\Psi,\qquad \theta=1,2,3 .
\end{equation}
The correlations between Alice's observables are therefore fixed, $\langle A_\theta A_{\theta'}\rangle=\langle A_\theta B_{\theta'}\rangle$, and $\|(A_1-A_2+A_3)\ket\Psi\|^2=3-2\langle A_1A_2\rangle+2\langle A_1A_3\rangle-2\langle A_2A_3\rangle=0$. The same holds on Bob's side, so
\begin{equation}
    \label{eq:63_linear}
    A_1\ket\Psi=(A_2-A_3)\ket\Psi,\qquad B_1\ket\Psi=(B_2-B_3)\ket\Psi .
\end{equation}

\emph{An anticommuting pair.} Let $K=(A_2+A_3)/\sqrt3$ and $D=(B_2+B_3)/\sqrt3$. By Eq.~\eqref{eq:63_perfect}, $K\ket\Psi=D\ket\Psi$. Moreover, $(A_2-A_3)A_1\ket\Psi=(A_2-A_3)B_1\ket\Psi=B_1(A_2-A_3)\ket\Psi=B_1A_1\ket\Psi=\ket\Psi$, while Eq.~\eqref{eq:63_linear} gives $(A_2-A_3)A_1\ket\Psi=(A_2-A_3)^2\ket\Psi=(2-\{A_2,A_3\})\ket\Psi$. Hence $\{A_2,A_3\}\ket\Psi=\ket\Psi$, which gives $K^2\ket\Psi=\ket\Psi$ and, since $D$ commutes with $K$, $D^2\ket\Psi=DK\ket\Psi=K^2\ket\Psi=\ket\Psi$. Finally, $A_1$ and $K$ anticommute on $\ket\Psi$:
\begin{equation}
    \label{eq:63_anticomm}
    \sqrt3\,\{A_1,K\}\ket\Psi=(B_2+B_3)B_1\ket\Psi+(A_2+A_3)A_1\ket\Psi=[B_3,B_2]\ket\Psi+[A_3,A_2]\ket\Psi=0 .
\end{equation}
The first equality uses $K\ket\Psi=D\ket\Psi$ and $A_1\ket\Psi=B_1\ket\Psi$ together with the commutation of $A_1$ with $D$, the second uses Eq.~\eqref{eq:63_linear}, and the third $[B_3,B_2]\ket\Psi=B_3A_2\ket\Psi-B_2A_3\ket\Psi=A_2A_3\ket\Psi-A_3A_2\ket\Psi$, by Eq.~\eqref{eq:63_perfect} and the commutation of Alice's and Bob's observables. In the virtual protocol, Alice thus holds a pair of observables, $A_1$ and $K$, that anticommute on $\ket\Psi$ and that Bob reproduces, $A_1\ket\Psi=B_1\ket\Psi$ and $K\ket\Psi=D\ket\Psi$. This is the structure from which the self-test of the maximal CHSH violation extracts a maximally entangled state~\cite{McKague2012,Supic2020}; here it is obtained from three bases, none of which anticommutes with another.

\emph{Extraction of $\ket{\Phi^+}$.} The operators $K$ and $D$ are Hermitian but not $\pm1$ observables. However, $K^2\ket\Psi=\ket\Psi$ and $K^2A_1\ket\Psi=K^2B_1\ket\Psi=A_1\ket\Psi$ show that $\ket\Psi$ and $A_1\ket\Psi$ lie in the span of the eigenspaces of $K$ with eigenvalues $\pm1$. Let $\tilde K$ be the $\pm1$ observable that coincides with $K$ on these eigenspaces and with the identity on their orthogonal complement, so that $\tilde K\ket\Psi=K\ket\Psi$ and $\tilde KA_1\ket\Psi=KA_1\ket\Psi$, and define $\tilde D$ from $D$ in the same way. The $\pm1$ observables $A_1$, $\tilde K$ on $C$ and $B_1$, $\tilde D$ on $B$ then satisfy $A_1\ket\Psi=B_1\ket\Psi$, $\tilde K\ket\Psi=\tilde D\ket\Psi$ and $\{A_1,\tilde K\}\ket\Psi=0$. These are the conditions under which the swap isometry extracts a maximally entangled state~\cite{McKague2012,Supic2020}. Explicitly, with $P_a=\tfrac12(\id+(-1)^aA_1)$ and $P'_b=\tfrac12(\id+(-1)^bB_1)$, the local isometry
\begin{equation}
    \Phi\ket\Psi=\sum_{a,b=0}^1\ket{ab}_{C'B'}\otimes\tilde K^a\tilde D^bP_aP'_b\ket\Psi
\end{equation}
gives, since $P'_b\ket\Psi=P_b\ket\Psi$ and $P_aP_b=\delta_{ab}P_a$, only the terms $a=b$. The term $a=b=1$ is $\ket{11}\otimes\tilde K\tilde DP_1\ket\Psi=\ket{11}\otimes\tilde KP_1\tilde K\ket\Psi$, since $\tilde D$ commutes with $P_1$ and $\tilde D\ket\Psi=\tilde K\ket\Psi$, and $\tilde KP_1\tilde K\ket\Psi=\tfrac12(\tilde K^2-\tilde KA_1\tilde K)\ket\Psi=\tfrac12(\id+A_1)\ket\Psi=P_0\ket\Psi$, by $\tilde KA_1\tilde K\ket\Psi=-\tilde K^2A_1\ket\Psi$. Hence
\begin{equation}
    \Phi\ket\Psi=\bigl(\ket{00}+\ket{11}\bigr)\otimes P_0\ket\Psi=\ket{\Phi^+}\otimes\ket\xi ,
\end{equation}
with $\ket\xi=\sqrt2P_0\ket\Psi$, which is normalised because $\Phi$ is an isometry. Since $P_aA_1=(-1)^aP_a$, the same computation gives $\Phi A_1\ket\Psi=(Z\otimes\id)\ket{\Phi^+}\otimes\ket\xi$. The key observable $A_1$ thus acts as $Z$ on the maximally entangled pair, its outcome is uniformly random, and the state of $E$ does not depend on it.

\section{Coefficients of the entropy relaxation}
\label{app:coefficients}

For completeness we record the coefficients $\alpha_k,\beta_k$ of
Eq.~\eqref{eq:entropy_bound}, following
Ref.~\cite{koßmann2025reliableentropyestimationobserved}. They depend only on
the chosen partition and can be precomputed once.

For a classical--quantum state, Frenkel's integral representation gives~\cite{Frenkel2023,koßmann2025reliableentropyestimationobserved}
\begin{equation}
    \label{eq:app_frenkel}
    D(\rho_{AE\Theta}\|\id_A\otimes\rho_{E\Theta})=\frac{1}{\ln2}\Bigl[\Tr(\rho_{AE\Theta}-\id_A\otimes\rho_{E\Theta})+\int_0^1\frac{du}{u}\,f(u)\Bigr],\qquad f(u)=\sup_{0\leq P\leq\id}\Tr\bigl[P\,(u\,\id_A\otimes\rho_{E\Theta}-\rho_{AE\Theta})\bigr],
\end{equation}
where $f$ is convex, $f(0)=0$, and $f(u)=0$ for $u\leq\mu$ whenever $\mu\,\id_A\otimes\rho_{E\Theta}\leq\rho_{AE\Theta}$. Fix a partition $0<t_1<\dots<t_r$ of the integration interval with $t_r=1$. On each interval $[t_k,t_{k+1}]$, $f$ lies below its chord, which gives $\int_{t_1}^1 f(u)\,du/u\leq\sum_{k=1}^r c_kf(t_k)$ with
\begin{equation}
    \label{eq:app_ck}
    c_k = \begin{cases}
      \Bigl(1+\dfrac{t_1}{t_2-t_1}\Bigr)\ln\dfrac{t_2}{t_1} - 1,
        & k=1, \\[10pt]
      1 - \dfrac{t_{r-1}}{t_r-t_{r-1}}\ln\dfrac{t_r}{t_{r-1}},
        & k=r, \\[10pt]
      \Bigl(1+\dfrac{t_k}{t_{k+1}-t_k}\Bigr)\ln\dfrac{t_{k+1}}{t_k}
        - \dfrac{t_{k-1}}{t_k-t_{k-1}}\ln\dfrac{t_k}{t_{k-1}},
        & \text{otherwise.}
    \end{cases}
\end{equation}
On $[0,t_1]$, convexity and $f(0)=0$ give $f(u)\leq(u/t_1)f(t_1)$, hence $\int_0^{t_1}f(u)\,du/u\leq f(t_1)$. Writing $f(t_k)=-\inf_P\Tr[P(\rho_{AE\Theta}-t_k\id_A\otimes\rho_{E\Theta})]$ yields Eq.~\eqref{eq:entropy_bound} with
\begin{equation}
    \label{eq:app_alphabeta}
    \alpha_k = -c_k , \qquad \beta_k = c_k\, t_k \quad(1\leq k\leq r), \qquad \alpha_0=-1,\qquad \beta_0=t_1 .
\end{equation}
The $k=0$ term therefore involves the same operator as the $k=1$ term, and the two can be merged by replacing $c_1$ with $c_1+1$. It is generally non-zero: for instance, if Eve knows the key perfectly, $\rho_{AE\Theta}$ has support orthogonal to part of that of $\id_A\otimes\rho_{E\Theta}$ and $f(t_1)=t_1$. Only if $t_1$ is itself a valid lower limit, $t_1\,\id_A\otimes\rho_{E\Theta}\leq\rho_{AE\Theta}$, does the term vanish. Such a limit cannot be guaranteed for an adversarial state, and setting $\beta_0=0$ therefore overestimates the entropy by up to $t_1/\ln2$. The bound of
Eq.~\eqref{eq:entropy_bound} becomes tight as the partition is refined. For the ideal behaviour, where $\rho_{AE\Theta}=\tfrac12\id_A\otimes\rho_{E\Theta}$, $f$ vanishes on $[0,1/2]$ and is linear on $[1/2,1]$. Since $t=1/2$ is a node, the discretisation, including the $k=0$ term, is then exact.

\section{The entropy objective for each protocol}
\label{app:hae}

This appendix derives Eq.~\eqref{eq:objective} from
Eq.~\eqref{eq:entropy_bound} and specialises it to the protocols of
Sec.~\ref{sec:protocols}.

\subsection{Reduction of the objective}

With the basis register $\Theta$ written out, the state~\eqref{eq:rho_AE} and its marginal read
\begin{align}
    \label{eq:app_cq}
    \rho_{AE\Theta} &= \frac{1}{2\nu}\sum_{\theta=1}^\nu \sum_{a=0}^1
        \ket a\bra a_A \otimes
        \Tr_B(\rho_{(\theta,a)})\otimes\ket{\theta}\bra{\theta}_{\Theta} , \\
    \label{eq:app_rhoe}
    \id_A\otimes\rho_{E\Theta} &= \frac{1}{2\nu}\,\id_A \otimes
        \sum_{\theta=1}^\nu\sum_{a=0}^1
        \Tr_B(\rho_{(\theta,a)})\otimes\ket{\theta}\bra{\theta}_{\Theta} .
\end{align}
Since $|A|=2$, the first term of
Eq.~\eqref{eq:entropy_bound} is
$\Tr(\id_A\otimes\rho_{E\Theta}-\rho_{AE\Theta})=|A|-1=1$ for every protocol.

The operator $-\alpha_k\rho_{AE\Theta}-\beta_k\,\id_A\otimes\rho_{E\Theta}$ is block diagonal
with respect to both the key register $A$ and the basis register $\Theta$. The infimum of $\Tr(P_kX)$ over projectors is attained by the projector onto the negative part of $X$, which inherits the same block structure, so we may write, without loss of generality,
\begin{equation}
    \label{eq:app_pk}
    P_k = \sum_{\theta=1}^\nu\sum_{a=0}^1 \ket a\bra a_A \otimes
          P_{a,k}^\theta\otimes\ket{\theta}\bra{\theta}_{\Theta} ,
\end{equation}
with $P_{a,k}^\theta$ projectors on Eve's quantum memory. For different $a$ or $\theta$ they need not commute. They carry the index $\theta$
because the basis register belongs to Eve.
The two traces appearing in Eq.~\eqref{eq:entropy_bound} become
\begin{align}
    \label{eq:app_trace1}
    \Tr(P_k\rho_{AE\Theta})
    &= \frac{1}{2\nu}\sum_{\theta=1}^\nu
       \Tr\bigl[P_{0,k}^\theta\rho_{(\theta,0)} + P_{1,k}^\theta\rho_{(\theta,1)}\bigr] ,\\
    \label{eq:app_trace2}
    \Tr\bigl(P_k(\id_A\otimes\rho_{E\Theta})\bigr)
    &= \frac{1}{2\nu}\sum_{\theta=1}^\nu
       \Tr\bigl[(P_{0,k}^\theta+P_{1,k}^\theta)(\rho_{(\theta,0)}+\rho_{(\theta,1)})\bigr] ,
\end{align}
where $\Tr[P\rho_x]$ stands for $\Tr[(\id_B\otimes P)\rho_x]$, and in Eq.~\eqref{eq:app_trace2} the $A$ register of $P_k$ is traced against $\id_A$, so both key values contribute to each projector.
Substituting into Eq.~\eqref{eq:entropy_bound} gives Eq.~\eqref{eq:objective}. For $\nu=1$ the basis register is trivial.

The states $\{\rho_x\}$ and Bob's measurements $\{M_{b|y}\}$ are
shared across all bases -- Eve's channel acts before the basis is announced -- while only her variational projectors carry the basis index. For each $k$
the sum over $\theta$ is therefore optimised as a single semidefinite program.

\begin{remark}\label{rem:basisindex}
Dropping the basis index of the projectors, $P^\theta_{a,k}\to P_{a,k}$, amounts to assuming that Eve never learns the basis and does not give a valid bound. For the protocols with several key bases it leads to overestimated rates; the protocols with a single key basis are unaffected.
\end{remark}

\section{Numerical implementation}
\label{app:implementation}

The numerical results of Sec.~\ref{sec:results} are obtained with the following choices; Appendix~\ref{app:checks} validates them.

\emph{Relaxation.} We use the reduced relaxation~\eqref{eq:swnpa} of Proposition~\ref{lem:onefactor}. The moments are taken real, which is no loss of generality since the problem is invariant under complex conjugation, and $M_{1|y}=\id-M_{0|y}$ is eliminated. In the program for node $k$, $W$ consists of words in Bob's projectors $M_{0|y}$ and Eve's projectors $P^\theta_{a,k}$:
\begin{itemize}
    \item for $\prot43$ and $\prot33$, all words of length at most three. For $\prot43$, words of length four change the entropy by at most $3\cdot10^{-4}$ bits at the honest $G$ and leave the thresholds unchanged to within the bisection precision of $5\cdot10^{-4}$;
    \item for $\prot44$, all words of length at most two, together with the products $M_{0|y}M_{0|y'}P^\theta_{a,k}$. These products raise the entropy by $0.034$ bits at $v=0.86$; the further products $M_{0|y}M_{0|y'}M_{0|y''}$ move the thresholds by less than $10^{-3}$;
    \item for $\prot63$, all words of length at most two except the products of Eve's projectors belonging to different bases, together with the products $M_{0|y}M_{0|y'}P^\theta_{a,k}$. At the honest $G$ the latter lower $\eta_c$ by $0.019$ and leave $v_c$ unchanged; the cross-basis products change the thresholds by at most $3\cdot10^{-3}$.
\end{itemize}
For $\prot43$ the program then has $1595$ moments and is solved in about one second; for $\prot44$ it has $9005$ moments and takes $23$\,s.

\emph{Entropy partition.} We use the partition of Appendix~\ref{app:coefficients} with the $26$ nodes $0.005$, $0.01$, $0.02$, $0.035$, $0.05$, $0.075$, $0.1$, $0.125$, $0.15$ and $0.2,0.25,\dots,1$. The $k=0$ term is merged into the $k=1$ term, so each data point requires $26$ semidefinite programs. The partition contains $t=1/2$, and is therefore exact for the ideal behaviour. At the points we tested, it is within $10^{-3}$ bits of a partition with $120$ nodes, which moves the thresholds by less than $3\cdot10^{-4}$.

\emph{Decoupling over $k$.} The objective in Eq.~\eqref{eq:objective} is a sum
over $k$ of the terms $\sum_\theta\Tr(\Omega_k^\theta)$, each involving its own
projectors $P_{a,k}^\theta$ but all sharing the states and Bob's measurements.
Rather than optimise jointly, we minimise each term separately and add the
results. Because
$\inf\sum_k f_k \geq \sum_k \inf f_k$, this is a further relaxation and the
sum remains a valid lower bound; its benefit is that the number of projector
variables per semidefinite program drops from $2\nu(r+1)$ to $2\nu$, and the
$r+1$ programs are solved independently~\cite{brown2024device}.

\emph{Numerical output.} The programs are generated with the Julia package PCPOP.jl~\cite{BermejoMoran2026PCPOP} and solved in dual form with MOSEK through JuMP, with the default tolerances. We use the dual objective only when MOSEK returns a feasible point, in which case it is a lower bound up to the feasibility tolerance. Every moment of Eq.~\eqref{eq:swnpa} is bounded, $|L(\rho_xw)|\leq1$ and $|L(\sigma w)|\leq G$. A residual $r$ in the dual equality constraints therefore changes the bound by at most $G\,\|r\|_1$, which converts the returned solution into a certified bound. We evaluated this correction, after projecting the dual matrices onto the positive semidefinite cone, at three representative points and at one control point. At the representative points it is at most $2\cdot10^{-5}$ bits, which moves the thresholds by less than $10^{-5}$, and we do not include it in the reported values. The control is the ideal point of the three-state protocol of Ref.~\cite{Tamaki2014}, which admits an exact classical attack (Sec.~\ref{sec:nokey}). There the relaxation gives $H(A|E\Theta)\approx10^{-6}$ bits before the correction, and a negative value after it. The key rate is evaluated in steps of $0.01$ in $v$ and in $\eta$, starting from the ideal point or, for the most expensive relaxations, from just above the threshold, and each threshold, the zero crossing of the resulting lower bound, is then located by bisection to $5\cdot10^{-4}$. The curves of $\prot44$ and $\prot63$ in Fig.~\ref{fig:keyrates} use steps of $0.02$.

\section{Numerical checks and further results}
\label{app:checks}

\emph{Validation of the relaxation.} 
The main numerical implementation in Julia based on the package PCPOP.jl~\cite{BermejoMoran2026PCPOP} has been checked against an independent implementation in Python (CVXPY with MOSEK), fully carried out by Claude Opus~5.5. It reproduces the entropies of the main code to within $10^{-5}$ at lengths two and three. 

\emph{Effect of the relaxation and comparison with the DI twin.} Table~\ref{tab:steps} follows the thresholds of $\prot43$ through the successive choices of Appendix~\ref{app:implementation}. Refining the partition matters little; the longer words matter most, and at length three the LB thresholds coincide with the device-independent ones.

\begin{table}[h]
    \centering
    \begin{tabular}{lcc}
        \toprule
        Relaxation & $v_c$ & $\eta_c$ \\
        \midrule
        $W=\{\id,M,P,MP\}$ (tracial level $1+B\!\cdot\!E$), $t_k=k/10$ & $0.869$ & $0.802$ \\
        same words, $26$-node partition & $0.868$ & $0.801$ \\
        all words of length $\leq2$ & $0.854$ & $0.795$ \\
        all words of length $\leq3$ (Table~\ref{tab:thresholds}) & $0.852$ & $0.791$ \\
        \midrule
        DI twin, NPA level $2+A\!\cdot\!B\!\cdot\!E$ & $0.852$ & $0.791$ \\
        \bottomrule
    \end{tabular}
    \caption{Thresholds of $\prot43$ at the honest $G$ for successively stronger relaxations, and for the device-independent bound of the entanglement-based twin. All entries are bisected zero crossings.}
    \label{tab:steps}
\end{table}

\emph{Device-independent twins.} For the twin of a protocol whose preparations are the eigenstates of observables at Bloch angles $\varphi_\theta$ in the $XZ$ plane, Alice measures the same observables on her half of the maximally entangled state $\ket{\Phi^+}$, depolarised with visibility $v$, and Bob measures as in the P\&M protocol, with detection efficiency $\eta$. We impose the full statistics, $\langle A_\theta\rangle=0$, $\langle B_y\rangle$ and $\langle A_\theta B_y\rangle=v\cos(\varphi_\theta-\varphi_y)$ with the no-click outcome binned into $b=0$, on an NPA relaxation in a normalised state. Alice's, Bob's and Eve's projectors commute across the parties, and Eve has one pair of projectors per key basis. The words are all those of length at most two together with the products $A_\theta B_yP^\theta_{a,k}$. The entropy bound, its partition and the decoupling over nodes are those of Appendix~\ref{app:implementation}, and $H(A|B)$ is the same as in the P\&M protocol.

\emph{Other protocols.} Table~\ref{tab:others} lists the thresholds of all the protocols we studied at the honest $G$, together with those of the min-entropy bound $H(A|E\Theta)\geq-\log_2\Pg(A|E\Theta)$, where Eve's guessing probability of the key is maximised over the same relaxation.

\begin{table}[h]
    \centering
    \begin{tabular}{llcccc}
        \toprule
        & & \multicolumn{2}{c}{von Neumann} & \multicolumn{2}{c}{min-entropy} \\
        \cmidrule(lr){3-4}\cmidrule(lr){5-6}
        Protocol & Words $W$ & $v_c$ & $\eta_c$ & $v_c$ & $\eta_c$ \\
        \midrule
        $\prot43$ & length $\leq3$ & $0.852$ & $0.791$ & $0.890$ & $0.828$ \\
        $\prot44$ & length $\leq2$, $+MM'P$ & $0.838$ & $0.779$ & $0.883^\dagger$ & $0.820^\dagger$ \\
        $\prot63$ & length $\leq2^\ddagger$, $+MM'P$ & $0.870$ & $0.764$ & $0.889$ & $0.824^\dagger$ \\
        $\prot33$ & length $\leq3$ & $0.891$ & $0.821$ & $0.920$ & $0.848$ \\
        \midrule
        $\prot63$, bases of $Z$, $X$, $(Z+X)/\sqrt2$ & length $\leq2^\ddagger$, $+MM'P$ & $0.892$ & $0.795$ & $0.909$ & $0.849^\dagger$ \\
        BB84 states, measurements $Z$, $X$, $(Z+X)/\sqrt2$ & length $\leq2$, $+MM'P$ & $0.919$ & $0.867$ & $0.942$ & $0.889$ \\
        $\ket0,\ket1,\ket+$, measurements $Z$, $(Z+X)/\sqrt2$ & length $\leq3$ & $0.939$ & $0.906$ & $0.959$ & $0.919$ \\
        $\ket0,\ket1,\ket+$, measurements $Z$, $X$, $(Z+X)/\sqrt2$ & length $\leq3$ & $0.938$ & $0.906$ & $0.958$ & $0.919$ \\
        $\ket0,\ket1,\ket+$, measurements $Z$, $X$~\cite{Tamaki2014} & --- & \multicolumn{4}{c}{no key (Sec.~\ref{sec:nokey})} \\
        \bottomrule
    \end{tabular}
    \caption{Thresholds at the honest $G$ for all the protocols we studied. The last five rows are variants discussed in Sec.~\ref{sec:results}. In the second one, key is generated in both BB84 bases. $^\dagger$Computed with the words of length at most two, without the products $MM'P$. $^\ddagger$Without the products of Eve's projectors belonging to different bases. For the two three-state variants with a single tilted measurement, $\eta_c=0.906$ is the efficiency at which the binned behaviour becomes classical (Sec.~\ref{sec:noise}). For the last protocol the relaxation returns $H(A|E\Theta)\approx10^{-6}$ at the ideal point, zero within numerical accuracy, in agreement with the exact classical attack of Sec.~\ref{sec:nokey}.}
    \label{tab:others}
\end{table}

\emph{Tilted measurements.} For $\prot43$ with $y=2,3\to\cos\gamma\,Z\pm\sin\gamma\,X$, the relaxation with all words of length at most three gives $v_c=0.874$, $0.859$, $0.852$, $0.854$ and $0.865$ for $\gamma=30^\circ$, $37.5^\circ$, $45^\circ$, $52.5^\circ$ and $60^\circ$, so the bisectors are optimal among the angles we tested. More generally, optimising the directions of the test measurements for the min-entropy rate near the thresholds changes the thresholds of $\prot43$, $\prot44$ and $\prot33$ by at most $2\cdot10^{-3}$, adding one or two test measurements to $\prot63$ lowers its thresholds by about $0.01$, and replacing the preparations by the six eigenstates of the Pauli operators brings no improvement.

\emph{Noise in the source.} With the bound set to the honest value for the depolarised source, $G=1+v$, the visibility thresholds are $0.727$ for $\prot43$, $0.715$ for $\prot44$ (words of length at most two) and $0.766$ for the asymmetric $\prot63$ (without cross-basis products).

\emph{Relaxed bound.} The thresholds of Fig.~\ref{fig:dG} use the relaxations of Table~\ref{tab:thresholds}. The largest $\delta G$ at which the ideal behaviour certifies key was located by raising $\delta G$ in steps of $0.01\,n$ until the certified rate fell below $10^{-4}$ bits, followed by bisection to $5\cdot10^{-4}\,n$. The tolerance is needed because, where the exact value of the relaxation is zero, the solver returns about $10^{-7}$ bits. The von Neumann and min-entropy bounds give the same values to this precision.

\emph{Entropy versus the CHSH value.} The bounds on $F(S,G)$ of Sec.~\ref{sec:chsh} use the relaxation~\eqref{eq:swnpa} for four preparations and Bob's measurements $y=2,3$ of $\prot43$. In place of the behaviour, the only linear constraint is $S_{\mathrm{CHSH}}\geq S$, and the key is taken from the $Z$ basis, $\theta=1$. $W$ consists of all words of length at most four in Bob's and Eve's projectors, which gives $41$ words and $505$ moments. We computed $F$ for $G=1.2$, $1.4$, $1.6$, $1.8$, $1.92$, $2$, $2.08$, $2.2$, $2.4$, $2.6$, $2.8$ and $3$, in steps of $0.02$ in $S$ up to the largest feasible value, and along the depolarised source, $S=2\sqrt2\,v$ with $G=1+v$.
\begin{itemize}
    \item Words of length five raise $F$ by at most $1.4\cdot10^{-4}$ bits at $G=2$. At $G=1.6$, $1.8$, $2.08$, $2.2$ and $2.4$ they raise it by at most $1.4\cdot10^{-3}$ for $S\leq2\sqrt2$, and they move the thresholds of Sec.~\ref{sec:chsh} by at most $2.5\cdot10^{-4}$.
    \item The $120$-node partition raises $F$ by at most $1.2\cdot10^{-3}$ bits. At $G=2$ it brings the curve to within $5\cdot10^{-5}$ bits of Eq.~\eqref{eq:di_chsh} for $S\geq2.1$.
\end{itemize}
For $G<2$, each curve has two kinks. The first is where the plateau at $2-G$ ends. The second is where Eve's guessing probability of the key bit falls below its largest possible value $G/2$. That bound holds because a guess of the key bit, together with the guess that the basis is $Z$, is a guess of $x$. The bound $F$ is non-decreasing in $S$ and non-increasing in $G$, so the tabulated values bound it rigorously between grid points.

\end{document}